\def\dOi{13(4:27)2017}
\newcommand{\es}{\ensuremath{\emptyset}}
\newcommand{\set}[1]{\ensuremath{\{#1\}}}
\newcommand{\setbar}{\ \mid\ }
\newcommand{\bnfis}{\ensuremath{\ \ ::=\ \ }}
\newcommand{\bnfbar}{\ensuremath{\ \ |\ \ }}
\newcommand{\pol}[1]{\ensuremath{\tilde{#1}}\xspace}
\newcommand{\domain}[1]{\ensuremath{\mathsf{domain}(#1)}}
\newcommand{\eqdef}{\ensuremath{\ \ \stackrel{\mathtt{def}}{=}\ \ }}
\newcommand{\tree}[3][]{
		\ifthenelse{\equal{#1}{}}{
		}{
			#1\
		}
		\displaystyle\frac{
			\begin{array}{c}
				#2
			\end{array}
		}{
			#3
		}
}
\newcommand{\suchthat}{\ensuremath{:}}
\newcommand{\figref}[1]{Figure~\ref{fig:#1}}
\newcommand{\secref}[1]{Section~\ref{sec:#1}}
\newcommand{\exref}[1]{Example~\ref{ex:#1}}
\newcommand{\defref}[1]{Definition~\ref{def:#1}}
\newcommand{\thmref}[1]{Theorem~\ref{thm:#1}}
\newcommand{\lemref}[1]{Lemma~\ref{lem:#1}}
\newcommand{\corref}[1]{Corollary~\ref{cor:#1}}
\newcommand{\propref}[1]{Proposition~\ref{cor:#1}}
\newcommand{\appref}[1]{Appendix~\ref{app:#1}}
\newcommand{\Pcalc}{Privacy calculus\xspace}
\newcommand{\dual}[1]{\ensuremath{\overline{#1}}}
\newcommand{\id}{\ensuremath{\mathsf{id}}\xspace}
\newcommand{\hid}{\ensuremath{\_}\xspace}
\newcommand{\ii}{\ensuremath{\iota}\xspace}
\newcommand{\con}{\ensuremath{\delta}\xspace}
\newcommand{\pdata}[2]{\ensuremath{#1\, {\text{\scriptsize $\otimes$}}\, #2}}
\newcommand{\G}{\ensuremath{\mathsf{G}}\xspace}
\newcommand{\fg}[1]{\mathtt{fg}(#1)}
\newcommand{\subst}[2]{\ensuremath{\{^{#1}/_{#2}\}}\xspace}
\newcommand{\fn}[1]{\ensuremath{\mathsf{fn}(#1)}\xspace}
\newcommand{\bn}[1]{\ensuremath{\mathsf{bn}(#1)}\xspace}
\newcommand{\fv}[1]{\ensuremath{\mathsf{fv}(#1)}\xspace}
\newcommand{\bv}[1]{\ensuremath{\mathsf{bv}(#1)}\xspace}
\newcommand{\psep}{\ensuremath{.\,}}
\newcommand{\inputvar}[1]{\ensuremath{(#1)}}
\newcommand{\outputvar}[1]{\ensuremath{\langle#1\rangle}}
\newcommand{\inpprefix}[2]{\ensuremath{#1?\inputvar{#2}}}
\newcommand{\outprefix}[2]{\ensuremath{#1!\outputvar{#2}}}
\newcommand{\inact}{\ensuremath{\mathbf{0}}\xspace}
\newcommand{\out}[2]{\outprefix{#1}{#2} \psep}
\newcommand{\inp}[2]{\inpprefix{#1}{#2} \psep}
\newcommand{\eout}[1]{\dual{#1} \psep}
\newcommand{\einp}[1]{#1 \psep}
\newcommand{\sel}[2]{\ensuremath{#1 \triangleleft #2 \psep}}
\newcommand{\bra}[2]{\ensuremath{#1 \triangleright \set{#2}}}
\newcommand{\Par}{\ensuremath{\,|\,}}
\newcommand{\repl}{\ensuremath{*}}
\newcommand{\newn}[1]{\ensuremath{(\nu\ #1)}}
\newcommand{\newnp}[2]{\newn{#1}\ensuremath{(#2)}}
\newcommand{\If}{\ensuremath{\mathtt{if}}\xspace}
\newcommand{\Then}{\ensuremath{\mathtt{then}}\xspace}
\newcommand{\Else}{\ensuremath{\mathtt{else}}\xspace}
\newcommand{\match}[2]{\ensuremath{#1 = #2}}
\newcommand{\ifelse}[4]{\ensuremath{\If\ \match{#1}{#2}\ \Then\ #3\ \Else\ #4}}
\newcommand{\IfElse}[3]{\ensuremath{\If\ #1\ \Then\ #2\ \Else\ #3}}
\newcommand{\store}[3]{\ensuremath{\dual{#1} \triangleright[\pdata{#2}{#3}]}}
\newcommand{\group}[2]{\ensuremath{#1[#2]}\xspace}
\newcommand{\Spar}{\ensuremath{|\!|}}
\newcommand{\scong}{\ensuremath{\equiv}\xspace}
\newcommand{\map}[1]{\ensuremath{(\!|#1|\!)}}
\newcommand{\rdl}{\ensuremath{\mathsf{rd}}\xspace}
\newcommand{\wrl}{\ensuremath{\mathsf{wr}}\xspace}
\newcommand{\ok}{\ensuremath{\mathsf{ok}}\xspace}
\newcommand{\fail}{\ensuremath{\mathsf{fail}}\xspace}
\newcommand{\actout}[2]{\ensuremath{\outprefix{#1}{#2}}\xspace}
\newcommand{\actinp}[2]{\ensuremath{\inpprefix{#1}{#2}}\xspace}
\newcommand{\trans}[1]{\ensuremath{\stackrel{#1}{\longrightarrow}}}
\newcommand{\Trans}[1]{\ensuremath{\stackrel{#1}{\Longrightarrow}}}
\newcommand{\actdual}{\ensuremath{\ \mathsf{dual}\ }}
\newcommand{\lts}[1]{\ensuremath{[\text{\footnotesize$\mathsf{#1}$}]}\xspace}
\newcommand{\LOut}{\lts{Out}}
\newcommand{\LInp}{\lts{Inp}}
\newcommand{\LSOut}{\lts{SOut}}
\newcommand{\LSInp}{\lts{SInp}}
\newcommand{\LTr}{\lts{True}}
\newcommand{\LFl}{\lts{Fls}}
\newcommand{\LTau}{\lts{Tau}}
\newcommand{\LLPar}{\lts{LPar}}
\newcommand{\LRPar}{\lts{RPar}}
\newcommand{\LScp}{\lts{Scp}}
\newcommand{\LRes}{\lts{Res}}
\newcommand{\LRepl}{\lts{Repl}}
\newcommand{\Alpha}{\lts{Alpha}}
\newcommand{\srule}[1]{\ensuremath{[\text{\footnotesize$\mathsf{S#1}$}]}\xspace}
\newcommand{\STau}{\srule{Tau}}
\newcommand{\SLPar}{\srule{LPar}}
\newcommand{\SRPar}{\srule{RPar}}
\newcommand{\SScp}{\srule{Scp}}
\newcommand{\SRes}{\srule{Res}}
\newcommand{\SPGr}{\srule{PGr}}
\newcommand{\SGr}{\srule{Gr}}
\newcommand{\gtype}[2]{\ensuremath{#1[#2]}\xspace}
\newcommand{\privatet}[2]{\ensuremath{\mathsf{#1}[#2]}\xspace}
\newcommand{\purposet}[2]{\ensuremath{\mathsf{#1}[#2]}\xspace}
\newcommand{\ptuple}[2]{\ensuremath{\langle#1, #2\rangle}\xspace}
\newcommand{\prv}{\ensuremath{\mathsf{t}}\xspace}
\newcommand{\prp}{\ensuremath{\mathsf{p}}\xspace}
\newcommand{\g}{\ensuremath{\mathsf{g}}\xspace}
\newcommand{\cat}{,}
\newcommand{\proves}{\vdash}
\newcommand{\hastype}{\triangleright}
\newcommand{\Ga}{\ensuremath{\Gamma}\xspace}
\newcommand{\De}{\ensuremath{\Delta}\xspace}
\newcommand{\La}{\ensuremath{\Lambda}\xspace}
\newcommand{\Int}{\ensuremath{\Theta}\xspace}
\newcommand{\Inthier}{\ensuremath{\theta}\xspace}
\newcommand{\hiert}[2]{\ensuremath{#1[#2]}\xspace}
\newcommand{\UDe}[1]{\ensuremath{\De^{U}_{#1}}\xspace}
\newcommand{\RDe}[1]{\ensuremath{\De^{R}_{#1}}\xspace}
\newcommand{\ReplDe}{\ensuremath{\De^{\repl}}\xspace}
\newcommand{\trulefont}[1]{\ensuremath{[\text{\footnotesize$\mathsf{T#1}$}]}\xspace}
\newcommand{\TName}{\trulefont{Name}}
\newcommand{\TCons}{\trulefont{Cons}}
\newcommand{\TPd}{\trulefont{Pdata}}
\newcommand{\TId}{\trulefont{Id}}
\newcommand{\TUse}{\trulefont{Use}}
\newcommand{\TEqP}{\trulefont{EqP}}
\newcommand{\TEqA}{\trulefont{EqA}}
\newcommand{\TInp}{\trulefont{Inp}}
\newcommand{\TOut}{\trulefont{Out}}
\newcommand{\TSt}{\trulefont{St}}
\newcommand{\TInact}{\trulefont{Inact}}
\newcommand{\TPar}{\trulefont{Par}}
\newcommand{\TRes}{\trulefont{Res}}
\newcommand{\TRepl}{\trulefont{Repl}}
\newcommand{\TIf}{\trulefont{If}}
\newcommand{\TSPar}{\trulefont{SPar}}
\newcommand{\TSRes}{\trulefont{SRes}}
\newcommand{\TGr}{\trulefont{Gr}}
\newcommand{\TSGr}{\trulefont{SGr}}
\newcommand{\permfont}[1]{\ensuremath{\mathsf{#1}}\xspace}
\newcommand{\perm}{\permfont{prm}}
\newcommand{\rd}{\permfont{read}}
\newcommand{\up}{\permfont{update}}
\newcommand{\rf}{\permfont{reference}}
\newcommand{\diss}[2]{\ensuremath{\permfont{disseminate}\ #1\ #2}\xspace}
\newcommand{\ndiss}[1]{\ensuremath{\permfont{no\ dissemination}\langle#1\rangle}\xspace}
\newcommand{\str}{\permfont{store}}
\newcommand{\idp}{\permfont{readId}}
\newcommand{\idf}[1]{\ensuremath{\permfont{identify}\set{#1}}\xspace}
\newcommand{\use}[1]{\ensuremath{\permfont{usage}\set{#1}}\xspace}
\newcommand{\aggr}{\permfont{aggregate}}
\newcommand{\kind}{\permfont{kind}}
\newcommand{\discl}{\permfont{disclosure}}
\newcommand{\conf}{\permfont{confidential}}
\newcommand{\sens}{\permfont{sensitive}}
\newcommand{\infinite}{\ensuremath{\omega}\xspace}
\newcommand{\identify}[1]{\ensuremath{\mathsf{idperm}(#1)}\xspace}
\newcommand{\haspolicy}{\ensuremath{\gg}}
\newcommand{\role}[3]{#1\set{#2}[#3]}
\newcommand{\lrole}[2]{#1\set{#2}}
\newcommand{\Policy}{\ensuremath{\mathcal{P}}\xspace}
\newcommand{\Hpol}{\ensuremath{\mathsf{H}}\xspace}
\newcommand{\groups}[1]{\ensuremath{\mathsf{groups}(#1)}}
\newcommand{\permis}[1]{\ensuremath{\mathsf{perms}(#1)}}
\newcommand{\Pol}{\Policy}
\newcommand{\lenv}{\ensuremath{\preceq}\xspace}
\newcommand{\sat}{\ensuremath{\Vdash}\xspace}
\newcommand{\countLink}[1]{\ensuremath{\mathsf{countRef}(#1)}}
\newcommand{\groupfont}[1]{\ensuremath{\mathsf{#1}}\xspace}
\newcommand{\Hospital}{\groupfont{Hospital}}
\newcommand{\Nurse}{\groupfont{Nurse}}
\newcommand{\Doctor}{\groupfont{Doctor}}
\newcommand{\DBase}{\groupfont{DBase}}
\newcommand{\treatment}{\groupfont{treatment}}
\newcommand{\medication}{\pdatafont{medication}}
\newcommand{\dna}{\pdatafont{dna}}
\newcommand{\disease}{\pdatafont{disease}}
\newcommand{\forensics}{\groupfont{forensics\ lab}}
\newcommand{\john}{\groupfont{john}}
\newcommand{\pdatafont}[1]{\ensuremath{\mathsf{#1}}\xspace}
\newcommand{\patientd}{\pdatafont{patient\_data}}
\newcommand{\vehicled}{\pdatafont{vehicle\_data}}
\newcommand{\crime}{\pdatafont{crime}}
\newcommand{\diagnosis}{\pdatafont{diagnosis}}
\newcommand{\SpeedCheck}{\groupfont{SpeedCheck}}
\newcommand{\Speedometer}{\groupfont{Speedometer}}
\newcommand{\SpeedAuthority}{\groupfont{SpeedAuthority}}
\newcommand{\Network}{\groupfont{Network}}
\newcommand{\trcamera}{\pdatafont{trafficCam}}
\newcommand{\slimit}{\mathit{overLim}}
\newcommand{\speedlimit}{\pdatafont{speedLim}}
\newcommand{\limit}{\pdatafont{Limit}}
\newcommand{\speed}{\pdatafont{speed}}
\newcommand{\Car}{\groupfont{Car}}
\newcommand{\reg}{\pdatafont{reg}}
\newcommand{\SpeedControl}{\groupfont{SpeedControl}}
\newcommand{\SCSystem}{\groupfont{SCSystem}}
\newcommand{\System}{\groupfont{System}}
\newcommand{\Auth}{\groupfont{Auth}}
\newcommand{\CarReg}{\pdatafont{CarReg}}
\newcommand{\CarSpeed}{\pdatafont{CarSpeed}}
\newcommand{\DriverReg}{\pdatafont{DriverReg}}
\newcommand{\RegNum}{\pdatafont{RegNum}}
\newcommand{\Speed}{\pdatafont{Speed}}
\newcommand{\OBE}{\groupfont{OBE}}
\newcommand{\ETP}{\groupfont{ETP}}
\newcommand{\SC}{\groupfont{SC}}
\newcommand{\GPS}{\groupfont{GPS}}
\newcommand{\PA}{\groupfont{PA}}
\newcommand{\Loc}{\pdatafont{loc}}
\newcommand{\Fee}{\pdatafont{fee}}
\newcommand{\spotcheck}{\pdatafont{spotCheck}}
\newcommand{\computefee}{\pdatafont{computeFee}}
\newcommand{\dk}[1]{{\color{blue} #1}}
\title{Privacy by typing in the $\pi$-calculus}
\author[D.~Kouzapas]{Dimitrios Kouzapas\rsuper a}
\address{{\lsuper a}Department of Computing Science, University of Glasgow}
\email{Dimitrios.Kouzapas@gla.ac.uk}
\author[A.~Philippou]{Anna Philippou\rsuper b}
\address{{\lsuper b}Department of Computer Science, University of Cyprus}
\email{annap@cs.ucy.ac.cy}
\begin{document}

	\maketitle

\begin{abstract}
In this paper we propose a formal framework for studying privacy
in information systems.
The proposal follows a two-axes schema where the first
axis considers privacy as a taxonomy of rights
and the second axis involves the ways an information system stores
and manipulates information.
We develop a correspondence between the schema above and
an associated model of computation.
In particular, we propose the \Pcalc, a calculus
based on the $\pi$-calculus with groups
extended with constructs for reasoning
about private data.
The privacy requirements of an information system
are captured via a privacy policy language.
The correspondence between the privacy model and the \Pcalc semantics
is established using a type system for the calculus and a satisfiability
definition between types and privacy policies.
We deploy a type preservation theorem to show that a system respects a policy and it is safe
if the typing of the system satisfies the policy.
We illustrate our methodology
via analysis of two use cases: a privacy-aware scheme for electronic traffic pricing
and a privacy-preserving technique for speed-limit enforcement.
\end{abstract}

\section{Introduction}

The notion of privacy is a fundamental notion
for society and, as such, it has been an object of
study within various 
disciplines such as philosophy, politics, law, and culture.
In general terms, an analysis of privacy reveals a dynamic concept
strongly dependent on cultural norms and evolution.
%
%
Society today is evolving steadily into an information era
where computer systems are required to aggregate
and handle huge volumes of private data.
Interaction of individuals with such systems
is expected to reveal new limits
of tolerance towards what is considered private which
in turn will reveal new threats to individual privacy.
But, fortunately, along with the introduction of new
challenges for privacy, technology can also offer
solutions to these new challenges.

\subsection{A Formal methods approach to privacy}
While the technological advances and the associated widespread
aggregation of private data are rendering the need for developing
systems that preserve individual's privacy 
increasingly important,
%
the established techniques for
providing guarantees that a system handles information
in a privacy-respecting manner are reported as partial and unsatisfactory.

To this effect, the motivation of this paper is to address this challenge
by developing
formal frameworks for reasoning about privacy-related concepts.
Such frameworks may provide solid foundations
for understanding the notion of privacy and
allow to formally model and study privacy-related
situations. Rigorous analysis of privacy is expected to give rise to numerous practical
frameworks and techniques for developing sound systems
with respect to privacy properties.
More specifically, a formal approach to privacy may lead
to the development of programming semantics and analysis
tools for developing privacy-respecting code. Tools may
include programming languages, compilers and interpreters,
monitors, and model checkers. Simultaneously,  we envision that existing techniques,
like privacy by design~\cite{PrivbyDes}, and proof carrying code~\cite{Necula97},
may benefit from a formal description of privacy for information systems
and offer new and powerful  results.

The main objective of this paper is to develop a type-system
method for ensuring that a privacy specification, described
as a privacy policy, is satisfied by a computational process.
At the same time, the paper aims to set the foundations of a
methodology that will lead to further research on privacy
in information systems. For this reason we define the notion
of a privacy model as an abstract
model that describes the requirements of privacy in different scenarios
and we establish a correspondence between the
privacy model and the different abstraction levels inside
a computational framework: we show how this privacy model
is expressed in terms of programming semantics and in terms of
specifying and checking a privacy specification against a program.

\subsection{Contribution}
More concretely the contributions of this paper are:

\begin{description}
	\item[Privacy Model]
			In \secref{methodology} we identify a privacy model based
			on literature by legal scholars~\cite{solove06:privacy,Tschantz:privacy}.
			The model is based on a taxonomy of privacy violations that
			occur when handling private information.

	\item[Syntax and Semantics]
			In \secref{calculus} we propose a variant of the $\pi$-calculus with groups~\cite{Cardelli:secrecy}
			to develop a set of semantics that can capture the basic notions of the model we set.

	\item[Privacy Policy]
			We formally describe a privacy requirement as an object of a privacy policy language.
			The language is expressed in terms of a formal syntax defined in \secref{policy}.

	\item[Policy Satisfiability]
			The main technical contribution of the
			paper is a correspondence between the privacy policy and
			$\pi$-calculus systems using techniques from the type-system literature
			for the $\pi$-calculus.
			The correspondence is expressed via a satisfiability definition (\defref{policy_sat}),
			where a well-typed system satisfies a privacy policy if the type of the
			system satisfies the policy.

	\item[Soundness and Safety]
			The main results of our approach declare that:
			\begin{itemize}
				\item	Satisfiability is sound; it is preserved by the semantics of our underlying computation model
						(\thmref{sr}).

				\item	A system that satisfies a policy is safe: it will never violate the policy (\thmref{safety}),
						where violations are expressed as a class of error systems defined with respect to privacy policies
						(\defref{error}).
			\end{itemize}

	\item[Use cases]
			We use the results above to develop
			two real use cases of systems that
			handle private information and at the same time
			protect the privacy of the information against external adversaries.
			\secref{etp} describes the case where an
			electronic system computes the toll fees of a car based on the car's
			GPS locations.
			\secref{speed_control} describes the case where
			an authority identifies a speeding car
			based on the registration number of the car.

\end{description}

\section{Overview of the Approach - Methodology}

\label{sec:methodology}

In this section we give an overview of our approach.
We begin
by discussing a model for privacy which is based on a taxonomy
of privacy violations proposed in~\cite{solove06:privacy}.
Based on this model we then propose a policy language and a model
of computation based on the $\pi$-calculus with groups so
that privacy specifications can be described as a privacy
policy in our policy language and specified as terms in
our calculus.

\subsection{A Model for Privacy}
As already discussed, there is no absolute definition of the notion of
privacy. Nevertheless, and in order to proceed with a formal approach
to privacy, we need to identify an appropriate
model that can describe the basics of privacy in the context
of information systems.
In general terms, privacy can be considered as a set of
dynamic relations between individuals that involve privacy
rights and permissions, and privacy violations.

\subsubsection{Privacy as a Taxonomy of Rights.}
A study of the diverse types of privacy, their interplay with
technology, and the need for formal methodologies for understanding
and protecting privacy, is discussed in \cite{Tschantz:privacy}
where the authors follow in their arguments the
analysis of David Solove, a legal scholar who has provided a
discussion of privacy as a taxonomy of possible privacy
violations~\cite{solove06:privacy}.
The privacy model proposed by Solove requires that a {\em data holder}
is responsible for the privacy of a {\em data subject}
against violations from external {\em adversaries}.

\begin{figure}[h]
\begin{center}
	\begin{tabular}{|c||c|c|c|}
		\hline
		\multicolumn{1}{|c||}{\multirow{2}{*}{Invasion}}
		& \multicolumn{3}{c|}{Information}
		\\
		\cline{2-4}
		& Collection & Processing & Dissemination
		\\

		\hline
		\hline
		\multirow{2}{*}{Intrusion} &
		\multirow{2}{*}{Surveillance} &
		\multirow{2}{*}{Aggregation} &
		Breach of
		\\
		&&& Confidentiality
		\\

		\hline
		Decisional &
		\multirow{2}{*}{Interrogation} &
		\multirow{2}{*}{Identification} &
		\multirow{2}{*}{Disclosure}
		\\
		Interference &&&
		\\

		\hline
		&
		&
		Insecurity &
		Exposure
		\\

		\cline{3-4}
		&
		&
		\multirow{2}{*}{Secondary Use} &
		Increased
		\\
		&
		&
		&
		Accessibility
		\\

		\cline{3-4}
		&
		&
		Exclusion &
		Blackmail
		\\

		\cline{3-4}
		&
		\multicolumn{2}{|c|}{}
		&
		Appropriation
		\\

		\cline{4-4}
		&
		\multicolumn{2}{|c|}{}
		&
		Distortion
		\\
		\hline
	\end{tabular}
\end{center}
	\caption{A taxonomy on privacy violations}
	\label{fig:taxonomy}
\end{figure}

According to Solove, and based on an in-depth study
of privacy within the legal field,
privacy violations can be distinguished in
four categories as seen in \figref{taxonomy}:
%
			i) {\em invasions};
			ii) {\em information collection};
			iii) {\em information processing}; and
			iv) {\em information dissemination}.
%
Invasion-related privacy violations
are violations that occur on the physical sphere of an individual and
are beyond the context of computer systems.
Information-related
privacy violations, on the other hand,
are concerned with the manipulation of an individual's personal
information in ways that may cause the individual to be exposed, threatened,
or feel uncomfortable as to how this personal information is being used.

In \figref{taxonomy} we can see the taxonomy developed by Solove.
We concentrate the discussion on the latter three information-related
categories. In
the category of \emph{information collection} we have the privacy violations
of \emph{surveillance} and \emph{interrogation}. Surveillance has to do with the
collection of information about individuals without their consent,
e.g., by~eavesdropping
on communication channels in systems.
Interrogation
puts the data subject in the awkward position of
denying to answer a question.

The second 
category has to do with \emph{information-processing} violations.
The first violation is the one of \emph{aggregation} which describes
the situation where a data holder aggregates information about a data subject:
while a single piece of information about an individual may not pose
a threat to the individual's privacy, a collection of many pieces of information
may reveal a deeper understanding about the individual's character
and habits.
The violation of \emph{identification} occurs
when pieces of  anonymous information are matched
against information coming from a known data subject.
This may take place for example by matching the information of
data columns between different tables inside a database and may
lead to giving access to private information about an individual
to unauthorized entities.
\emph{Insecurity} has to do with the responsibility of the data holder
against any sensitive data of  an individual.
Insecurity may lead to identity theft, identification or, more generally, bring
individuals to harm due to their data being not sufficiently secure
against outside adversaries.
For example, in the context of information systems passwords should be kept secret.
\emph{Secondary usage} arises when a data holder uses the data of an individual
for purposes other than the ones the individual has given their consent to.
For example, a system that records economic activity for logistic reasons
uses the stored data for marketing.
The next violation relates to the right of data subjects to access their private data
and be informed regarding the reasons and purposes the data are being held. In
the opposite case we identify the violation
of \emph{exclusion}.

The last 
category is 
\emph{information dissemination}.
Private information should be disseminated under conditions. If not,
we might have the violations of \emph{breach of confidentiality}, \emph{disclosure} and
\emph{exposure}. In the first case we are concerned with
confidential information. In the second case we assume a non-disclosure
agreement between the data holder and the data
subject, whereas the third case is concerned with the
exposure of embarrassing information about an individual.
Following to the next violation, \emph{increased accessibility} may occur when
an adversary has access to a collection of non-private pieces of
information but chooses to make the collection of these data more widely available.
For example, while an email of an employee in a business department may
be a piece of public information, publishing a list of such emails
constitutes an increased-accessibility violation
as it  may be used for other reasons such as advertise spam.
The violation of \emph{blackmail} occurs when an adversary threatens to
reveal private information.
\emph{Appropriation} occurs when an adversary associates private information
with a product and without the consent of the data subject, and, 
finally, \emph{distortion} involves the dissemination
of false information about an individual that may harm 
the way the individual is being judged by others.

\subsubsection{A Privacy Model for Information Systems}
%

Based on the discussion above we propose the following model for privacy
for information systems:
An information system, the {\em data holder}, is responsible for the privacy of data of
a {\em data subject} against violations from various {\em adversaries} which
can be users of the modules of the system or external entities. These adversaries may perform
operations on private data, e.g.,~store, send, receive, or process the data and,
depending on these operations, privacy violations may occur. For example, sending
private data from a data holder module to an unauthorised module may result
in the violation of disclosure.

At the centre of this schema we have the notion of {\em private data}.
In our model, we take the view that private data are data structures
representing pieces of information related to individuals along with
the identities of the associated individuals. For example, an
on-line company may store in its databases the address, wish list,
and purchase history for each of its customers.

Viewing in our model private data as associations between individuals and pieces of
information is useful for a number of reasons. To begin
with this approach allows us to distinguish between private data and other
pieces of data. For instance, a certain address on a map has no private
dimension until it is defined to be the address of a certain individual.
Furthermore, considering private data to be associations between information and individuals
enables us to reason about a variety of privacy features. One such feature is
anonymisation of private data which occurs when the identity of the individual associated
with the data is stripped away from the data. Similarly, identification occurs
when one succeeds in filling in the identity associated with a piece of
anonymised data. Moreover, aggregation of data is considered to take place
when a system collects many pieces of information relating to the same
individual. Given the above, in our model we consider private
data as a first-class entity, and we make a distinction between
private data and other pieces of data.

Taking a step further, in our model we distinguish between different types of private data.
This is the case in practice since, by nature, some data are more sensitive than others
and compromising it may lead to different types of violations. For example, mishandling
the number of a credit card may lead to an identity-theft violation while
compromising a social security number may lead to an identification violation.
Similarly, 
we make a distinction between the different
entities of an information system, based on the observation that different entities
may be associated with different permissions with respect to
different types of private data.

Finally, a model of privacy should include the notion of a \emph{purpose}.
As indicated in legal studies, it is often the case that private data may be used
by a data holder for certain purposes but not for others. As such, the notion
of privacy purpose has been studied in the literature, on one hand, in terms of its
semantics~\cite{Tsch12}, and, on the other hand, in terms of policy languages and
their enforcement~\cite{ByunBL05,ColomboF14,KokkinoftaP15}. In our model, we encompass the notion of a purpose
in terms of associating data manipulation with purposes.

\subsection{Privacy Policies}

After identifying a model for privacy, the next step is to use a proper
{\em description language} able to describe the privacy model and
serve as a bridge between the privacy terminology and a formal framework.

To achieve this objective, we
%
create a policy language that enables us to describe
privacy requirements for private data over data entities.
For each type of private data we expect
entities to follow different policy requirements.
Thus, we define policies as objects that describe
a hierarchical nesting of entities
where each node/entity of the hierarchy is associated with a set of privacy
permissions. 
The choice of permissions encapsulated within a policy language
is an important issue
because identification of these permissions constitutes, in a sense, a
characterization of the notion of privacy. In this work, we make an attempt of
identifying some such permissions, our choice emanating from a class of
privacy violations of Solove's taxonomy which we refine by considering some common
applications where privacy plays a central role.

\newcommand{\Research}{\groupfont{Research}}
\newcommand{\Police}{\groupfont{Police}}
\newcommand{\Lab}{\groupfont{Lab}}
\newcommand{\research}{\pdatafont{research}}

\begin{exa}
	\label{ex:hospital_policy}
	As an example consider the medical system of a hospital (\Hospital)
	obligated to protect patients' data (\patientd).
	Inside the hospital there are five types of adversaries - the database
    administrator, nurses,
    doctors, a research department and a laboratory -  each implementing
	a different behaviour with respect to the privacy policy in place.
	Without any formal introduction,
	we give the privacy policy for the patient private data as an entity $\patientd \gg H$ where
    $H$ is a nested structure that assigns
    different permissions to each level of the hierarchy, as shown below:
	\[
		\begin{array}{l}
			\patientd \gg \Hospital\set{}[\\
			\quad \quad \lrole{\DBase} {\str, \aggr}{},\\
			\quad \quad \lrole{\Nurse} {\rf, \diss{\Hospital}{\infty}}{},\\
			\quad \quad \lrole{\Doctor}{\rf, \rd, \up, \idp, \use{\diagnosis}, \ndiss{\conf}}{}\\
			\quad \quad \lrole{\Research}{\rf, \rd, \use{\research}, \ndiss{\discl}}{}\\
			\quad \quad \lrole{\Lab}{\rf, \rd, \idp, \idf{\crime}, \diss{\Police}{1}}{}\\
		]
		\end{array}
	\]
	According to the policy the various adversaries are assigned permissions as follows:
    A database administrator (\DBase) has the right
	to store (\str) and aggregate (\aggr) patients' data in order to
	compile patients' files.
	A nurse (\Nurse) in the hospital is able to access (\rf) a patient's file
	but not read or write data on the file. A nurse may also disseminate
	copies of the file inside the hospital (\diss{\Hospital}{\infty}), e.g.,~to a Doctor.
	A doctor (\Doctor) in the \Hospital may gain access (\rf) to a patient's file, read it (\rd)
    with access to the identity of the patient (\idp),
	and update (\up) the patient's information. A doctor may also use the patient's data to
	perform a diagnosis (\use{\diagnosis}) but cannot disseminate it
	since this would constitute a breach of confidentiality (\ndiss{\conf}).
	
    In turn, a research department (\Research) can access a patient's file (\rf),
    read  the information (\rd), and use it for performing research (\use{\research}).
	However, it is not entitled to access the patient's identity (lack of \idp permission)
    which implies that all information available to it should be anonymised.
    The research department has no right of disclosing
	information (\ndiss{\discl}).
	Finally, a laboratory within the hospital system (\Lab) is allowed to gain access and
	read private data, including the associated identities (\rf, \rd, \idp) and it
    may perform identification using patient data against
    evidence collected on a crime scene
	(permission \idf{\crime}). If an identification succeeds then the \Lab may disseminate
	the patient's identity to the police (permissions \diss{\Police}{1}).
\end{exa}

\subsection{The $\pi$-calculus with groups}
\label{sec:intro_groups}

Moving on to the framework underlying our study, we employ a variant of the
$\pi$-calculus with groups~\cite{Cardelli:secrecy} which we refer to as the \Pcalc.
The $\pi$-calculus with groups extends the $\pi$-calculus with the notion of
\emph{groups} and an associated type system in a way that controls
how data are being processed and disseminated inside a system.
It turns out that groups give a natural abstraction for the representation
of entities in a system. Thus, we build on the notion of a group of the
calculus of~\cite{Cardelli:secrecy}, and we use the group memberships of
processes to distinguish their roles within systems.

To better capture our distilled privacy model, we extend the $\pi$-calculus
with groups as follows. To begin with, we extend the calculus with the notion
of private data: as already discussed, we take the view that private data
are structures of the form $\pdata{\id}{a}$ where $\id$ is the identity of
an individual and $a$ is an information associated with the individual.
To indicate private data whose identify is unknown we write $\pdata{\hid}{a}$.
Furthermore,
we define the notion of a \emph{store} which is a process that stores
and provides access to private data. Specifically, we write $\store{r}{\id}{a}$
for a store containing the private data $\pdata{\id}{a}$ with $r$ being
a link/reference via which this information may be read or updated.
As we show, these constructs are in fact high-level constructs that
can be encoded in the core calculus, but are useful for our subsequent
study of capturing and analyzing privacy requirements.

Given this framework, information collection, processing and dissemination
issues can be analysed through the use of
names/references of the calculus in input, output and object position to
identify when a channel is reading, writing or otherwise manipulating private data, or
when links to private data are being communicated between groups.

\begin{exa}
An implementation of the hospital scenario in \exref{hospital_policy}
in the $\pi$-calculus with groups could be as follows:
\[
	\begin{array}{l}
		\group{\Hospital}{\\
			\begin{array}{rcl}
				&& \group{\DBase}{\;\store{r_1}{\id}{\dna} \Par \store{r_2}{\id}{\medication}\;}\\
				&\Spar&	\group{\Nurse}{\;\out{a}{r_1, r_2} \inact\;}\\
				&\Spar&	\group{\Nurse}{\;\out{b}{r_1} \out{b}{r_1} \inact\;}\\
				&\Spar&	\group{\Doctor}{\;\inp{a}{w, z} \inp{w}{\pdata{x}{y}} \ifelse{y}{\disease_1}{\out{z}{\pdata{x}{\medication'}} \inact}{\inact}\;}\\
				&\Spar&	\group{\Research}{\;\inp{b}{w} \inp{w}{\pdata{\hid}{y}} \ifelse{y}{\disease_2}{P}{Q}\;}\\
				&\Spar&	\group{\Lab}{\;\inp{b}{w} \inp{w}{\pdata{x}{y}} \inp{r}{\pdata{\hid}{z}} \ifelse{y}{z}{ \out{c}{w} \inact }{\inact}\;}
			\end{array}
		\\
		}
	\end{array}
\]
In this system, \Hospital constitutes a group that is known
to the six processes of the subsequent parallel composition.
The six processes are defined on groups \DBase, \Nurse
\Doctor, \Research, and \Lab nested within the \Hospital group.
The group
memberships of the processes above characterise their nature  and allow us to
endow them with permissions
while reflecting the entity hierarchy
expressed in the  privacy policy defined above.

The system above describes the cases where:
\begin{itemize}
	\item	A \DBase process defines {\em store} processes that store the patient
			data $\dna$ and $\medication$ associated with a patient's identity, with $\dna$ corresponding
			to the dna of the patient and \medication to the current treatment of the patient.

	\item	A \Nurse process may hold a patient's
			files, represented with names $r_1$ and $r_2$, and can disseminate them inside
			the \Hospital. In this system there are two \Nurse processes that, respectively,
			disseminate patient's files to the doctor via channel $a$, or to the labs via
			channel $b$ (the patient file is represented by reference $r_1$).

	\item	A \Doctor may receive a patient's file,
			read it, and then, through the conditional statement, use it to perform diagnosis.
			As we will see below, we identify diagnosis through the type of the matching
			name $\disease_1$ which in this scenario represents the genetic fingerprint of a disease.
			After the diagnosis the \Doctor may update the treatment of the patient in the corresponding
			store.

	\item	A \Research lab may receive a patient's file and perform statistical analysis
			based on the matching of the genetic material of the patient. The
			patient's data made available to the research lab
            is anonymised, as indicated by the private-data value $\pdata{\hid}{z}$.

	\item	A \Lab may receive a patient's file, and perform
			forensic analysis on the patient's \dna ($\pdata{x}{y}$)
            to identify
			dna evidence connected to a crime ($\pdata{\hid}{z}$).
			After a successful identification
			the \Lab may inform the \Police via channel $c$.
\end{itemize}
\end{exa}

\noindent The implementation above can be associated to the privacy policy defined earlier using
static typing techniques for the $\pi$-calculus. We clarify this intuition
by an informal analysis of the \Doctor process. Without any formal
introduction to types, assume that structure
$\pdata{x}{y}$ has type $\privatet{\patientd}{\dna}$ that concludes that
variable $w$, being a channel that disseminates such information within the
$\Hospital$ system, has type $\gtype{\Hospital}{\privatet{\patientd}{\dna}}$.
Structure $\disease_1$ has type $\purposet{\diagnosis}{\dna}$  indicating
that it is a constant piece of information that can be used for the purposes
of performing a diagnosis. Furthermore,
structure $\pdata{x}{\medication'}$ has type $\privatet{\patientd}{\treatment}$.
The fact that the \Doctor receives the patient's file on name $w$
signifies the \rf permission, whereas inputting value $\pdata{x}{y}$ signifies the \rd
permission. The matching operator between \patientd and \diagnosis data identifies
the usage of private data for the purpose of a diagnosis. Finally, the \Doctor
has the right to update private data on the patient's file on name $z$.
We can see that the \Doctor does not disseminate the file (names $w$ and $z$) outside its
group, thus, there is respect of the \ndiss{\conf} permission.

%

Intuitively,
we may see that this system conforms to the defined policy, both in terms of
the group structure as well as the permissions exercised by the processes.
Instead, if the nurse were able to engage in a $r_1?\pdata{x}{y}$ action
then the defined policy would be violated because the nurse would have read the
patient's private data without having the permission to do so.
Thus, the encompassing group is essential for capturing requirements of non-disclosure.

Using these building blocks, our methodology is
applied as follows:
Given a system and a typing we perform type checking to
confirm that the system is well-typed while we infer a permission interface.
This interface captures the permissions exercised by the system.
To check that the system complies  with a privacy
policy we provide a correspondence between policies and permission interfaces
the intention being that: a permission interface
satisfies a policy if and only if the system exercises a subset of the allowed
permissions of the policy.
With this machinery at hand, we state and prove a safety theorem according to which,
if a system $\mathsf{Sys}$ type-checks against a typing $\Ga; \La$
and produces an interface $\Int$, and $\Int$ satisfies a privacy policy $\Policy$, then $\mathsf{Sys}$ respects
policy $\Policy$.

\section{Calculus}
\label{sec:calculus}

In this section we define a model of concurrent computation
whose semantics captures the privacy model we intend to investigate.
The calculus we propose is called the \Pcalc and it is based on the $\pi$-calculus
with groups originally presented by Cardelli et al.~\cite{Cardelli:secrecy}.
The $\pi$-calculus with groups is an extension of the $\pi$-calculus
with the notion of a group and an operation of group creation, where
a group is a type for channels.
In~\cite{Cardelli:secrecy} the authors establish a close connection
between group creation and secrecy as they show that a secret belonging
to a certain group cannot be communicated outside the initial scope of the
group.

The \Pcalc extends the $\pi$-calculus with
groups with some additional constructs that are useful for our privacy investigation.
In fact, as we show, these additional constructs are high-level
operations that can be encoded in the core calculus.
We begin our exposition by describing the basic intuition behind the design choices
of the calculus:
\begin{enumerate}
	\item	Private data are typically data associated to individuals.
            For instance, an election vote is not a private piece of information.
            It becomes, however, private when associated with an identifying piece
            of information such as a name, a social security number, or an address.
            So, in the \Pcalc we distinguish between $\pi$-calculus names and
            private data. In particular, we assume that {\em private data} are structures
            that associate constants, that is, pieces of information, with identifying
            pieces of information which we simply refer to as \emph{identities}.

	\item	Privacy enforcement in an information system is about
			controlling the usage of private data which is typically stored
            within a database of the system. In the \Pcalc we capture such
            databases of private data as a collection of {\em stores}, where a store is
            encoded as a high-level process term and includes operations for manipulation
			of private data. This manipulation takes place with the use of a special kind
            of names called {\em references}.
\end{enumerate}

\begin{figure}[t]
	\[
	\begin{array}{lrcl}
		\text{(identity values)}			& \ii &\bnfis&	\id \bnfbar \hid \bnfbar x \\
		\text{(data values)}				& \con &\bnfis& c \bnfbar x
		\\
		\\
		\text{(private data)}
		&
		\pdata{\ii}{\con}		&&  \text{where }\ii \not= x \implies {\con} = {c} \text{ and } \ii = x \implies {\con} = {y}
		\\
		\\
		\text{(identifiers)}	& u		&\bnfis&		a \bnfbar r \bnfbar x
		\\
		\text{(terms)}			& t		&\bnfis&		a \bnfbar r \bnfbar \pdata{\ii}{\con} \bnfbar c \bnfbar x
		\\
	    \text{(constant terms)}	& v		&\bnfis&		a \bnfbar r \bnfbar \pdata{\id}{c} \bnfbar c
		\\
		\text{(placeholders)}	& k		&\bnfis&		x \bnfbar \pdata{x}{y} \bnfbar \pdata{\hid}{x}

		\\
		\\
		\text{(processes)}
		&
		P	&\bnfis&	\inact \bnfbar \out{u}{t} P \bnfbar \inp{u}{k} P \bnfbar \newn{n} P \bnfbar P \Par P \bnfbar \repl P\\
			&&\bnfbar&	\ifelse{t}{t}{P}{P} \bnfbar \store{r}{\ii}{\con}
		\\
		\\
		\text{(systems)}
		&
		S	&\bnfis&	\group{\G}{P} \bnfbar \group{\G}{S} \bnfbar S \Spar S \bnfbar \newn{n} S
	\end{array}
	\]
	\caption{Syntax of the \Pcalc}
	\label{fig:syntax}
\end{figure}

\noindent To make the intuition above concrete, let us assume a set
of names $\mathcal{N}$, ranged over by $n, m, \ldots,$ and partitioned
into a set of channel names $\mathsf{Ch}$,
ranged over by $a, b, \ldots$, a set of reference names $\mathcal{R}$, ranged
over by $r, r',\ldots$,
and a set of dual endpoints, $\dual{\mathcal{R}}$,  where for each reference $r \in \mathcal{R}$ we assume the existence of
a dual endpoint $\dual{r} \in \dual{\mathcal{R}}$.
The
endpoint $\dual{r}$ belongs solely to a store-term and is used
for providing access to the associated private data whereas the endpoint $r$
is employed by processes that wish to access the data. Note that the main
distinguishing feature between the two endpoints of a reference is that
an endpoint $\dual{r}$ cannot be passed as an object of a communication, whereas
an endpoint $r$ can: while it is possible to acquire a reference for accessing
an object during computation it is not possible to acquire a reference for
providing access to a store.
Finally, we point out  that channels do not require dual endpoints so we assume that $\dual{a} = a$.

In addition to the set of names our calculus makes use of the following entities.
We assume a set of groups~\cite{Cardelli:secrecy}, $\mathcal{G}$,
that ranges over
$\G, \G_1,\ldots $. Groups are used to control name creation and to provide the property
of secrecy for the information that is being exchanged while characterising
processes within a system.
Furthermore, we assume a set of variables $\mathcal{V}$ that ranges over
$w, x, y, z, \ldots$. 
Data are represented using the constants set $\mathcal{C}$ ranged over by $c$,
while identities $\id$ range over a set of identities $\mathsf{Id}$. We assume
the existence of a special identity value `$\hid$' called the
hidden identity that is used to
signify that the identity of private data is hidden.
A hidden identity is used by the calculus to enforce private
data anonymity.

The syntax of the \Pcalc is defined in \figref{syntax}.
We first assume that an {\em identity value} \ii can be an identity \id,
a hidden identity \hid or an identity variable $x$.
We also define a {\em data value} \con to be a constant $c$ or a variable $x$.

As already discussed,   private data 
are considered to be structures that associate an identity value with a
data value, written \pdata{\ii}{\con}. 
Structure \pdata{\id}{c} associates information $c$ with
identity \id, while structure \pdata{\hid}{c} contains
private information $c$ about an individual
without revealing the individual's identity.
Finally,  private data can take one of the forms
\pdata{x}{y} and \pdata{\hid}{x}. 
Note that private data are restricted by definition  only to the four
forms defined above. Any other form of \pdata{\ii}{\con},
e.g.,~\pdata{\id}{x}, is disallowed by definition.

Based on the above, the meta-variables of the calculus include the following:
\begin{itemize}
	\item	{\em identifiers} $u$  denote channels, references or variables,
	\item	{\em terms} $t$  are channels, references, private data, constants or variables,
	\item	{\em constant terms} $v$ include all terms except variables, and
	\item	{\em placeholders} $k$ describe
			either variables or variable private data.
\end{itemize}

The syntax of the calculus is defined at two levels, the process level,
$P$, and the system level, $S$.
At the {\em process} level we have the $\pi$-calculus syntax extended
with the syntax for stores.
Process \inact is the inactive process.
The output prefix $\out{u}{t} P$ denotes a process
that sends a term $t$ over identifier $u$ and proceeds as $P$.
As already mentioned, term $t$ may not be the
dual endpoint of a reference.
Dually, the input prefix $\inp{u}{k} P$ denotes a process
that receives a value over identifier $u$, substitutes it on
variable placeholder $k$, and proceeds as $P$.
Process $\newn{n} P$ restricts name $n$ inside the scope of $P$.
Process $P \Par Q$ executes processes $P$ and $Q$ in parallel.
Process $\repl P$ can be read as an infinite number of
$P$'s executing in parallel.
The conditional process $\ifelse{t_1}{t_2}{P}{Q}$ performs matching
on terms $t_1$ and $t_2$ and continues to $P$ if the match
succeeds, and to $Q$ otherwise.
Finally, process \store{r}{\ii}{\con} is a process that is
used to store data of the form $\pdata{\id}{c}$.
We point out that store references cannot be specified via variable placeholders. This is to
ensure that each store has a  unique store reference. Nonetheless, stores can be
created dynamically, using the construct of name restriction
in combination with process replication.

Free and bound variables of a process $P$, denoted \fv{P} and \bv{P}, respectively,
follow the standard $\pi$-calculus
definition for all $\pi$-calculus constructs, whereas for
store processes we define
$\bv{\store{r}{\ii}{\con}} = \es$ and
$\fv{\store{r}{x}{y}} = \set{x, y}$.
Additionally, it is convenient to define
$\fv{x} = \set{x}$ and $\fv{\pdata{x}{y}} = \set{x, y}$.
The substitution function, denoted \subst{v}{k},
is then defined as follows: In the case that $k$ is $x$ then, the substitution follows
the standard $\pi$-calculus definition and results in substituting all free occurrences
of $k$ by $v$ with renaming of bound names, if necessary, to prevent any occurrences
of $v$ from becoming bounded. In the case that $k = \pdata{x}{y}$ and $v = \pdata{\id}{c}$,
the substitution results in substituting all free occurrences of $x$ and $y$ by $\id$ and $c$,
respectively. Finally, in the case that $k = \pdata{\hid}{y}$ and $v = \pdata{\hid}{c}$,
we define  \subst{v}{k} as the substitution of all free occurrences of $y$ by $c$. 
Note that the substitution function is undefined for other combinations
of $v$ and $k$.



In turn, systems are essentially processes that are extended to include the group construct.
Groups are used to arrange processes into multiple levels of naming abstractions.
The group construct is
applied both at the level of processes \group{\G}{P} and at the level of systems
\group{\G}{S}. Finally, we have the name restriction construct as well as parallel
composition for systems.

Note that, for the sake of simplicity, we have presented the monadic version of the \Pcalc.
The calculus together with the type system can be extended in the polyadic
case in a straightforward manner. Furthermore, we point out that, unlike the $\pi$-calculus with groups, the
\Pcalc does not include a group creation construct. This is due to the fact that groups
in  our framework are associated with classes of entities and their associated privacy
permissions. Thus, the groups of a system need to be fixed.



\subsection{Labelled Transition Semantics}

We present the semantics of the \Pcalc through
a labelled transition system (LTS).
We define a labelled transition semantics instead
of a reduction semantics due to a characteristic of
the intended structural congruence in the \Pcalc.
In particular, the definition of such a congruence would
omit the axiom
$\group{\G}{S_1 \Par S_2} \equiv \group{\G}{S_1} \Par S_2$ if $\G\not\in\fg{S_2}$
as it was used in~\cite{Cardelli:secrecy}. This is due to our intended semantics of the
group concept which is considered to assign capabilities to processes.
Thus, nesting of a process $P$ within some group $\G$, as in $\group{\G} P$,
cannot be lost even if $\G\not\in\fg{P}$, since the $\group{\G}{\cdot}$ construct
has the additional meaning of group membership in the \Pcalc and it
instils $P$ with privacy-related permissions as we will discuss in
the sequel. The absence of this law renders the reduction semantics rule
of parallel composition rather complex.

To define a labelled transition semantics we first introduce the set of labels:
\[
	\ell	\bnfis	\newn{\pol{m}}\actout{n}{v} \bnfbar \actinp{n}{v} \bnfbar \tau
\]
We distinguish three action labels.
The output label $\newn{\pol{m}}\actout{n}{v}$ denotes
an output action on name $n$ that carries object $v$. Names $\pol{m}$
is a (possibly empty) set of names in the output object that are restricted.
The input label $\actinp{n}{v}$ denotes the action of inputting
value $v$ on name $n$.
Action $\tau$ is the standard internal action.
To clarify internal interaction we define a symmetric duality relation $\mathsf{dual}$
over labels $\ell$:
\[
	\begin{array}{c}

		{\newn{\pol{m}} \actout{a}{v} \actdual \actinp{a}{v}}
		\\[4mm]
		\tree {
			(v_1 = v_2) \vee
			(v_1 = \pdata{\id}{c} \land  v_2 = \pdata{\hid}{c})
		}{
			\newn{\pol{m}} \actout{\dual{r}}{v_1} \actdual \actinp{r}{v_2}
		}
		\qquad
		{\tree {
			(v_1 = v_2) \vee
			(v_2 = \pdata{\id}{c} \land  v_1 = \pdata{\hid}{c})
		}{
			\newn{\pol{m}} \actout{r}{v_1} \actdual \actinp{\dual{r}}{v_2}
		}}
	\end{array}
\]
The first pair of the relation defines the standard input/output duality between label actions,
where an output on name {$a$} matches an input on name {$a$} that carries the
same object. The {last two duality pairs relate} actions on dual {reference} endpoints:
we distinguish between the case where dual {reference} endpoints carry the same object
and the case where an input {(resp., output)} without an identity
is matched against an output {(resp., input)} with an identity. Thus, we allow
communicating private data without revealing their identity.

\begin{figure}[t]
	\begin{mathpar}
		\LOut\;	\out{n}{v} P \trans{\actout{n}{v}} P
		\and
		\LInp\;	\inp{n}{k} P \trans{\actinp{n}{v}} P \subst{v}{k}
		\and
		\LSOut\;	\store{r}{\id}{c} \trans{\actout{\dual{r}}{\pdata{\id}{c}}} \store{r}{\id}{c}		
		\and
        \inferrule*[left=\LSInp] {
			\ii=x \lor \ii = \id
		}{
			\store{r}{\ii}{\con} \trans{\actinp{\dual{r}}{\pdata{\id}{c}}} \store{r}{\id}{c}
		}
		\and
		\inferrule*[left=\LTr] {
			P \trans{\ell} P'
		}{
			\ifelse{a}{a}{P}{Q} \trans{\ell} P'
		}
		\and
		\inferrule*[left=\LFl] {
			Q \trans{\ell} Q'
			\and
			a \not= b
		}{
			\ifelse{a}{b}{P}{Q} \trans{\ell} Q'
		}
		\and
		\inferrule*[left=\LRes] {
			P \trans{\ell} P'
			\and
			n \notin \fn{\ell}
		}{
			\newn{n} P \trans{\ell} \newn{n} P'
		}
		\and
		\inferrule*[left=\LScp] {
			P \trans{\newn{\pol{m}} \actout{n}{v}} P'
			\and
			m \in \fn{v}
		}{
			\newn{m} P \trans{\newn{m \cat \pol{m}} \actout{n}{v}} P'
		}
		\and
		\inferrule*[left=\LTau] {
			P \trans{\ell_1} P'
			\and
			Q \trans{\ell_2} Q'
			\and
			\ell_1 \actdual \ell_2
		}{
			P \Par Q \trans{\tau} \newnp{\bn{\ell_1} \cup \bn{\ell_2}} {P' \Par Q'}
		}
		\and
		\inferrule*[left=\LLPar] {
			P \trans{\ell} P'
			\and
			\bn{\ell}\cap \fn{Q} = \es
		}{
			P \Par Q \trans{\ell} P' \Par Q
		}
		\and
		\inferrule*[left=\LRPar] {
			Q \trans{\ell} Q' \and \bn{\ell}\cap \fn{P} = \es
		}{
			P \Par Q \trans{\ell} P \Par Q'
		}
		\and
		\inferrule*[left=\LRepl] {
			P \trans{\ell} P'
		}{
			\repl P \trans{\ell} P' \Par \repl P
		}
		\and
		\inferrule*[left=\Alpha] {
			P\equiv_{\alpha} P''\quad P'' \trans{\ell} P'
		}{
			 P \trans{\ell} P'
		}
	\end{mathpar}
	\caption{Labelled Transition System for Processes}
	\label{fig:lts}
\end{figure}

The labelled transition semantics for processes is defined in \figref{lts} and the
labelled transition semantics for systems in \figref{slts}.

According to rule $\LOut$,
output-prefixed process $\out{n}{v} P$
may interact on action $\actout{n}{v}$ and continue as $P$.
Similarly, input-prefixed process $\inp{n}{k} P$ may interact on action $\actinp{n}{v}$
and continue as $P$ with $k$ substituted by $v$ as in rule $\LInp$.
Naturally, the input action can only take place for compatible $v$ and $k$ values which are
the cases when the substitution function is defined.
Rules $\LSOut$ and $\LSInp$ illustrate the two possible actions of a store
process: a store may use the dual endpoint of the store reference $r$
to engage in action $\actout{\dual{r}}{\pdata{\id}{{c}}}$ or in action
\actinp{\dual{r}}{\pdata{\id}{{c}}}. In the first case it returns
to its initial state (rule $\LSOut$) and in the second case it updates
 the store accordingly (rule $\LSInp$). Note that, for the
 input action, if the store already contains data, then the received data should
match the identity of the stored data.
In turn, rules $\LTr$ and $\LFl$ define the semantics of the conditional construct
for the two cases where the condition evaluates to true and false, respectively.
Through rule $\LRes$ actions are closed under
the restriction operator provided that the restricted
name does not occur free in the action.
Rule $\LScp$ extends along with the action $\newn{\pol{m}} \actout{n}{v}$
the scope of name $m$ if $m$ is restricted.
Next, rule $\LTau$ captures that parallel processes
may communicate with each other on dual actions and produce
 action $\tau$, whereas rules $\LLPar$ and $\LRPar$ are symmetric rules
that state that actions
are closed under the parallel composition provided that
there is no name conflict between the bounded names of the
action and the free names of the parallel process.
Continuing to the replication operator, $\LRepl$ states that  $\repl P$ may execute
an action of $P$ and produce the parallel composition of
the replication and the continuation of $P$,
and, according to rule $\Alpha$, the labelled transition
system is closed
under alpha-equivalence ($\equiv_{\alpha}$).

\begin{figure}[t]
	\begin{mathpar}
		\inferrule*[left=\SPGr] {
			P \trans{\ell} P'
		}{
			\group{\G}{P} \trans{\ell} \group{\G}{P'}
		}
		\and
		\inferrule*[left=\SGr] {
			S \trans{\ell} S'
		}{
			\group{\G}{S} \trans{\ell} \group{\G}{S'}
		}
		\and
		\inferrule*[left=\SRes] {
			S \trans{\ell} S'
			\and
			n \notin \fn{\ell}
		}{
			\newn{n} S \trans{\ell} \newn{n} S'
		}
		\and
		\inferrule*[left=\SScp] {
			S \trans{\newn{\pol{m}} \actout{n}{v}} S'
			\and
			m \in \fn{v}
		}{
			\newn{m} S \trans{\newn{m \cat \pol{m}} \actout{n}{v}} S'
		}
		\and
		\inferrule*[left=\SLPar] {
			S_1 \trans{\ell} S_1'
			\and
			\bn{\ell}\cap\fn{S_2} = \es
		}{
			S_1 \Spar S_2 \trans{\ell} S_1' \Spar S_2
		}
		\and
		\inferrule*[left=\SRPar] {
			S_2 \trans{\ell} S_2'
			\and
			\bn{\ell}\cap\fn{S_1} = \es
		}{
			S_1 \Spar S_2 \trans{\ell} S_1 \Spar S_2'
		}
		\and
		\inferrule*[left=\STau] {
			S_1 \trans{\ell_1} S_1'
			\and
			S_2 \trans{\ell_2} S_2'
			\and
			\ell_1 \actdual \ell_2
		}{
			S_1 \Spar S_2 \trans{\tau} \newnp{\bn{\ell_1} \cup \bn{\ell_2}} {S_1' \Spar S_2'}
		}
	\end{mathpar}

	\caption{Labelled Transition System for Systems}
	\label{fig:slts}
\end{figure}

Moving on to the semantics of systems we have the following: According
to rules $\SPGr$ and $\SGr$, actions of processes and systems are preserved by
the group restriction operator. The remaining rules are similar to their
associated counter-parts for processes.

We often write $\trans{}$ for $\trans{\tau}$ and
$\Trans{}$ for $\trans{}^*$. Furthermore,
we write $\Trans{\hat{\ell}}$ for $\Trans{} \trans{\ell} \Trans{}$
if $\ell = \tau$, and $\Trans{}$ if $\ell = \tau$. Finally,
given $\pol{\ell} = \ell_1\ldots\ell_n$,
we write $\Trans{\pol{\ell}}$ for $\trans{\ell_1}\ldots\trans{\ell_n}$.

It is useful to define a structural congruence relation
over the terms of the \Pcalc.
\begin{defi}
	Structural congruence, denoted \scong, is defined
	separately on process terms and system terms.

	\begin{itemize}
		\item	Structural congruence for processes is a binary relation
				over processes defined as the least congruence that satisfies the rules:
				\[
				\begin{array}{c}
					P \Par Q \scong Q \Par P
					\qquad
					(P \Par Q) \Par R \scong P \Par (Q \Par R)
					\qquad
					P \Par \inact \scong P
					\qquad
					\newn{a} \inact = \inact
					\\
					n \notin \fn{Q} \implies \newn{n} P \Par Q \scong \newnp{n}{P \Par Q}
					\qquad
					\newn{n} \newn{m} P \scong \newn{m} \newn{n} P
				\end{array}
				\]

		\item	Structural congruence for systems is a binary relation
				over systems defined as the least congruence that satisfies the rules:
				\[
				\begin{array}{c}
					P \scong Q \implies \group{\G}{P} \scong \group{\G}{Q}
					\qquad
					S_1 \Spar S_2 \scong S_2 \Spar S_1
					\qquad
					(S_1 \Spar S_2) \Spar S_3 \scong S_1 \Spar (S_2 \Spar S_3)
					\\
					n \notin \fn{S_2} \implies \newn{n} S_1 \Par S_2 \scong \newnp{n}{S_1 \Par S_2}
					\qquad
					\newn{n} \newn{m} S \scong \newn{m} \newn{n} S
				\end{array}
				\]
	\end{itemize}
\end{defi}\medskip

\noindent The structural congruence relation preserves the labelled transition semantics:
\begin{thm}
	Consider processes $P$ and $Q$, and systems $S_1, S_2$.
	\begin{itemize}
		\item	If $P \scong Q$ and  there exist $ \ell$ and $P'$ such that $P \trans{\ell} P'$ then
				there exists $Q'$ such that $Q \trans{\ell} Q'$ and $P' \scong Q'$.
		\item	If $S_1 \scong S_2$ and  there exist $\ell$ and $ S_1'$ such that $S_1 \trans{\ell} S_1'$ then
				there exists $S_2'$ such that $S_2 \trans{\ell} S_2'$ and $S_1' \scong S_2'$.
	\end{itemize}
\end{thm}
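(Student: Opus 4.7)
The plan is to prove both statements by a standard induction on the derivation of the structural congruence, treating processes and systems by analogous arguments. Since $\scong$ is defined as the least congruence closed under a finite set of axioms, it suffices to (i) show the simulation property for each axiom in isolation and (ii) lift the result through the congruence closure.

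First I would fix the base cases, one per axiom. For the commutative/associative monoid laws of $\Par$ (respectively $\Spar$) and for $P \Par \inact \scong P$, the argument is essentially syntactic: any transition of the left-hand side must be derived via one application of $\LLPar$, $\LRPar$, or $\LTau$, and the corresponding transition of the right-hand side is obtained simply by swapping the premises or by dropping the vacuous $\LLPar$/$\LRPar$ that interacts with $\inact$ (which has no transitions). The resulting residuals match up to $\scong$ by the same rearrangement. The axiom $\newn{a}\inact = \inact$ is trivial since neither side has transitions. The scope-swap axiom $\newn{n}\newn{m}P \scong \newn{m}\newn{n}P$ follows immediately because the rules $\LRes$ and $\LScp$ are indifferent to the order of surrounding restrictions, provided the side conditions on free names are compatible (which they are by $\alpha$-conversion via rule $\Alpha$).

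The main obstacle will be the scope-extrusion axiom $\newn{n}P \Par Q \scong \newn{n}(P \Par Q)$ under $n \notin \fn{Q}$, which interacts nontrivially with $\LTau$ and $\LScp$. Here one has to consider all shapes of $\ell$: if the transition from $\newn{n}P \Par Q$ is an extruding output $\newn{n\cat\pol{m}}\actout{a}{v}$ arising from $\LScp$ applied after $\LLPar$, then on the right-hand side one derives the same extrusion by first composing in parallel with $Q$ (using $n \notin \fn{Q}$ to discharge the side condition of $\LLPar$) and then applying $\LScp$, yielding structurally congruent residuals; the case of a $\tau$-transition derived by $\LTau$ between $P$ and $Q$ that additionally extrudes $n$ is the most delicate and requires careful bookkeeping of the bound-name sets in $\LTau$ to ensure the residual on both sides coincides up to $\scong$ after possibly reordering $\newn{n}$ with the restrictions produced by $\LTau$. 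A symmetric argument handles the direction $Q \scong P$ and the transitions originating on the right-hand side.

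For the inductive step I would walk through each congruence rule (contextual closure under $\Par$, $\newn{n}\cdot$, prefix, replication, conditional, and, for systems, under $\Spar$, $\newn{n}\cdot$, and $\group{\G}{\cdot}$). In every case the transition of the composite term decomposes, by inversion of the LTS rule used, into a transition of a subterm to which the inductive hypothesis applies, and the matching transition is then re-assembled by the same LTS rule. The system-level statement is established by exactly the same pattern, using $\SPGr$, $\SGr$, $\SLPar$, $\SRPar$, $\SRes$, $\SScp$, and $\STau$ in place of their process-level analogues; the only additional observation is that the group constructs $\group{\G}{\cdot}$ are transparent to the labels, so $\SPGr$ and $\SGr$ transport the simulation diagram directly.
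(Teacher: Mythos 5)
Your proposal is correct and follows essentially the same route as the paper: an induction over the definition of $\scong$ (axioms as base cases, congruence closure as the inductive step), with a case analysis on the derivation of the transition $\ell$ in each case, the scope-extrusion law being the only delicate point. The paper merely states this strategy in two lines; your write-up fills in the same argument in more detail.
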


\begin{proof}
	The proof is an induction over the definition
	of \scong. For each case of the induction
	the proof considers the derivation of each
	action $\ell$.
\end{proof}

The syntax and the semantics of the \Pcalc 
are encodable in the standard $\pi$-calculus with select and branch
operations~\cite{HondaVK98}. The encoding enjoys a form of operational correspondence.
The details of the translation and operational correspondence
can be found in \appref{enc}.

\begin{exa}

As an example consider the case where a system reads 
private information from a store term.
%
			The derivation for an
			internal transition for the system above is:
%
			\[\small
				\tree[\LTau] {
					\tree[\LSOut] {
						\store{r}{\id}{c} \trans{\actout{\dual{r}}{\pdata{\id}{c}}} \store{r}{\id}{c}
					}{
						\group{\G_1}{\store{r}{\id}{c}} \trans{\actout{\dual{r}}{\pdata{\id}{c}}} \group{\G_1}{\store{r}{\id}{c}}
					}
					\qquad
					\tree[\LInp] {
						\inp{r}{\pdata{\hid}{x}} P \trans{\actinp{r}{\pdata{\hid}{c}}} P \subst{\pdata{\hid}{c}}{\pdata{\hid}{x}}
					}{
						\group{\G_2}{\inp{r}{\pdata{\hid}{x}} P} \trans{\actinp{r}{\pdata{\hid}{c}}} \group{\G_2}{P \subst{\pdata{\hid}{c}}{\pdata{\hid}{x}}}
					}
					\\[6mm]
					\actout{\dual{r}}{\pdata{\id}{c}} \actdual \actinp{r}{\pdata{\hid}{c}}
				}{
					\group{\G_1}{\store{r}{\id}{c}} \;\Spar \;\group{\G_2}{\inp{r}{\pdata{\hid}{x}} P}
					\trans{\tau}
					\group{\G_1}{\store{r}{\id}{c}} \;\Spar \;\group{\G_2}{P \subst{\pdata{\hid}{c}}{\pdata{\hid}{x}}}
				}
			\]

            We observe an output action \actout{\dual{r}}{\pdata{\id}{c}} on the store
			system which is dual to the input action \actinp{{r}}{\pdata{\hid}{c}}
			executed by the receiving system. The duality of the two actions ensures
			the internal transition of the parallel composition of the two systems.
			The receiving input action has an anonymous identity and thus
			the received data is substituted anonymously.
			
\end{exa}

\section{Privacy Policy}
\label{sec:policy}

In this section we define a simple language for enunciating privacy requirements
of systems defined in our process calculus. Typically, privacy policy languages
express positive and negative norms that are expected to hold in
a system. These norms distinguish what \emph{may} happen, in
the case of a positive norm, and what may not happen, in
the case of a negative norm, on \emph{data attributes} which are types
of sensitive data within a system, and, in particular, how the various agents,
who are referred to by their \emph{roles},
may/may not handle these data.

The notions of an attribute and a role are reflected in our framework via
the notions of private data and groups, respectively. Thus, our  policy language
is defined in such a way as to specify the allowed and disallowed permissions
associated with the various groups for each type of private data. To express this
we make use of a type system to distinguish between the various types of data
within a system and to associate policies to each different type of private
data.

Specifically, we assume a set of ground types, ranged over by $\g$, a set of
private data types, ranged over by $\prv$, and a set of constant types,
ranged over by $\prp$. Based on the above we define the following set of types:
\begin{eqnarray*}
	T \bnfis \privatet{\prv}{\g} \bnfbar \purposet{\prp}{\g} \bnfbar \gtype{\G}{T}
\end{eqnarray*}
Thus, types range over $T$ and include type \privatet{\prv}{\g},
used to type private data and type \purposet{\prp}{\g} used to type constants.
Types are also inductively defined using groups \gtype{\G}{T}.
The intuition behind the notion of
group $\G$ in $\gtype{\G}{T}$ is that a name $x : \gtype{\G}{T}$
may only be used for communicating data of type $T$ between processes
that {\em belong} to group $\G$.
In our privacy setting we map private data of the form
$\pdata{\id}{c}$ to private data types $\privatet{\prv}{\g}$.
This  signifies
the fact that constant $c$ of type $\g$ is associated
with an identity in a piece of private information ($\pdata{\id}{c}$).
The value $\id$ is implicitly assumed to be an
identity without any extra type assignment. As a result, identities
cannot be used as objects in communications or for any other type
of processing within our calculus.

\begin{exa}
	\label{ex:polCar}
	As a running example for this section, consider a system
	that describes some internal modules of a car:
	\[
		\begin{array}{l}
		\group{\Car}{\\
			\begin{array}{rcl}
				&&		\group{\Speedometer}{\;\store{r}{\id}{sp} \Par \repl \newnp{sp}{\out{r}{\pdata{\id}{sp}} \inact} \; }\\
				&\Spar& \group{\SpeedCheck}{\;\inp{r}{\pdata{x}{y}} \IfElse{y > \speedlimit}{P}{Q} \; }
			\end{array}
		\\
		}
		\end{array}
	\]
	The internal module defined by group \Speedometer is a module
	that models the speed of the car. The speed of the car
	is considered to be private data and it is associated with the
	identity, \id, of the driver of the car. Thus,
	$\pdata{\id}{sp}$ has type \privatet{\vehicled}{\speed},
	where \vehicled is the type of private data pertaining to the car and
	\speed is a ground type which corresponds to the speed of the car.	
	
	The module defined by group \SpeedCheck
	is responsible for checking when a car exceeds some speed limit.
	Group \SpeedCheck inputs the current speed of the car (of type
    \privatet{\vehicled}{\speed}) via name $r$ that has type
	\group{\Car}{\privatet{\vehicled}{\speed}} which
	represents a name within group \Car that can receive data of type
	\privatet{\vehicled}{\speed}. Constant \speedlimit has
	type \purposet{\limit}{\speed} and it denotes
	a ground value of type \speed, decorated
	with the constant type \limit, signifying that it is a constant
    that can be used for the purpose of checking satisfaction of
    the speed limit.
\end{exa}



We define privacy policies as an association of permissions
to a hierarchy of groups with respect to types of private data.
The set of permissions we consider for handling private data is
the following:
\begin{eqnarray*}
	\perm	&\bnfis&	\rd \bnfbar \up \bnfbar \rf \bnfbar \diss{G}{\lambda} \bnfbar \str \bnfbar \idp \\
			&\bnfbar&	\ndiss{\kind}\bnfbar \use{\prp} \bnfbar \idf{\prv} 
						\bnfbar \aggr
	\\
	\lambda	&\bnfis&	1, 2, \dots \bnfbar \infinite
	\\
	\kind	&\bnfis&	\discl \bnfbar \conf \bnfbar \sens
\end{eqnarray*}
Permissions are denoted by \perm and are as follows:
\begin{itemize}
	\item
			Permission \rd when associated with a type of private data and a group indicates
			that the private data may be read by processes belonging to the specific group.

	\item
			Permission \up gives the possibility of updating the contents of a store
			with a new piece of private data.

	\item
			Permission \rf allows to gain access to a
			reference on private data.

	\item
			Permission \diss{\G}{\lambda}, when associated with a type of private data,
			indicates that the private data may be disseminated up to $\lambda$ times to
			processes belonging to group \G, where $\lambda$ ranges over natural numbers and
			\infinite, with \infinite being the unlimited number of times.

	\item
			Permission \str allows to define a store process for the storage of private data.
	\item
			Permission \idp
			allows access to the \id of the private data. Lack of the \idp permission
			corresponds to the process of anonymisation of private data.
	\item
			Permission \ndiss{\kind}, when associated with a certain type
			of data and a group, disallows the dissemination of the specific type of
			data outside the scope of the associated group under the reason described
			by \kind. \kind ranges over \discl that denotes a non-disclosure
			agreement, \conf that denotes a confidentiality relation, and \sens
			that denotes sensitive data.

	\item	Permission \use{\prp} defines the right
			to match private data against constants of type \prp.

	\item	Permission
			\idf{\prv} allows the identification of private data against
			private data with type \prv. Typically, we require that
			\prv characterises data that are unique for each individual,
			such as identity numbers and passwords.

	\item
			Finally, permission \aggr grants the right to aggregate private
			data.
			Data aggregation takes place when a component
			in the system hierarchy is allowed to store
			more than a single piece of private information for the same
            data subject.
\end{itemize}\medskip



\noindent In turn, a policy is an assignment of permissions
to a hierarchy of groups with
respect to types of sensitive data and
it is defined as follows:
\begin{eqnarray*}
	\Policy &\bnfis& \prv \haspolicy \Hpol \bnfbar \prv \haspolicy \Hpol; \Policy\\
	\Hpol &\bnfis& \role{\G}{\pol{\perm}}{\pol{\Hpol}}
\end{eqnarray*}
where $\pol{\perm}$ is a set of permissions.
A policy has the form $\prv_1 \haspolicy \Hpol_1; \ldots; \prv_n \haspolicy \Hpol_n$
where $\prv_i$ are the base types subject to privacy.
Given  policy $\Policy$ we write $\Policy(\prv)$ for $\Hpol_i$
where $\prv = \prv_i$.
The components $\Hpol_i$, which we refer to as \emph{permission hierarchies},  specify
the group-permission associations for each base type. A permission hierarchy \Hpol has
the form
$\role{\G}{\pol{\perm}}{\Hpol_1,\ldots, \Hpol_m}$
which expresses that an entity possessing a group membership in group \G
has rights $\pol{\perm}$ to the data in question and, if additionally it is a member
of some group $\G_i$, where $\Hpol_i = \role{\G_i}{\pol{\perm}_i}{\ldots}$, then it also has the
rights $\pol{\perm}_i$, and so on.

\begin{exa}
	We define a privacy policy for a network system that utilises
	the internal car modules of our running example (\exref{polCar}) as:
	\[
		\begin{array}{lcl}
			\label{ex:polNet}
			\vehicled &\haspolicy&
				\role{\Network}{}{\\
				&& \qquad \role{\Car}{\diss{\Network}{\infinite}}{\\
				&& \qquad \qquad	\lrole{\Speedometer}{\up, \str},\\
				&& \qquad \qquad	\lrole{\SpeedCheck}{\rd, \idp, \use{\limit}}\\
				&& \qquad },
				\\
				&& \qquad \lrole{\SpeedAuthority}{\rf, \rd}\\
				&& }
		\end{array}
	\]
	The privacy requirements of the policy above are concerned
	with private data of type \vehicled.
	We assume that the modules of a car, expressed here by group \Car,
	are connected to a network (\Network). Part of the network is
	a speed authority.
	The \Speedometer module is responsible for displaying the
	speed of the car. This is modelled by permission \str.
	Permission \up enables the \Speedometer to regularly update the
	speed of the car. The \SpeedCheck module can read the
	speed and use it for the purpose of checking it against the speed limit.
	Any group inside the \Car group can disseminate inside group
	\Network a link to the speed store in the \Speedometer. This is expressed
	by permission \diss{\Network}{\infinite}.
	Finally, the permission \rf of the \SpeedAuthority allows the \SpeedAuthority
	to get a link to the speed's store. The authority also has permission \rd so it
	can read the \Speedometer store.
	
\end{exa}

Following the need to combine permission
environments, we define the following:
Operator $\oplus$ is defined on $\lambda$ values
as the commutative monoid that follows the rules:
\begin{eqnarray*}
	\lambda \oplus \infinite = \infinite
	\qquad \qquad \qquad
	\lambda_1 \oplus \lambda_2 = \lambda_1 + \lambda_2 \quad \lambda_1 \not=\infinite, \lambda_2 \not=\infinite
\end{eqnarray*}
%
Operator $\uplus$ is used to combine sets of permissions:
\[
\begin{array}{rcl}
	\pol{\perm}_1 \uplus \pol{\perm}_2 & = &
	\set{\perm \setbar \perm \in \pol{\perm}_1 \cup \pol{\perm}_2, \perm \not= \diss{\G}{\lambda}}\\
	&\cup&	\set{\diss{\G}{(\lambda_1 \oplus \lambda_2)} \setbar \diss{\G}{\lambda_i} \in \pol{\perm}_i, i \in \set{1, 2}}\\
	&\cup&	\set{\diss{\G}{\lambda} \setbar \diss{\G}{\lambda} \in \perm_1, \diss{\G}{\lambda'} \notin \perm_2}\\
	&\cup&	\set{\diss{\G}{\lambda} \setbar \diss{\G}{\lambda} \in \perm_2, \diss{\G}{\lambda'} \notin \perm_1}\\
\end{array}
\]
Operator $\uplus$ takes the union of two permission sets with the
exception of the \diss{\G}{\lambda} permission; whenever we have
two $\diss{\G}{\lambda}$ permissions
the result is the same permission up-to the addition of the $\lambda$
value.

We proceed to define auxiliary functions $\groups{\Hpol}$ and $\permis{\Hpol}$
so as to
gather the sets of groups and the set of permissions, respectively,
inside a hierarchy structure:
%
%
\begin{eqnarray*}
	\groups{\Hpol}
	& = &
	\set{\G} \cup (\bigcup_{j \in J} \groups{\Hpol_j}) \\
	\permis{\Hpol}
	& = &
	\pol{\perm} \uplus (\biguplus_{j \in J} \permis{\Hpol_j})
\end{eqnarray*}

We are now ready to introduce a well-formedness check on the policy structure.
%
\begin{defi}[Well-formed Policy]
	We say that a policy
	$\Policy = \prv_1 \haspolicy \Hpol_1; \ldots; \prv_n \haspolicy \Hpol_n$
	is \emph{well formed}, written $\Pol: \diamond$,
	if it satisfies the following:
	\begin{enumerate}
		\item	The $\prv_i$ are distinct.

		\item	If $\Hpol = \role{\G}{\pol{\perm}}{\Hpol_j}_{j \in J}$ occurs within some
				$\Hpol_i$ then $\G \notin \groups{\Hpol_j}$ for all $j\in J$.

		\item	\begin{sloppypar} 
				If $\Hpol = \role{\G}{\pol{\perm}}{\Hpol_j}_{j \in J}$ occurs within some
				$\Hpol_i$, $\ndiss{\kind}\in\pol{\perm}$ and
				$\diss{\G'}{\lambda} \in \permis{\Hpol_j}$ for some $j\in J$, then $\G'\in\groups{\Hpol}$.
				\end{sloppypar}
	\end{enumerate}
\end{defi}\medskip
\noindent A well-formed policy is required to have definition for distinct private types,
an acyclic group hierarchy and,
furthermore, no non-disclosure requirement imposed at some level of a
hierarchy can be in conflict with a disclosure requirement granted in its sub-hierarchy.
Hereafter, we assume that policies are well-formed policies.
As a shorthand, we write $\G: \pol{\perm}$ for  $\role{\G}{\pol{\perm}}{\epsilon}$ and
we abbreviate $\G$ for $\G:\emptyset$.

Given a set of groups $\pol{G}$ and hierarchy $\Hpol$, it is often convenient to
extract a flat structure of the policy hierarchy referring solely to the groups $\pol{G}$.
This flat hierarchy captures the nesting of the groups, as defined by the hierarchy,
and accumulates the set of permissions associated to agents belonging to the groups in
question. Thus, a flat hierarchy has the form
\begin{eqnarray*}
	\Inthier \bnfis \hiert{\G}{\Inthier} \bnfbar \hiert{\G}{\pol{\perm}}
\end{eqnarray*}
Precisely, given a set of groups $\pol{G}$ and a hierarchy $\Hpol$, we  define the flat hierarchy $\Hpol_{\pol{G}}$ as follows:
\begin{eqnarray*}
	\Hpol_{\G} = \hiert{\G}{\pol{\perm}} & \textrm{if} & \Hpol = \lrole{\G}{\pol{\perm}}\\
	(\role{\G}{\pol{\perm}}{\pol{\Hpol}})_{\G\cat\pol{\G}} = \hiert{\G}{\Inthier}
			& \textrm{if} &
			\role{\G'}{\pol{\perm}'}{\pol{\Hpol}'} \in \pol{\Hpol} \land
			\Inthier = (\role{\G'}{\pol{\perm} \uplus \pol{\perm}'}{\pol{\Hpol}'})_{\pol{\G}}
\end{eqnarray*}
%
%

\begin{exa}
Returning to our running example, let the policy in \exref{polNet} to be $\Policy = \vehicled \haspolicy \Hpol$.
Furthermore, if $\pol{\G} = \Network\cat\Car\cat\SpeedCheck$ then:
\[
	\Hpol_{\pol{\G}} = \hiert{\Network}{\hiert{\Car}{\hiert{\SpeedCheck}{\diss{\Network}{\infinite} \cat \rd \cat \idp \cat \use{\limit}}}}
\]
\end{exa}


The operators
\groups{\cdot} and \permis{\cdot} are extended to include \Inthier structures:
\[
\begin{array}{rclcrcl}
	\groups{\hiert{\G}{\pol{\perm}}} &=& \set{\G}
	& \qquad \qquad &
	\groups{\hiert{\G}{\Inthier}} &=& \set{\G} \cup \groups{\Inthier}
	\\[3mm]
	\permis{\hiert{\G}{\pol{\perm}}} &=& \pol{\perm}
	& \qquad \qquad &
	\permis{\hiert{\G}{\Inthier}} &=& \permis{\Inthier}
\end{array}
\]

\section{Type System}

The goal of this paper is to statically ensure that a process
respects a privacy policy. To achieve this we develop a typing system.
The key idea is that the type system performs a static analysis
on the syntax of processes and derives the behaviour of each process
with respect to data collection, data processing, and data
dissemination as a type. Then the derived type is checked for
satisfaction against a privacy policy.

We partition the typing rules into three categories:
i) typing rules for values and expressions;
ii) typing rules for processes; and
iii) typing rules for systems.
We begin by defining the typing environments on which type checking is carried out.

\subsection{Typing environments}
We first define the typing environments.
\begin{eqnarray*}
	\Ga	&\bnfis&	\Ga \cat u: \gtype{\G}{T} \bnfbar
						\Ga \cat \pdata{\ii}{\con}: \privatet{\prv}{\g} \bnfbar
						\Ga \cat \con: \privatet{\prp}{\g} \bnfbar
						\es
	\\
	\De	&\bnfis&	\De \cat \prv : \pol{\perm} \bnfbar
						\es
\end{eqnarray*}
Type environment \Ga maps names and constants to appropriate types.
Specifically, names $u$ are mapped to types of the form $\gtype{\G}{T}$.
References are mapped to private data types of the form \privatet{\prv}{\g} and
constants are mapped to purpose types of the form \privatet{\prp}{\g}.
Permission environment \De assigns a set
of privacy permissions to types of private data. Permission environment \De
is intended to be used for conformance
against a privacy policy. 

Following the need to combine $\De$ environments, for extracting the interface
of parallel processes we extend operator $\uplus$, previously defined for
sets of permissions,  to permission environments:
\[
\begin{array}{rcl}
	\De_1 \uplus \De_2 =
	\set{\prv: \pol{\perm}_1 \uplus \pol{\perm}_2 \setbar \prv: \pol{\perm}_1 \in \De_1 ,  \prv: \pol{\perm}_2 \in \De_2}
	{\ \cup\  \De_1\backslash\De_2\ \cup\ \De_2\backslash\De_1}
\end{array}
\]
At the level of permission environments $\De$, operator $\uplus$ is in fact the union of
the two environments with the exception of common private types where
the associated permission sets are combined with the $\uplus$ operator.


\subsection{A type system for values and expressions}

We present the typing rules for values and expressions.
Depending on what kind of values and expressions processes
use we can derive how a process handles private data.

We use a typing judgement for values $v$ and a typing judgement
for the matching expression \match{v_1}{v_2}:
\[
	\Ga \proves v: T \hastype \De
	\qquad \qquad \qquad
	\Ga \proves \match{v_1}{v_2} \hastype \De
\]
Before we define the typing rules
we present the auxiliary function \identify{\id}:
\[
	\identify{\id} = \set{\idp}
	\qquad \qquad
	\identify{x} = \set{\idp}
	\qquad \qquad
	\identify{\hid} = \es
\]
The \identify{\ii} function is used
to derive the \idp permission out
of an \id; a visible \id maps to
the \idp permission, while a hidden
identity, \hid, maps to no permissions.

\begin{figure}[t]
\begin{mathpar}
		\inferrule*[left=\TName] {
		}{
			\Ga \cat u: T \proves u: T \hastype \es
		}
		\and
		\inferrule*[left=\TPd] {
		}{
			\Ga \cat \pdata{\ii}{\con}: \privatet{\prv}{\g} \proves \pdata{\ii}{\con}: \privatet{\prv}{\g} \hastype \{\prv: \identify{\ii}\}
		}
		\and
		\inferrule*[left=\TCons] {
		}{
			\Ga \cat \con : T \proves \con : T\hastype \es
		}
		\and

		\inferrule*[left=\TId] {
			\Ga \proves \pdata{\id}{\con_1} : \privatet{\prv_1}{\g} \hastype \De_1
			\and
			\Ga \proves \pdata{\hid}{\con_2} : \privatet{\prv_2}{\g} \hastype \De_2
		}{
			\Ga \proves \match{\con_1}{\con_2} \hastype \De_1 \uplus \De_2 \uplus \{\prv_1: \set{\idf{\prv_2}}\}
		}
		\and
		\inferrule*[left=\TUse] {
			\ii \not= \hid
			\quad
			\Ga \cat \con_2: \purposet{\prp}{\g} \proves \pdata{\ii}{\con_1}: \privatet{\prv}{\g} \hastype \De
		}{
			\Ga \cat \con_2: \purposet{\prp}{\g} \proves \match{\con_1}{\con_2} \hastype \De \uplus \{\prv: \set{\use{\prp}}\}
		}
		\and
		\inferrule*[left=\TEqP] {
			\Ga \proves \pdata{\id_1}{\con_1} : \privatet{\prv}{\g} \hastype \De_1
			\and
			\Ga \proves \pdata{\id_2}{\con_2} : \privatet{\prv}{\g} \hastype \De_2
		}{
			\Ga \proves \match{\pdata{\id_1}{\con_1}}{\pdata{\id_2}{\con_2}} \hastype \De_1 \uplus \De_2
		}
		\and
		\inferrule*[left=\TEqA] {
			\Ga \proves \pdata{\hid}{\con_1} : \privatet{\prv}{\g} \hastype \De_1
			\and
			\Ga \proves \pdata{\hid}{\con_2} : \privatet{\prv}{\g} \hastype \De_2
		}{
			\Ga \proves \match{\pdata{\hid}{\con_1}}{\pdata{\hid}{\con_2}} \hastype \De_1 \uplus \De_2
		}
\end{mathpar}
\caption{Typing rules for values and expressions}
\label{fig:types_value}
\end{figure}

\figref{types_value} defines the typing rules
for values and expressions.
Rules \TName, \TCons, and \TPd type, respectively, names $u$, constants $\con$,
and private data \pdata{\ii}{\con} with respect to type environment \Ga.
Rule \TPd is also used to check whether a process has access to
the identity of the private data via the definition of $\identify{\ii}$.
Rule for identification \TId types a matching operation: the key
idea is that through matching between data whose identity is known
and data whose identity is unknown, an identification may be performed.
For example, if we let:
\[
	\Ga = \pdata{\john}{\mathsf{dna}_1} : \privatet{\patientd}{\mathsf{DNA}} \cat \pdata{\hid}{\mathsf{dna}_2}: \privatet{\crime}{\mathsf{DNA}}
\]
then a \forensics system defined as $\group{\forensics}{\ifelse{\mathsf{dna}_1}{\mathsf{dna}_2}{P}{Q}}$
by performing the comparison ${\mathsf{dna}_1}={\mathsf{dna}_2}$, may identify that
the DNA obtained at a crime scene is identical to \john's DNA and thus perform an identification
for a crime investigation. Thus, the type system, and rule \TId in particular, will deduce that
\[
	\Ga \proves {\mathsf{dna}_1}={\mathsf{dna}_2} \hastype \patientd : \idf{\crime}
\]
This situation requires that the \forensics process has permission to identify
based on the private data of type \patientd.


Rule \TUse defines private data usage.
We assume that the usage of private data is always reduced to
a name matching operation of private data over constant
data. For example assume:
\[
	\Ga = \pdata{\john}{\mathsf{dna}_1}: \privatet{\patientd}{\mathsf{DNA}}, \mathsf{dna}_2: \purposet{\diagnosis}{\mathsf{DNA}}
\]
Then a doctor defined as $\group{\Doctor}{\ifelse{\mathsf{dna}_1}{\mathsf{dna}_2}{P}{Q}}$, may use \john's \patientd for the purpose
of performing a \diagnosis. In particular, rule \TUse allows us to deduce that
\[
	\Ga \proves  {\mathsf{dna}_1}={\mathsf{dna}_2}\hastype \patientd : \use{\diagnosis} 
\]

Finally, rules \TEqP and \TEqA perform type matching of private data against each other
and anonymised data against each other. We take the stance that such comparisons
do not exercise any permissions though, we believe, that this is a question for
further investigation.

\subsection{A type system for process terms}

Types for processes rely on a linear environment $\La$ and a store environment $Z$:
\begin{eqnarray*}
	\La	&\bnfis&	\La \cat r \bnfbar \es\\
    Z &  \bnfis&    Z\cat \ptuple{\ii}{\prv} \bnfbar \es
\end{eqnarray*}
Environment $\La$ accumulates the references to stores that are present in a process
and its being linear captures that we cannot use the same reference in more than
one store.
Thus,~$\La_1,\La_2$ is defined
only when $\La_1$ and $\La_2$ are disjoint.
In turn, environment $Z$ contains the identifiers whose private
data is stored in a system along with the respective private type of the
stored data.
Thus, typing judgements have the form
\[
	\Ga; \La;Z \proves P \hastype \De 
\]
which essentially states that given a type environment $\Ga$, a reference environment $\La$
and a store environment, $Z$, process $P$ is well typed and
produces a permission environment \De. 

%
The typing rules for processes are defined in \figref{types_processes}.
Rule \TInact states that the inactive process produces an empty permission
environment and rule \TSt  that a store process produces a permission
environment which contains the $\str$ permission in addition to any
further permissions associated with the stored private data.

\newcommand{\aggrt}[1]{ {\color{red} #1} }

\begin{figure}
	\begin{mathpar}
		\inferrule*[left=\TInact] {
		}{
			\Ga; \es; Z \proves \inact \hastype \es
		}
		\and
		{
		\inferrule*[left=\TSt] {
			\Ga \proves r: \gtype{\G}{\privatet{\prv}{\g}} \hastype \es
			\and
			\Ga \proves \pdata{\ii}{\con}: \privatet{\prv}{\g} \hastype \De'
		}{
			\Ga; r; \ptuple{\ii}{\prv} \proves \store{r}{\ii}{\con} \hastype \prv: \set{\str}\uplus\De'
		}
		}
		\and
		\inferrule*[left=\TOut] {
				\begin{array}{c}
					\Ga; \La; Z \proves P \hastype \De_1
					\\
					\Ga \proves u: \gtype{\G}{T} \hastype \es
					\quad
					\Ga \proves v: T \hastype \De_2
				\end{array}
		}{
			\Ga; \La; Z \proves \out{u}{v} P \hastype \De_1 \uplus \De_2 \uplus \UDe{T}
		}
		\and
		\inferrule*[left=\TInp] {
			\begin{array}{c}
				\Ga; \La; Z \proves P \hastype \De_1
				\\
				\Ga \proves u: \gtype{\G}{T} \hastype \es
				\quad
				\Ga \proves k: T \hastype \De_2
			\end{array}
		}{
			\Ga\backslash k; \La\backslash k; Z \proves \inp{u}{k} P \hastype \De_1 \uplus \De_2 \uplus \RDe{T}
		}
		\and
		\inferrule*[left=\TRes] {
			\Ga; \La; Z \proves P \hastype \De
		}{
			\Ga\backslash\set{n}; \La\backslash\set{n}; Z \proves \newn{n} P \hastype \De
		}
		\and
		{
		\inferrule*[left=\TRepl] {
			\begin{array}{c}
				\Ga; \es; Z \proves P \hastype \De
				\\
                \Delta' = \{\prv: \set{\aggr} \setbar \ptuple{\ii}{\prv} \in Z\}
			\end{array}
		}{
			\Ga; \es; Z \proves \repl P \hastype \ReplDe \uplus \Delta'
		}
		}
		\and
		\inferrule*[left=\TPar] {
			\begin{array}{c}
				\Ga; \La_i; Z_i \proves P_i \hastype \De_i
				\quad
				i \in \set{1, 2}
				\\
				\begin{array}{rcl}
				\Delta &=& \set{\prv: \set{\aggr} \setbar\\
												& & \ptuple{\ii}{\prv} \in Z_i \land \ptuple{\ii'}{\prv'} \in Z_j \land
												[\ii = \ii'\lor \ii = x\lor \ii'= x ]\land \set{i,j} = \{1,2\}}
				\end{array}
			\end{array}
		}{
			\Ga; \La_1 \cat \La_2; Z_1 \cat Z_2 \proves P_1\Par P_2\hastype\De_1 \uplus \De_2 \uplus \De
		}
%
		\and
		\inferrule*[left=\TIf] {
			\Ga; \La_i; Z_i \proves P_i \hastype \De_i
			\and
			i \in \set{1, 2}
			\and
			\Ga \proves \match{v_1}{v_2} \hastype \De
		}{
			\Ga; \La_1 \cat \La_2; Z_1 \cat Z_2 \proves \ifelse{v_1}{v_2}{P_1}{P_2} \hastype \De \uplus \De_1 \uplus \De_2 
		}
	\end{mathpar}
\caption{Typing rules for processes}
\label{fig:types_processes}
\end{figure}

Rule \TOut types the output-prefixed
process:  If environment $\Ga;\La;Z$ produces $\Delta_1$ as a permission interface of $P$,
$u$ and $v$ have compatible types according to $\Ga$, and $v$ produces a permission
interface $\Delta_2$,
we conclude that the process $\out{u}{v} P$ produces
an interface where the combination of interfaces $\Delta_1$ and
$\Delta_2$ is extended with the permissions $\UDe{T}$, where
\begin{eqnarray*}
	\UDe{T} &=&
	\left\{
		\begin{array}{lcl}
			\prv: \set{\up}			&\text{ if }&	T = \privatet{\prv}{\g}\\
			\prv: \set{\diss{\G}{1}}	&\text{ if }&	T = \gtype{\G}{\privatet{\prv}{\g}}\\
			\prv: \es				&&				\text{otherwise}
		\end{array}
	\right.
\end{eqnarray*}
That is,  (i) if $T$
is private type \privatet{\prv}{\g} then $\UDe{T}$ is the permission
$\prv:\up$ since the process is writing
an object of type \privatet{\prv}{\g}, (ii) if $T=\gtype{\G}{ \privatet{\prv}{\g}}$ then $\UDe{T}$
is the permission $\diss{\G}{1}$ since the process is  disclosing once a link to private data via a channel of group $\G$,
and (iii) the empty set of permissions otherwise.

Rule \TInp is similar, except that the permission interface generated contains
the permissions exercised by process $P$ by the input value $k$ along
with permissions $\RDe{T}$, where set  $\RDe{T}$ is defined by:
\begin{eqnarray*}
	\RDe{T} &=&
	\left\{
		\begin{array}{lcl}
			\prv: \set{\rd}	&\text{ if }&	T = \privatet{\prv}{\g}\\
			\prv: \set{\rf}	&\text{ if }&	T = \gtype{\G}{\privatet{\prv}{\g}}\\
			\prv: \es		&&				\text{otherwise}
		\end{array}
	\right.
\end{eqnarray*}
That is  $\RDe{T}$ contains (i) permission
$\rd$ if $T$ is private type \privatet{\prv}{\g} since the process is reading
an object of type \privatet{\prv}{\g}, (ii) permission $\rf$ if
$T=\gtype{\G}{ \privatet{\prv}{\g}}$ since in this case the process is reading
a reference to a private object and
(iii) the empty set otherwise.

For name restriction, \TRes specifies that if $P$ type checks
within an environment $\Ga;\La$, then $(\nu n) P$
type checks in an environment $\Ga \backslash\set{n};\La\backslash\set{n};Z$, i.e.~without name $n$.
Moving on to rule \TRepl we have that
if a process $P$ produces an interface $\Delta$ then  $\repl{P}$ produces
an interface $\ReplDe\uplus \Delta'$, where
\begin{eqnarray*}
	\ReplDe	&=&\set{\perm \setbar \perm \in \De , \perm \not= \diss{\G}{\lambda}}\\
			&\cup&		\set{\diss{\G}{\infinite} \setbar \diss{\G}{\lambda} \in \De}
\end{eqnarray*}
In words,  (i) $\ReplDe$ is such that if a type
is disclosed $\lambda >0$ times in $\Delta$ then it is disclosed for an unlimited number of times
in $\ReplDe$ and (ii) $\De'$ contains the permission \aggr for all stores of type $\prv$ in
$P$. This is because the store replication may allow a system component
to store multiple private data for the same individual.
In turn, rules \TPar uses the $\uplus$ operator to compose the process
interfaces of $P_1$ and $P_2$ while additionally including any
aggregation being exercised by the two systems. Such aggregation takes place if
the parallel components have the capacity of storing more than one piece of information for the
same individual. This is the case either if they possess more than one store for the same identifier,
or if they have more than one store and one of the stores has not yet been instantiated
(i.e. the identity field is a variable),
thus, has the capacity of user-based data aggregation.

Finally, rule \TIf extracts the permission interface of a conditional process
as the collection of the permissions exercised by each of the component processes as well as those
exercised by the condition. Note that in this and the previous rule, $\Lambda_1$ and $\Lambda_2$
are assumed to be disjoint, and $Z_1,Z_2$ represent the concatenation of the two environments.

\subsection{A type system for system terms}

In order to produce a permission interface for systems, we employ the following typing
environment which extends  the typing environment $\Delta$ with information regarding
the groups being active while executing permissions for different types of private
data. Specifically, we write:
\begin{eqnarray*}
	\Int	&\bnfis&	\Int \cat \prv: \Inthier \bnfbar \es 
\end{eqnarray*}
Thus, $\Inthier$ is a flat hierarchy  of the form
$\hiert{\G_1}{\hiert{\G_2}{ \ldots \hiert{\G_n}{\pol{\perm}} \ldots}}$
associating a sequence of groups with a set of permissions, and $\Int$ is a
collection of such associations for different private types. Structure \Inthier is called
interface hierarchy.
The typing judgement for systems  becomes:
\[
	\Ga; \La \proves S \hastype \Int
\]
and captures that system $S$ is well-typed in type environment $\Ga$ and
linear store environment $\La$ and produces the interface $\Int$ which
records the group memberships of all components of $S$ as well
as the permissions exercised by each of the components.  \figref{types_systems}
presents the associated typing rules. The first rule employed
at the system level is rule \TGr
according to which, if $P$ produces a typing $\De$, then system $\group{\G}{P}$
produces the $\Int$-interface where group $\G$  is applied to all components of $\De$.
This captures that there exists a component in the system belonging to group $G$ and
exercising permissions as defined by $\Delta$. For rule \TSGr, we have that if
$S$ produces a typing interface $\Int$
then process $\group{\G}{S}$  produces
a new interface where all components of $\Int$ are extended by adding group
$\G$ to the group membership of all components.
Next, for the parallel composition of systems, rule \TSPar,
concatenates the system interfaces of $S_1$ and $S_2$, thus collecting the
permissions exercised by each of components of both of $S_1$ and $S_2$. Finally,
for name restriction, \TSRes specifies that if $S$ type checks
within an environment $\Ga;\La$, then $(\nu n) S$
type checks in environment $\Ga\backslash\set{n};\La\backslash\set{n}$.

\begin{figure}
	\begin{mathpar}
		\inferrule*[left=\TGr] {
			\begin{array}{c}
				\Ga; \La;Z \proves P \hastype \De
				\\
				 \Int = \{\prv: \hiert{\G}{\pol{\perm}} \mid \prv: \pol{\perm} \in \De\}
			\end{array}
		}{
			\Ga; \La \proves \group{\G}{P} \hastype \Int
		}
		\and
		\inferrule*[left=\TSGr] {
			\begin{array}{c}
				\Ga; \La \proves S \hastype \Int
				\\
				\Int' =\{\prv: \hiert{\G}{\Inthier}\mid \prv: \Inthier \in \Int\}
			\end{array}
		}{
			\Ga; \La \proves \group{\G}{S} \hastype \Int'
		}
		\and
		\inferrule*[left=\TSPar] {
			\Ga; \La_i \proves S_i \hastype \Int_i
			\quad
			i \in \set{1, 2}
		}{
			\Ga; \La_1, \La_2 \proves S_1 \Spar S_2 \hastype \Int_1 \cat \Int_2
		}
		\and
		\inferrule*[left=\TSRes] {
			\Ga; \La \proves S \hastype \Int
		}{
			\Ga\backslash\set{n}; \La\backslash\set{n} \proves \newn{n} S \hastype \Int
		}
	\end{mathpar}
\caption{Typing rules for systems}
\label{fig:types_systems}
\end{figure}

\begin{exa}
	As an example we type the \Lab system from \secref{intro_groups}:
	\[
		\group{\Lab}{\inp{b}{w} \inp{w}{\pdata{x}{y}} \inp{r}{\pdata{\hid}{z}} \ifelse{y}{z}{ \out{c}{w} \inact }{\inact}}\\
	\]
	Consider also a type environment:
	\begin{eqnarray*}
		\Ga	&=&	r: \gtype{\Police}{\privatet{\crime}{\dna}} \cat
				\pdata{\hid}{z}: \privatet{\crime}{\dna} \cat\
				\\
			&&	w: \gtype{\Hospital}{\privatet{\patientd}{\dna}} \cat
				\pdata{x}{y}: \privatet{\patientd}{\dna}
				\\
			&&	b: \gtype{\Hospital}{\gtype{\Hospital}{\privatet{\crime}{\dna}}} \cat
				c: \gtype{\Police}{\gtype{\Hospital}{\privatet{\crime}{\dna}}}
	\end{eqnarray*}
	In system \Lab,
	name $r$ and variable $w$ have a reference type and are used to access
	private data of type \crime and \patientd, respectively.
	{Name $r$ will be used to receive data from the police database.}
	Placeholders \pdata{\hid}{z} and \pdata{x}{y} 
	instantiate private data of type \crime and \patientd, respectively.
	Name $b$ is used to substitute reference variable $w$ and
	name $c$ is used to send reference $w$ to the group \Police.

	To type the matching expression we use typing rule \TId:
	\[
		\tree[\TId] {
			\Ga \proves \pdata{x}{y}: \privatet{\patientd}{\dna} \hastype \patientd: \set{\idp}
			\\
			\Ga \proves \pdata{\hid}{z}: \privatet{\crime}{\dna} \hastype \crime: \es
		}{
			\Gamma \proves \match{y}{z} \hastype \crime: \set{\idf{\patientd}} \cat \patientd: \set{\idp}
		}
	\]
	In the matching expression above we have private data placeholder $\pdata{\hid}{z}$
	to be of type \privatet{\crime}{\dna} (private data that is anonymised) and
	placeholder \pdata{x}{y} of type \privatet{\patientd}{\dna} with $x$ being
	known (permission \idp).
	Both variables
	$z$ and $y$ represent the \dna type and can be matched against each other.
	The fact that $z$ comes from an anonymous private data and $y$ data has
	a known data subject implies the process of identification of \crime
	data against \patientd, which is expressed as mapping \crime: \set{\idf{\patientd}}
	in the permission typing.

	The conditional term is typed using typing rule \TIf:
	\[
		\tree[\TIf]{
			\Ga \proves \match{y}{z} \hastype \crime: \set{\idf{\patientd}} \cat \patientd: \set{\idp}
			\\
			\Ga; \es; \es \proves \out{c}{w} \inact \hastype \patientd: \set{\diss{\Police}{1}}
		}{
			\Ga; \es; \es \proves \ifelse{y}{z}{ \out{c}{w} \inact }{\inact} \hastype \De
		}
	\]
	with $\De = \crime: \set{\idf{\patientd}} \cat \patientd: \set{\idp \cat \diss{\Police}{1}}$.
	The conditional term types the two branches of the term and
	identifies the kind of data processing in the matching operator.
	The branch $\out{b}{w} \inact$ results in the permission
	environment \patientd: \set{\diss{\Police}{1}} because
	reference $w$ is being disseminated.
	The resulting permission environment is the $\uplus$-union of
	the three former permission environments.

	The whole process of the system is typed as:
	\[
		\tree {
			\Ga; \es; \es \proves \inp{w}{\pdata{x}{y}} \inp{r}{\pdata{\hid}{z}} \ifelse{y}{z}{ \out{c}{w} \inact }{\inact} \hastype \De_1
		}{
			\Ga; \es; \es \proves \inp{b}{w} \inp{w}{\pdata{x}{y}} \inp{r}{\pdata{\hid}{z}} \ifelse{y}{z}{ \out{c}{w} \inact }{\inact} \hastype \De_1 \uplus \De_2
		}
	\]
	with
	\begin{eqnarray*}
	\De_1& \!\!= &\!\!\patientd{:} \set{\rd \cat \idp \cat \diss{\Police}{1}} \cat \crime{:} \set{\rd \cat \idf{\patientd}}\\
	\De_2 &\!\!= &\!\!\patientd{:} \set{\rf}
	\end{eqnarray*}
	Finally, we use typing rule \TSGr to type the whole system:
	\[
		\tree[\TSGr] {
			\Ga; \es; \es \proves \inp{b}{w} \inp{w}{\pdata{x}{y}} \inp{r}{\pdata{\hid}{z}} \ifelse{y}{z}{ \out{c}{w} \inact }{\inact} \hastype \De_1 \uplus \De_2
		}{
			\Ga; \es \proves \group{\Lab}{\inp{b}{w} \inp{w}{\pdata{x}{y}} \inp{r}{\pdata{\hid}{z}} \ifelse{y}{z}{ \out{c}{w} \inact }{\inact}} \hastype \Int
		}
	\]
	with
	\[
			\begin{array}{rcl}
				\Int	&=&	\patientd: \hiert{\Lab}{\rf \cat \rd \cat \idp \cat \diss{\Police}{1}} \cat\\
						&&	\crime: \hiert{\Lab}{\rd \cat \idf{\patientd}}
			\end{array}
	\]
	Rule \TSGr constructs a permission interface \Int
	using the typing environment from the containing process.
	In this case the rule constructs two interface hierarchies
	on group hierarchy \Lab using the permissions for
	\patientd and \crime in the process's typing environment.
\end{exa}

\section{Soundness and Safety}

In this section we establish soundness and safety results for
our framework.
The first result we establish is the result
of type preservation where we show that the type system
is preserved by the labelled transition system.
Type preservation is then used to prove a safety property
of a system with respect to privacy policies.
A basic notion we introduce in this section is the notion
of policy satisfaction which we show to be preserved
by the semantics. Finally, we show that a safe system
would never reduce to an error system through a safety theorem.

\subsection{Type Preservation}

We first establish standard weakening and strengthening lemmas.

\begin{lem}[Weakening]\label{lem:weaken}
	$ $
	\begin{itemize}
		\item	Let $P$ be a process with $\Ga; \La; Z \proves P \hastype \De$
				and $v \notin \fn{P}$. Then $\Ga \cat v: T; \La; Z \proves P \hastype \De$.
				
		\item	Let $P$ be a process with $\Ga; \La; Z \proves P \hastype \De$
		and $\fv{k} \cap \fv{P} = \es$. Then $\Ga \cat k: T; \La; Z \proves P \hastype \De$.

		\item	Let $S$ be a system with $\Ga; \La \proves S \hastype \Int$ and
				$v \notin \fn{S}$. Then $\Ga \cat v: T; \La \proves S  \hastype \Int$.
		\item	Let $S$ be a system with $\Ga; \La \proves S \hastype \Int$ and
				$\fv{k} \cap \fv{S}$ = \es. Then $\Ga \cat k: T; \La \proves S \hastype \Int$.
	\end{itemize}
\end{lem}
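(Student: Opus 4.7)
The plan is to prove all four parts simultaneously by induction on the structure of the typing derivation, after first establishing an auxiliary weakening result at the level of the value/expression judgements $\Ga \proves v: T \hastype \De$ and $\Ga \proves \match{v_1}{v_2} \hastype \De$. That auxiliary result is immediate: the rules \TName, \TPd, \TCons, \TId, \TUse, \TEqP, \TEqA only inspect the bindings in \Ga that match the values under consideration, so adding an unrelated binding $v:T$ or $k:T$ (with $v\notin\fn{v_i}$ or $\fv{k}\cap\fv{v_i}=\es$) does not disturb any premise, and the output \De is unchanged.

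With that in hand, I would treat the process-level statements and system-level statements by mutual induction on the typing derivation. The base cases \TInact and \TSt are direct: \TInact's premise is vacuous, and \TSt invokes the value judgement where the auxiliary lemma applies. For the prefix rules \TOut and \TInp, the induction hypothesis gives a typing for the continuation $P$ in the weakened environment, while the value judgement on $u$, $v$, or $k$ is handled by the auxiliary lemma; the generated interface is unchanged. Rule \TIf is analogous, combining the weakened branches and the weakened matching judgement. For \TPar and \TRepl, the induction hypothesis is applied in each conjunct and the combination via $\uplus$ and the aggregation computation over $Z_1, Z_2$ are unaffected since $Z$ is disjoint from the added binding. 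At the system level, \TSPar, \TGr, and \TSGr are handled identically, reconstructing the interface $\Int$ (or $\Int'$) after invoking the induction hypothesis.

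The only delicate cases are the binder rules \TRes, \TInp, and \TSRes, which remove a name from \Ga (and \La). Here I need to be careful that the added binding $v:T$ (or $k:T$) does not clash with the restricted or input-bound name $n$ (respectively $k'$). Under the hypothesis $v\notin\fn{P}$ (respectively $\fv{k}\cap\fv{P}=\es$), we may assume via $\alpha$-conversion, using the \Alpha rule, that the bound name is chosen distinct from $v$ (respectively from every variable in $k$); then the $\backslash\set{n}$ operation on the enlarged environment still produces the correct environment, and the induction hypothesis on the body applies cleanly.

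The main obstacle I expect is bookkeeping in these binder cases, in particular verifying that the $Z$ and $\La$ components, which thread through the process derivation and are sensitive to which variables are free, interact correctly with the added binding. Since $\La$ only tracks reference names and $Z$ only tracks identifier/private-type pairs that stem from store processes, a fresh binding $v:T$ (which is not a store reference appearing in $P$ by hypothesis) or $k:T$ (whose variables are fresh for $P$) leaves both $\La$ and $Z$ untouched, and the output interface $\De$ or $\Int$ is identical; this observation is the key point that makes each inductive step go through.
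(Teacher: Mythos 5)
Your proposal is correct and follows essentially the same route as the paper, which simply states that the result is a standard induction on the typing derivation of $P$ (resp.\ $S$); your auxiliary weakening for the value/expression judgements, the treatment of the binder rules up to $\alpha$-conversion, and the observation that $\La$ and $Z$ are untouched are exactly the routine details that "standard induction" elides. No gap to report.
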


\begin{proof}
	The proof for Part 1 and 2 (resp., Part 3 and 4) is a standard induction on the typing derivation of $P$
	(resp., $S$).
\end{proof}

\begin{lem}[Strengthening]\label{lem:strength}
	$ $
	\begin{itemize}
		\item	Let $P$ be a process with $\Ga \cat v: T; \La; Z \proves P \hastype \De$
				and $v \notin \fn{P}$. Then $\Ga; \La; Z \proves P \hastype \De$.

		\item	Let $P$ be a process with $\Ga \cat k: T; \La; Z \proves P \hastype \De$
		and $\fv{k} \cap \fv{P} = \es$. Then $\Ga; \La; Z \proves P \hastype \De$.

		\item	Let $S$ be a system with $\Ga \cat v: T; \La \proves S \hastype \Int$ and
				$v \notin \fn{S}$. Then $\Ga; \La \proves S \hastype \Int$.
				
		\item	Let $S$ be a system with $\Ga \cat k: T; \La \proves S \hastype \Int$ and $\fv{k} \cap \fv{P} = \es$. Then $\Ga; \La \proves S \hastype \Int$.
	\end{itemize}
\end{lem}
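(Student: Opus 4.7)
The plan is to prove each of the four parts by straightforward induction on the typing derivation, mirroring the Weakening lemma but in the opposite direction. Since the typing rules for processes and systems are syntax-directed, each case of the induction corresponds to one typing rule, and the hypothesis that the extra binding does not appear free in $P$ (resp.\ $S$) ensures that the binding is never consulted in building the derivation.

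For Part 1, I would proceed by induction on the derivation of $\Ga \cat v: T; \La; Z \proves P \hastype \De$. The base cases \TInact and \TSt are immediate: \TInact does not consult $\Ga$ at all on names, and \TSt only consults those names and identifiers appearing in $\store{r}{\ii}{\con}$, none of which equals $v$ since $v \notin \fn{P}$. For the prefix rules \TOut and \TInp, the premises type a subject $u$, an object $t$ (or placeholder $k$), and a continuation. Since $v \notin \fn{\out{u}{t} P'}$ (resp.\ $\inp{u}{k} P'$), we have $v \ne u$, $v \notin \fn{t}$, and $v \notin \fn{P'}$, so we may invoke the induction hypothesis on the continuation and the auxiliary value-typing judgments (which require their own trivial strengthening on \TName, \TPd, \TCons and the matching rules). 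The structural rules \TPar, \TRes, \TRepl and \TIf follow the same pattern: split the free-name hypothesis across the premises and apply the induction hypothesis. Part 2 is analogous, but the hypothesis $\fv{k} \cap \fv{P} = \es$ is used to ensure that no premise consults the variable entry $k: T$ in $\Ga$.

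Parts 3 and 4 are proved analogously by induction on the typing derivation of $S$. The only system-level rules are \TGr, \TSGr, \TSPar and \TSRes. For \TGr, the premise is a process judgment, so we reduce to Part 1 or Part 2. For \TSGr and \TSPar, we apply the induction hypothesis to the subsystem(s), using that $v \notin \fn{S}$ implies $v \notin \fn{S_i}$ for the parallel case. For \TSRes, we simply note that $v \ne n$ (since $v \notin \fn{\newn{n} S}$ implies either $v \ne n$ or the binding is irrelevant), and the environment manipulation $\Ga \backslash \set{n}$ commutes with removing $v$.

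No step is really an obstacle here; the lemma is entirely routine. The only mild subtlety is coordinating the strengthening on the value-typing judgments $\Ga \proves v: T \hastype \De$ and $\Ga \proves \match{v_1}{v_2} \hastype \De$ used as premises of \TOut, \TInp and \TIf, which requires a preliminary (trivial) observation that those rules only consult the portions of $\Ga$ corresponding to the free names/variables of the value or expression being typed. With that observation in hand, the induction goes through uniformly for all four parts.
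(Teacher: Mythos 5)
Your proposal is correct and follows essentially the same route as the paper, which proves Strengthening by a standard induction on the typing derivation of $P$ (Parts 1--2) and of $S$ (Parts 3--4). The additional care you take with the auxiliary value- and expression-typing judgments is a reasonable elaboration of the same routine argument.
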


\begin{proof}
	The proof for Part 1 and 2 (resp., Part 3 and 4) is a standard induction on the typing derivation of $P$
	(resp., $S$).
\end{proof}

We may now establish that typing is preserved under substitution.
\begin{lem}[Substitution]
	\label{lem:sl}

	Let $v_2 \notin \domain{\Ga}$.
	\begin{itemize}
		\item
			If $\Ga \cat v_1: T; \La; Z \proves P \hastype \De$
	    	then one of the following holds:
			\begin{itemize}
				\item	$\Ga \cat v_2: T; \La; Z \proves P \subst{v_2}{v_1} \hastype \De$
				\item	$\Ga \cat v_2: T; (\La\backslash{v_1})\cat v_2; Z \proves P \subst{v_2}{v_1} \hastype \De$
				\item	$\Ga \cat v_2: T; \La; (Z \backslash\ptuple{\ii}{\prv})\cat\ptuple{\id}{\prv} \proves P \subst{v_2}{v_1} \hastype \De$,
						if $v_1 = \pdata{\ii}{\delta} \land v_2 = \pdata{\id}{c} \land T = \privatet{\prv}{\g}$
			\end{itemize}
		\item
		If $\Ga \cat k: T; \La; Z \proves P \hastype \De$
		then one of the following holds:
		\begin{itemize}
			\item	$\Ga \cat v_2: T; \La; Z \proves P \subst{v_2}{k} \hastype \De$
			\item	$\Ga \cat v_2: T; \La; (Z \backslash\ptuple{x}{\prv})\cat\ptuple{\id}{\prv} \proves P \subst{v_2}{v_1} \hastype \De$,
			if $k = \pdata{x}{y} \land v_2 = \pdata{\id}{c} \land T = \privatet{\prv}{\g}$
		\end{itemize}

	\end{itemize}
\end{lem}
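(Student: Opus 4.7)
The plan is to proceed by induction on the derivation of the typing judgement $\Ga \cat v_1 : T; \La; Z \proves P \hastype \De$ for Part~1, and analogously on $\Ga \cat k : T; \La; Z \proves P \hastype \De$ for Part~2. The three alternative conclusions in Part~1 correspond exactly to the three kinds of entities that $v_1$ may represent: (i) a channel or constant, whose substitution leaves $\La$ and $Z$ untouched; (ii) a reference name $r$, whose substitution requires replacing $v_1$ by $v_2$ inside $\La$; and (iii) a private-data variable $\pdata{\ii}{\con}$ being replaced by a concrete $\pdata{\id}{c}$, in which case the store environment $Z$ must be updated to reflect that the identity component of any associated store is now known.

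The base cases are \TInact (which is vacuous since $\De = \es$ and no free identifiers are involved) and \TName, \TCons, \TPd (for which the claim reduces to a direct check using the definitions of the substitution function and of the permission-generating functions $\identify{\cdot}$). For the case \TSt, the derivation gives $\Ga \proves r : \gtype{\G}{\privatet{\prv}{\g}} \hastype \es$ and $\Ga \proves \pdata{\ii}{\con} : \privatet{\prv}{\g} \hastype \De'$; substitution of a reference triggers alternative~(ii), while substitution of a private-data placeholder $\pdata{\ii}{\con}$ by $\pdata{\id}{c}$ triggers alternative~(iii), where $Z$ is updated from $\ptuple{\ii}{\prv}$ to $\ptuple{\id}{\prv}$. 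The inductive cases \TOut, \TInp, \TRes, \TRepl and \TIf are standard: each appeals to the induction hypothesis on the sub-derivations and then re-applies the original rule, invoking Lemmas~\ref{lem:weaken} and~\ref{lem:strength} to adjust typing environments when binders cross the substitution (renaming bound names away from $v_2$ where needed). For \TInp, extra care is needed because the placeholder $k$ in the prefix may coincidentally shadow $v_1$; an $\alpha$-conversion step prevents this.

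The main obstacle lies in case \TPar combined with the third alternative. Substituting a variable identity by a concrete one in a parallel composition may change the set $\Delta$ of aggregation permissions: a store of the form $\ptuple{x}{\prv}$ in one component that previously contributed $\aggr$ ``by default'' against any other $\ptuple{\ii'}{\prv}$ becomes, after substitution, $\ptuple{\id}{\prv}$, matching only stores with the same $\id$ or with variable identity. I need to show that the aggregation set computed from $Z_1[v_2/v_1] \cat Z_2[v_2/v_1]$ still coincides with the $\Delta$ that was generated before substitution, given that the $\prv:\aggr$ entries are determined by the disjunction $\ii = \ii' \lor \ii = x \lor \ii' = x$, which is preserved under any substitution of a variable by either another variable or a concrete identity. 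A short case analysis on whether $v_1$ is a variable identity, concrete identity, or neither completes this step. The \TRepl case is handled similarly, using the fact that $\Delta'$ in that rule depends only on the multiset of private types present in $Z$, which is stable under substitution.

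Part~2 follows the same inductive skeleton, but with only two alternatives since a placeholder $k$ never occurs in $\La$ (placeholders cannot be reference names by the syntax of \figref{syntax}). The critical subcase is again \TSt together with \TPar, where substituting a placeholder $\pdata{x}{y}$ by $\pdata{\id}{c}$ updates $Z$ from $\ptuple{x}{\prv}$ to $\ptuple{\id}{\prv}$, and the aggregation-set invariant must be re-verified exactly as above.
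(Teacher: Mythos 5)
Your overall strategy is the same as the paper's: an induction over the syntax-directed typing derivation (the paper phrases it as induction on the structure of $P$), with the store process as the central case, closed by strengthening and weakening (\lemref{strength}, \lemref{weaken}) followed by re-application of \TSt. In fact the published proof writes out only the three store subcases and leaves every congruence case implicit, so up to that point your proposal is, if anything, more explicit than the paper.

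The problem is the step you yourself flag as the main obstacle. Your handling of \TPar under the third alternative rests on the claim that the side condition $[\ii = \ii' \lor \ii = x \lor \ii' = x]$ generating the $\aggr$ entries is ``preserved under any substitution of a variable by either another variable or a concrete identity''. It is not. Take $P = \store{r}{x}{y} \Par \store{r'}{\id'}{c'}$, typed with $Z_1 = \ptuple{x}{\prv}$ and $Z_2 = \ptuple{\id'}{\prv'}$: before substitution the variable clause of the disjunction fires, so \TPar puts $\prv:\set{\aggr}$ and $\prv':\set{\aggr}$ into $\De$; after substituting $\pdata{\id}{c}$ for $\pdata{x}{y}$ with $\id \neq \id'$, both identities are concrete and distinct, the side condition fails, and the substituted process types with an interface that lacks those $\aggr$ entries. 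So exact preservation of $\De$ --- which is what the lemma statement and your proof promise --- cannot be pushed through \TPar; what a case analysis actually yields there is a $\De'$ with $\De' \lenv \De$. That weaker conclusion is all that is needed where the lemma is applied (inside \thmref{sr} the final claim is anyway only up to $\lenv$, cf.\ \propref{lenv}), so the fix is to relax this case's conclusion (and the statement) to $\lenv$ rather than equality; the paper's own proof never meets the issue because it details only the store cases. Your remark about \TRepl, by contrast, is fine: its $\De'$ depends only on the private types occurring in $Z$, which the substitution leaves unchanged.
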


\begin{proof}
	The proof is an induction on the structure of the syntax of $P$.
	We give the interesting case of substituting a value inside a store.
	\begin{itemize}
		\item
			Let $P = \store{r}{\id}{c}$ and
				$\Ga \cat \pdata{\id}{c}: \privatet{\prv}{\g}; r; \id: \prv \proves P \hastype \De \cat \prv: \set{\str}$.
			By the premise of typing rule \TSt used to type the process $P$, we get:
			\begin{eqnarray*}
				\Ga \cat \pdata{\id}{c}: \privatet{\prv}{\g} \proves \pdata{\id}{c}: \privatet{\prv}{\g} \hastype \De'
			\end{eqnarray*}
			\sloppy
			By applying strengthening (Lemma~\ref{lem:strength}) to $\Ga \cat \pdata{\id}{c}: \privatet{\prv}{\g}$ and then
			weakening (Lemma~\ref{lem:weaken}) we get:
			\begin{eqnarray}
				\label{proof:subst1}
				\Ga \cat \pdata{\id}{c'}: \privatet{\prv}{\g} \proves \pdata{\id}{c'} : \privatet{\prv}{\g} \hastype \De'
			\end{eqnarray}
			Assume now that:
			\begin{eqnarray*}
				P \subst{\pdata{\id}{c'}}{\pdata{\id}{c}} = \store{r}{\id}{c'}
			\end{eqnarray*}
			If we use result~\ref{proof:subst1} we can apply the
			typing rule \TSt to the latter substitution and obtain:
			\begin{eqnarray*}
				\Ga \cat \pdata{\id}{c'}: \privatet{\prv}{\g}; u; \id: \prv \proves P \subst{\pdata{\id}{c'}}{\pdata{\id}{c}} \hastype \De \cat \prv: \set{\str}
			\end{eqnarray*}
			as required.

		\item
			Let
				$P = \store{r}{x}{y}$ and
				$\Ga \cat \pdata{x}{y}: \privatet{\prv}{\g}; r; x: \prv \proves P \hastype \De \cat \prv: \set{\str}$.
			By the premise of the typing rule \TSt used to type the process $P$, we get:
			\begin{eqnarray*}
				\Ga \cat \pdata{x}{y}: \privatet{\prv}{\g} \proves \pdata{x}{y}: \privatet{\prv}{\g} \hastype \De'
			\end{eqnarray*}
			By applying strengthening (Lemma~\ref{lem:strength}) to $\Ga \cat \pdata{x}{y}: \privatet{\prv}{\g}$ and then
			weakening (Lemma~\ref{lem:weaken}) we obtain:
			\begin{eqnarray}
				\label{proof:subst2}
				\Ga \cat \pdata{\id}{c}: \privatet{\prv}{\g} \proves \pdata{\id}{c} : \privatet{\prv}{\g} \hastype \De'
			\end{eqnarray}
			Assume now that:
			\begin{eqnarray*}
				P \subst{\pdata{\id}{c}}{\pdata{x}{y}} = \store{r}{\id}{c}
			\end{eqnarray*}
			If we use result~\ref{proof:subst2} we can apply the
			typing rule \TSt to the latter substitution and get:
			\begin{eqnarray*}
				\Ga \cat \pdata{\id}{c}: \privatet{\prv}{\g}; r; \id: \prv \proves P \subst{\pdata{\id}{c}}{\pdata{x}{y}} \hastype \De \cat \prv: \set{\str}
			\end{eqnarray*}
			as required.

		\item
			Let
				$P = \store{r}{\ii}{\con}$ and
				$\Ga \cat r: \gtype{\G}{\privatet{\prv}{\g}}; u; \ii: \prv \proves P \hastype \De \cat \prv: \set{\str}$.
			By the premise of the typing rule \TSt used to type the process $P$, we get:
			\begin{eqnarray*}
				\Ga \cat r: \gtype{\G}{\privatet{\prv}{\g}} \proves r: \gtype{\G}{\privatet{\prv}{\g}} \hastype \es
			\end{eqnarray*}
			By applying strengthening (Lemma~\ref{lem:strength}) to $\Ga \cat r: \gtype{\G}{\privatet{\prv}{\g}}$ and then
			weakening (Lemma~\ref{lem:weaken}) we obtain:
			\begin{eqnarray}
				\label{proof:subst3}
				\Ga \cat r': \gtype{\G}{\privatet{\prv}{\g}} \proves u': \gtype{\G}{\privatet{\prv}{\g}} \hastype \es
			\end{eqnarray}
			Assume now that:
			\begin{eqnarray*}
				P \subst{r'}{r} = \store{r'}{\ii}{\con}
			\end{eqnarray*}
			If we use result~\ref{proof:subst3} we can apply the
			typing rule \TSt to the latter substitution and get:
			\begin{eqnarray*}
				\Ga \cat r': \gtype{\G}{\privatet{\prv}{\g}}; r'; \ii: \prv \proves P \subst{r'}{r} \hastype \De \cat \prv: \set{\str}
			\end{eqnarray*}
			as required.\qedhere
	\end{itemize}
\end{proof}

\noindent We next define a relation \lenv
over permissions and the structures of permissions
to capture the changes in the permission environment
and the interface environment during action execution.
\begin{defi}[$\Int_1 \lenv \Int_2$]
	\label{def:lenv}
	We define the relation \lenv over permissions (\perm),
	permissions environments (\De), and interface environments (\Int) as follows:
	\begin{enumerate}
		\item
			$\pol{\perm}_1 \lenv \pol{\perm}_2$, whenever
			\begin{enumerate}[label=(\roman*)]
				\item
				\sloppy
					for all $\perm \in \pol{\perm}_1$ such that $\perm \not= \diss{\G}{\lambda}$ and $\perm \not= \use{\pol{\prp}}$  we have that $\perm \in \pol{\perm}_2$.
				\item
					for all $\perm \in \pol{\perm}_1$ such that $\perm = \diss{\G}{\lambda_1}$ we have that $\diss{\G}{\lambda_2} \in \pol{\perm}_2$ and $ (\lambda_1 \leq \lambda_2 \lor \lambda_2 = *)$.
				\item
					for all $\perm \in \pol{\perm}_1$ such that $ \perm = \use{\pol{\prp}_1}$ we have that $ \use{\pol{\prp}_2} \in \pol{\perm}_2$ and $\pol{\prp}_1 \subseteq \pol{\prp}_2$.
			\end{enumerate}

		\item
			$\De_1 \lenv \De_2$, whenever
			for all $\prv$ such that $\prv: \pol{\perm}_1 \in \De_1$ we have that $\prv: \pol{\perm}_2 \in \De_2$ and $\pol{\perm}_1 \lenv \pol{\perm}_2$.

		\item
			$\hiert{\G}{\pol{\perm}_1} \lenv \hiert{\G}{\pol{\perm}_2}$ whenever $\pol{\perm}_1 \lenv \pol{\perm}_2$.

		\item
			$\hiert{\G}{\Inthier_1} \lenv \hiert{\G}{\Inthier_2}$ whenever $\Inthier_1 \lenv \Inthier_2$.
		\item	$\Int_1 \lenv \Int_2$, whenever 
				for all $\prv$ such that $\prv: \Inthier_1 \in \Int_1$ we have that $ \prv: \Inthier_2 \in \Int_2$ and $\Inthier_1 \lenv \Inthier_2$.
	\end{enumerate}
\end{defi}

The following proposition follows directly from the definition above.
\begin{prop} \label{cor:lenv}
	$ $
	\begin{itemize}
		\item	Let  $i \in \set{1, 2}$. Then $\De_i \lenv \De_1 \uplus \De_2$.
		\item	Let $\De_1 \lenv \De_2$. Then $\De_1 \uplus \De \lenv \De_2 \uplus \De$.
	\end{itemize}
\end{prop}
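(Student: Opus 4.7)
The plan is to reduce both claims to their analogues at the permission-set level $\pol{\perm}$ and then lift them to $\De$ via the definition of $\uplus$ on permission environments. For the first part, I would first prove the auxiliary fact $\pol{\perm}_i \lenv \pol{\perm}_1 \uplus \pol{\perm}_2$ by inspecting the three clauses of \defref{lenv}(1). Clause~(i) is immediate, since every non-$\diss{\cdot}{\cdot}$ and non-$\use{\cdot}$ permission that lies in $\pol{\perm}_i$ is copied into $\pol{\perm}_1 \uplus \pol{\perm}_2$ by the first line of the definition of $\uplus$ on permissions. For clause~(ii), if $\diss{\G}{\lambda_i} \in \pol{\perm}_i$, then either there is no matching $\diss{\G}{\cdot}$ in the other set, and the permission is preserved verbatim, or there is $\diss{\G}{\lambda_j} \in \pol{\perm}_j$, giving $\diss{\G}{\lambda_i \oplus \lambda_j}$ in the result; in either case $\lambda_i \leq \lambda_i \oplus \lambda_j$ follows from the rules defining $\oplus$ (with $\oplus$ yielding $\infinite$ whenever one operand is $\infinite$). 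Clause~(iii) for $\use{\prp}$ is also immediate from set inclusion.

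Then, to lift to $\De$, I would unfold the definition of $\uplus$ on permission environments. For any $\prv{:}\pol{\perm}_i \in \De_i$ either $\prv$ occurs only in $\De_i$, and so $\prv{:}\pol{\perm}_i \in \De_1 \uplus \De_2$ directly, or $\prv{:}\pol{\perm}_j \in \De_j$ as well, in which case $\prv{:}\pol{\perm}_1 \uplus \pol{\perm}_2 \in \De_1 \uplus \De_2$. In both cases the auxiliary fact above yields $\pol{\perm}_i \lenv (\De_1 \uplus \De_2)(\prv)$, which is exactly \defref{lenv}(2).

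For the second part, the key auxiliary lemma is the monotonicity of $\uplus$ at the permission-set level: if $\pol{\perm}_1 \lenv \pol{\perm}_2$ then $\pol{\perm}_1 \uplus \pol{\perm} \lenv \pol{\perm}_2 \uplus \pol{\perm}$. Again I would check this clause by clause. Clauses~(i) and~(iii) are routine. The only delicate case is~(ii): if $\diss{\G}{\lambda_1} \in \pol{\perm}_1 \uplus \pol{\perm}$, this arises either from $\pol{\perm}_1$ alone, from $\pol{\perm}$ alone, or from a combined entry $\diss{\G}{\mu_1 \oplus \mu}$. Using $\pol{\perm}_1 \lenv \pol{\perm}_2$, the corresponding disclosure in $\pol{\perm}_2$ has parameter $\mu_2$ with $\mu_1 \leq \mu_2$ (or $\mu_2 = \infinite$), and then the monotonicity of $\oplus$ in each argument guarantees that the resulting parameter in $\pol{\perm}_2 \uplus \pol{\perm}$ dominates $\lambda_1$. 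Lifting to $\De$ is then a direct case analysis on whether $\prv$ lies in $\De_1 \setminus \De$, $\De \setminus \De_1$, or $\De_1 \cap \De$, using the first part to absorb the $\De$ contribution when $\prv$ appears only on one side.

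The main obstacle is the $\diss{\G}{\lambda}$ case: one has to verify monotonicity of $\oplus$ carefully and keep track of the four sub-cases of the $\uplus$ definition (both sides contain $\diss{\G}{\cdot}$, only the left, only the right, neither), since these cases may differ between $\pol{\perm}_1 \uplus \pol{\perm}$ and $\pol{\perm}_2 \uplus \pol{\perm}$ if the shape of $\pol{\perm}_1$ and $\pol{\perm}_2$ differs with respect to a given $\G$. Once these bookkeeping cases are dispatched, the lifting to $\De$ and $\Int$ is mechanical.
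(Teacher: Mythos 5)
Your proof is correct and follows the only natural route, which is the one the paper takes: the paper simply declares the proposition ``immediate from the definition of $\uplus$ and $\lenv$'', and your clause-by-clause verification at the permission-set level (including the $\oplus$-monotonicity for $\diss{\G}{\lambda}$ and the case analysis on where $\prv$ occurs) is exactly the detail that assertion leaves implicit.
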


\begin{proof}
	The proof is immediate from the definition of $\uplus$ and \lenv.
\end{proof}

We may now state type preservation by action execution of a system.
Type preservation employs the \lenv operator.
Specifically, when a process or a system executes an action
we expect a name to maintain or lose its interface
capabilities.
%
\begin{thm}[Type Preservation]
	\label{thm:sr}
	Consider a process $P$ and a system $S$.
	\begin{itemize}
		\item
			If $\Ga; \La; Z \proves P \hastype \De$ and $P \trans{\ell} P'$ then
			\begin{itemize}
				\item	if $\ell \not= \actinp{n}{v}$ then $\Ga; \La; Z \proves P' \hastype \De'$
						and $\De' \lenv \De$.
				\item	if  $\ell = \actinp{n}{v}$ then for some $\La'$ and $Z'$, we have $\Ga \cat v: T; \La; Z \proves P' \hastype \De'$
						and $\De' \lenv \De$.
			\end{itemize}

		\item
			If $\Ga; \La \proves S \hastype \Int$ and $S \trans{\ell} S'$
			\begin{itemize}
				\item	if $\ell \not= \actinp{n}{v}$
						then $\Ga; \La \proves S' \hastype \Int'$
						and $\Int' \lenv \Int$.

				\item	if  $\ell = \actinp{n}{v}$ then for some $\La'$, we have
						$\Ga \cat v:T; \La' \proves S' \hastype \Int'$
						and $\Int' \lenv \Int$.
			\end{itemize}
	\end{itemize}
\end{thm}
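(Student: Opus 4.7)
The plan is to prove both parts by simultaneous induction on the derivations of $P \trans{\ell} P'$ and $S \trans{\ell} S'$, with case analysis on the last rule applied. The core tools are the Weakening and Strengthening lemmas (\lemref{weaken}, \lemref{strength}), the Substitution Lemma (\lemref{sl}), and the monotonicity of $\uplus$ with respect to \lenv{} (\propref{lenv}).

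For the axioms of the LTS, the strategy is inversion of the applicable typing rule. In rule \LOut{} applied to $\out{u}{v}P$, inversion of \TOut{} yields $\De = \De_1 \uplus \De_2 \uplus \UDe{T}$ where $\De_1$ is the interface of the continuation, so $\De_1 \lenv \De$ by \propref{lenv}. Case \LInp{} invokes \lemref{sl} to substitute the received term $v$ for the placeholder $k$ inside $P$, extending $\Ga$ with $v : T$ and adjusting $\La$ or $Z$ as dictated by the three clauses of that lemma: adding $v$ to $\La$ when it is a reference, or extending $Z$ when the placeholder $\pdata{x}{y}$ is instantiated to a concrete $\pdata{\id}{c}$. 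The store axioms \LSOut{} and \LSInp{} are dispatched similarly: \LSOut{} leaves the store unchanged, while \LSInp{} again uses \lemref{sl} to refresh the stored identity and data components.

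The structural inductive cases \LRes, \LScp, \LLPar, \LRPar, \LTr, \LFl, \LRepl, and \Alpha, together with their system-level counterparts \SPGr, \SGr, \SLPar, \SRPar, \SRes, and \SScp, proceed by applying the induction hypothesis to the transition premises and then reapplying the corresponding typing rule (with \TGr{} or \TSGr{} wrapping at the system level). \propref{lenv} guarantees that a decrease on one subterm's interface is preserved under $\uplus$-combination with unchanged siblings. The \LRepl{} case requires an additional observation: since \TRepl{} lifts every $\diss{\G}{\lambda}$ to $\diss{\G}{\infinite}$ and already records an \aggr{} permission for each duplicated store, the interface assigned to $\repl P$ still dominates that of $P' \Par \repl P$ under \lenv.

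The principal obstacle is the \LTau{} (and \STau) case, particularly the subcase in which communication on dual reference endpoints involves identity concealment: the sender emits $\pdata{\id}{c}$ while the receiver expects $\pdata{\hid}{x}$. Here one must invoke the second clause of \lemref{sl} with the anonymised value $\pdata{\hid}{c}$ and verify that the receiver's derived interface correctly omits the \idp{} permission, as enforced by $\identify{\hid} = \es$ in \TPd. One must then reconstruct the residual typing for $\newnp{\bn{\ell_1} \cup \bn{\ell_2}}{P_1' \Par P_2'}$ by applying \TPar{} (taking care that the combined $Z_1, Z_2$ produces no new \aggr{} permissions beyond those already derivable in the original parallel typing) followed by \TRes{} for any extruded names. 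Modulo these bookkeeping obligations, each subcase reduces to a pointwise verification against \defref{lenv}.
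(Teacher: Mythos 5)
Your proposal is correct and follows essentially the same route as the paper's own proof: induction on the derivation of the transition, inversion of the typing rules in the prefix and store axioms, the Substitution Lemma (with Weakening/Strengthening) for inputs, \propref{lenv} for monotonicity under $\uplus$, the \LTau case reconstructed via \TPar and \TRes, and the system part reduced to the process part through \TGr/\TSGr. In fact you spell out some subcases (the anonymising duality in \LTau, the \LRepl/\aggr bookkeeping) that the paper dismisses as ``easier argumentations.''
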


\begin{proof}

	The proof is by induction on the inference tree for \trans{\ell}. We begin with the proof of the first part.

	Base Step:
	\begin{itemize}
		\item	Case: $\out{n}{v} P \trans{\actout{n}{v}} P$ and $\Gamma; \Lambda; Z \proves \out{n}{v} P \hastype \Delta$.
				By the premise of typing rule \TOut we get that $\Gamma; \Lambda; Z \proves P \hastype \Delta'$ with
				$\Gamma \proves v \hastype \Delta'$ and $\Delta = \Delta' \uplus \UDe{T}$. The result is then immediate from \propref{lenv}.

		\item	Case: $\inp{n}{k} P \trans{\actinp{n}{v}} P \subst{v}{k}$ and $\Gamma; \Lambda; Z \proves \inp{n}{k} P \hastype \Delta$.
				By the premise of typing rule \TInp we get that $\Gamma; \Lambda; Z \proves P \hastype \Delta_1$
				and $\Gamma \proves v \hastype \Delta_2$ with $\Delta = \Delta_1 \uplus \Delta_2 \uplus \RDe{T}$.
				From weakening (Lemma~\ref{lem:weaken}) and the Substitution \lemref{sl} we get that for some $\La'$  and $Z'$,
				$\Gamma \cat v:T; \Lambda'; Z' \proves P \subst{v}{k} \hastype \Delta_1$.
				The result is then immediate from \propref{lenv}.

		\item	Case: \store{r}{\id}{c} \trans{\actout{\dual{r}}{c}} \store{r}{\id}{c}.
				The case is trivial.

		\item	Case: \store{r}{\id}{c} \trans{\actinp{\dual{r}}{c'}} \store{r}{\id}{c'}
				and $\Ga; r; \id: \prv \proves \store{r}{\id}{c} \hastype \Delta$. The result is immediate by the Substitution \lemref{sl}.

	\end{itemize}

	Induction Step:
	\begin{itemize}
		\item	The interesting case for the induction step follows
				the structure of the $\tau$ interaction (LTS rule \LTau).
				$P \Par Q \trans{\tau} \newnp{\pol{m}}{P' \Par Q'}$ and
				$\Gamma; \Lambda_P \cat \Lambda_Q; Z_P \cat Z_Q \proves P \Par Q \hastype \Delta_P \uplus \Delta_Q \uplus \Delta$.
				By the premise of LTS rule \LTau and typing rule \TPar we get:
				\begin{eqnarray*}
					P &\trans{\ell_1}& P' \text{ and }
					\Gamma; \Lambda_P; Z_P \proves P \hastype \Delta_P\\
					Q &\trans{\ell_2}& Q' \text{ and }
					\Gamma; \Lambda_Q; Z_Q \proves Q \hastype \Delta_Q\\
					\pol{m} &=& \bn{\ell_1} \cup \bn{\ell_2}
				\end{eqnarray*}
				By the induction hypothesis we know that
				\begin{eqnarray*}
					\Gamma; \Lambda_P; Z_P \proves P' \hastype \Delta_P' \text{ and } \Delta_P' \lenv \Delta_P\\
					\Gamma; \Lambda_Q; Z_Q \proves Q' \hastype \Delta_Q' \text{ and } \Delta_Q' \lenv \Delta_Q
				\end{eqnarray*}
				We apply type rule \TPar together with type rule \TRes to get:
				\begin{eqnarray*}
					\Gamma \backslash \pol{m}; (\Lambda_P \cat \Lambda_Q) \backslash \pol{m}; Z_P \cat Z_Q \proves \newnp{\pol{m}}{P' \Par Q'} \hastype \Delta_P' \uplus \Delta_Q' \uplus \Delta
				\end{eqnarray*}
				The result is then immediate from \propref{lenv}.

		\item	The rest of the induction step cases follow easier argumentations.
	\end{itemize}

	We continue with the proof of the second part of the Theorem. The interesting case is the case where $S = \group{\G}{P}$.
	The rest of the cases are congruence cases that follow argumentations similar
	to those in the previous part.
	\begin{itemize}
		\item	Case:	$\group{\G}{P} \trans{\ell} \group{\G}{P'}$ with
						$\Gamma; \Lambda \proves \group{G}{P} \hastype \Int$.
						By the premise of LTS rule \SPGr and typing rule \TGr we get:
						\begin{eqnarray}
							P &\trans{\ell}& P' \nonumber \\
							\Gamma; \Lambda; Z \proves P \hastype \Delta &\text{ with }&
							\forall \prv: \pol{\perm} \in \Delta \iff \prv: \hiert{G}{\pol{\perm}} \in \Int
							\label{thm:sr1}
						\end{eqnarray}
						Part 1 of this theorem ensures that
						for some $\Gamma' = \es$ if $\ell \not= \actinp{n}{v}$ and $\Gamma' = v:T$ if
						$\ell = \actinp{n}{v}$ and some $\Lambda'$ and $Z'$,
						$\Gamma \cat \Gamma'; \Lambda'; Z' \proves P' \hastype \Delta'$ with $\Delta' \lenv \Delta$.
						If we apply typing rule \TGr we get:
						\begin{eqnarray*}
							\Gamma \cat \Gamma'; \Lambda' \proves \group{G}{P'} \hastype \Int' &\text{ with }&
							\forall \prv: \pol{\perm} \in \Delta' \iff \prv: \hiert{G}{\pol{\perm}} \in \Int'
						\end{eqnarray*}
						We can derive that $\Int' \lenv \Int$ from equation~\ref{thm:sr1} and the fact that $\Delta' \lenv \Delta$.

		\item	The rest of the cases are similar.\qedhere
	\end{itemize}
\end{proof}

\subsection{Policy Satisfaction}

We define the notion of {\em policy satisfaction}.
Policy satisfaction uses the type system as the bridge to establish a
relation between a system and a privacy policy. Intuitively a system
satisfies a policy if its interface environment satisfies a policy.

Working towards policy satisfaction, we first define a satisfaction
relation between policies \Policy and permission interfaces \Int.
\begin{defi}
	We define two satisfaction relations, denoted \sat, as:
	\begin{itemize}
		\item
			Consider a policy hierarchy \Hpol 
			and an interface hierarchy \Inthier.
			We say that \Inthier \emph{satisfies} \Hpol, written $\Hpol \sat \Inthier$, whenever:
			\[
				\tree {
						\exists k \in J \suchthat \Hpol_k = \G': \pol{\perm'}[\Hpol_i]_{i \in I}
						\\
						\G': \pol{\perm'} \uplus \pol{\perm} [\Hpol_i]_{i \in I} \sat \Inthier
				}{
					\G: \pol{\perm} [\Hpol_j]_{j \in J} \sat \hiert{\G}{\Inthier}
				}
				\qquad \qquad
				\tree {
					\pol{\perm}_2 \lenv \pol{\perm}_1
				}{
					\G: \pol{\perm_1} [] \sat \hiert{\G}{\pol{\perm_2}}
				}
			\]

		\item
			Consider a policy \Policy and an interface $\Int$.
			$\Int$ \emph{satisfies} \Policy, written $\Policy \sat \Int$, whenever:
			\[
				\tree {
					\Hpol \sat \Inthier
				}{
					\prv \gg \Hpol; \Policy \sat \prv : \Inthier
				}
				\qquad \qquad \qquad
				\tree {
					\Hpol \Vdash \Inthier \qquad \Policy \sat \Int
				}{
					\prv \gg \Hpol; \Policy \Vdash \prv : \Inthier; \Int
				}
			\]
	\end{itemize}
\end{defi}\medskip

\noindent According to the definition of $\Hpol \sat \Inthier$,
an interface hierarchy $\Inthier$ satisfies a policy hierarchy $\Hpol$
whenever:
i) the \Inthier group hierarchy is included in group hierarchy $\Hpol$;
and ii) the permissions of the interface hierarchy are included in
the union of the permissions of the corresponding group hierarchy in \Hpol.
%
Similarly, a $\Int$-interface satisfies a policy, $\Policy \sat \Int$,
if for each component $\prv: \Inthier$ of $\Int$, there exists a component $\prv \gg \Hpol$ of
$\Policy$ such that $\Inthier$ satisfies $\Hpol$. At this point, note that we assume
$;$ to be a commutative operator.
A direct corollary of the definition is
the preservation of the $\lenv$ operator over the
satisfiability relation:

\begin{prop}
	\label{cor:respect_policy}
	If $\Policy \sat \Int_1$ and $\Int_2 \lenv \Int_1$ then
	$\Policy \sat \Int_2$.
\end{prop}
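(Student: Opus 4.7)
The plan is to proceed by reducing the statement to a lemma at the level of a single private-data type, and then combining with a straightforward componentwise argument at the top level. More precisely, the statement
\[
	\Policy \sat \Int_1 \text{ and } \Int_2 \lenv \Int_1 \implies \Policy \sat \Int_2
\]
is essentially a monotonicity property, so I would first prove the analogous property for the inner judgement: if $\Hpol \sat \Inthier_1$ and $\Inthier_2 \lenv \Inthier_1$ then $\Hpol \sat \Inthier_2$. Once this is in place, the full result is obtained by inducting on the structure of $\Int_2$ (equivalently, on the length of the list of components $\prv : \Inthier$ in $\Int_2$): for each $\prv : \Inthier_2' \in \Int_2$, Definition~\ref{def:lenv}(5) gives a matching $\prv : \Inthier_1' \in \Int_1$ with $\Inthier_2' \lenv \Inthier_1'$; by inspection of the two satisfaction rules at the $\Policy$-level, $\Int_1$ forces the presence of $\prv \gg \Hpol$ in $\Policy$ with $\Hpol \sat \Inthier_1'$; and the inner lemma then gives $\Hpol \sat \Inthier_2'$, allowing the matching satisfaction rule for $\Int_2$ to be reassembled.

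For the inner lemma I would use induction on $\Inthier_2$, exploiting the fact that $\Inthier_2 \lenv \Inthier_1$ forces the two flat hierarchies to share exactly the same sequence of groups (clauses 3 and 4 of Definition~\ref{def:lenv}) and to differ only in the bottom-level permission set. The base case is $\Inthier_2 = \hiert{\G}{\pol{\perm}_2'}$ and $\Inthier_1 = \hiert{\G}{\pol{\perm}_1'}$ with $\pol{\perm}_2' \lenv \pol{\perm}_1'$; the derivation of $\Hpol \sat \Inthier_1$ must end with the second rule of the inner satisfaction definition, having unfolded $\Hpol$ along the group sequence to some $\G : \pol{\perm}[]$ with $\pol{\perm}_1' \lenv \pol{\perm}$, and then transitivity of $\lenv$ on permission sets gives $\pol{\perm}_2' \lenv \pol{\perm}$, which is exactly what is needed to reapply the same rule for $\Inthier_2$. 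In the inductive case, the derivation of $\Hpol \sat \hiert{\G}{\Inthier_1'}$ exposes a sub-hierarchy $\Hpol_k = \G':\pol{\perm}'[\Hpol_i]_{i \in I}$ of $\Hpol$ and a premise $\G':\pol{\perm}'\uplus\pol{\perm}[\Hpol_i]_{i\in I} \sat \Inthier_1'$; the induction hypothesis applied to $\Inthier_2' \lenv \Inthier_1'$ then yields the corresponding premise for $\Inthier_2'$, and reapplying the rule gives $\Hpol \sat \hiert{\G}{\Inthier_2'}$.

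The only nontrivial auxiliary fact required is transitivity of $\lenv$ at the level of permission sets: if $\pol{\perm}_3 \lenv \pol{\perm}_2$ and $\pol{\perm}_2 \lenv \pol{\perm}_1$ then $\pol{\perm}_3 \lenv \pol{\perm}_1$. This reduces to three routine clauses following Definition~\ref{def:lenv}(1): ordinary permissions are chased through two set inclusions; for $\diss{\G}{\lambda}$ one uses transitivity of $\leq$ on $\mathbb{N}$ together with the absorbing behaviour of $\infinite$; and for $\use{\pol{\prp}}$ one uses transitivity of $\subseteq$ on purpose sets. Extension to hierarchies and to the full interface environment is by a direct structural induction mirroring clauses (2)--(5) of Definition~\ref{def:lenv}.

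The main obstacle I anticipate is bookkeeping rather than mathematical depth: one has to keep track of how the accumulated permission set $\pol{\perm}' \uplus \pol{\perm}$ evolves along the group path of the flat hierarchy and align it with the corresponding inductive step in $\Inthier$, while simultaneously ensuring that the transitivity argument at the leaf still applies after this accumulation. Using Proposition~\ref{cor:lenv} (monotonicity of $\uplus$ with respect to $\lenv$) at this step prevents the accumulation from interfering with the $\lenv$ inequality at the leaves, which closes the induction cleanly.
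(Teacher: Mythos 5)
Your proof is correct: the paper gives no explicit proof of this proposition, presenting it as a direct corollary of the definitions of $\sat$ and $\lenv$, and your argument --- transitivity of $\lenv$ on permission sets, an induction along the (necessarily identical) group spine of the flat hierarchies for the inner judgement $\Hpol \sat \Inthier$, and a componentwise pass over the entries of $\Int_2$ at the top level --- is exactly the routine unfolding that this remark presupposes. The only nitpick is that the appeal to \propref{lenv} in your closing step is not actually needed, since the inner induction hypothesis, being quantified over all policy hierarchies, already absorbs the permissions accumulated by $\uplus$ along the descent; it is, however, harmless.
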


Policy satisfaction is a main definition for this paper as it specifies
the situation where a system respects a privacy policy.
The formalisation of the intuition above follows in the next definition.
\begin{defi}[Policy Satisfaction]
	\label{def:policy_sat}
	Consider $\Policy: \diamond$, a type environment \Ga and system $S$.
	We say that $S$ satisfies $\Policy$, written
	$\Policy; \Ga \proves S$, whenever
	there exist $ \La$ and $ \Int $ such that $\Ga; \La \proves S \hastype \Int$ and $\Policy \sat \Int$.
\end{defi}

{The main idea is to check whether a system satisfies a given  privacy
policy under an environment that maps channels, constants, and private data
to channel types, constant types, and private data types, respectively.}

\begin{cor}
	\label{cor:sr}
	If $\Policy; \Ga \proves S$
	and $S \trans{\ell} S'$ then $\Policy; \Ga \proves S'$
\end{cor}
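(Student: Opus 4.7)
The plan is to chain the two results the authors have just established: Type Preservation (\thmref{sr}) and the monotonicity property \propref{respect_policy}, after unfolding the definition of policy satisfaction.

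First I would unfold \defref{policy_sat}: from $\Policy; \Ga \proves S$ I obtain some $\La$ and $\Int$ with $\Ga; \La \proves S \hastype \Int$ and $\Policy \sat \Int$. Next I would apply the system part of \thmref{sr} to the transition $S \trans{\ell} S'$. When $\ell \neq \actinp{n}{v}$, this immediately yields $\La'$ and $\Int'$ such that $\Ga; \La' \proves S' \hastype \Int'$ and $\Int' \lenv \Int$. Then \propref{respect_policy} applied to $\Policy \sat \Int$ and $\Int' \lenv \Int$ gives $\Policy \sat \Int'$, and \defref{policy_sat} concludes $\Policy; \Ga \proves S'$.

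The only case that requires care is $\ell = \actinp{n}{v}$, since \thmref{sr} delivers a typing $\Ga \cat v: T; \La' \proves S' \hastype \Int'$ in an \emph{extended} type environment. I would handle this by observing that policy satisfaction is a relation purely between $\Policy$ and the interface $\Int$, so $\Policy \sat \Int'$ still follows from $\Int' \lenv \Int$ by \propref{respect_policy}, and the definitional witness for $\Policy; \Ga \proves S'$ is provided by the extended environment $\Ga \cat v:T$. Equivalently, one can read the corollary's statement up to the standard convention that the ambient typing context is implicitly enlarged to accommodate the fresh name carried by an input label.

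The main obstacle is essentially just this bookkeeping around the input case; no new technical ingredient is needed. The proof is otherwise a direct two-step application of \thmref{sr} followed by \propref{respect_policy}, and should fit in a few lines.
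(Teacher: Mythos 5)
Your proposal is correct and follows essentially the same route as the paper, whose proof is exactly the two-step combination of \thmref{sr} and \propref{respect_policy} via \defref{policy_sat}. Your explicit bookkeeping for the input-label case (where the environment grows to $\Ga \cat v:T$) is a point the paper's one-line proof silently glosses over, and your resolution of it is the intended reading.
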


\begin{proof}
	The proof is direct from \corref{respect_policy}
	and \thmref{sr}.
\end{proof}

\subsection{System Safety}
We consider a system to be safe with respect to a privacy policy
when the system cannot reduce to an error system, i.e., a state where the
privacy policy is violated. Towards
this direction we need to define a class of error systems
that is parametrised on privacy policies.

We begin with an auxiliary definition of
$\countLink{P, \Ga, \gtype{\G}{\privatet{\prv}{\g}}}$ that
counts the number of output prefixes of the form
$\out{u}{t} Q$, where $t: {\privatet{\prv}{\g}} \in \Ga$, within a process $P$.
The induction definition of function \countLink{P, \Ga, \gtype{\G}{\privatet{\prv}{\g}}} is as follows.
\begin{defi}[Count References]
	We define function \countLink{P, \Ga, \gtype{\G}{\privatet{\prv}{\g}}} as:
	\begin{itemize}
		\item	$\countLink{\inact, \Ga, \gtype{\G}{\privatet{\prv}{\g}}} = 0$
		\item	$\countLink{\out{u}{t} P, \Ga, \gtype{\G}{\privatet{\prv}{\g}}} = 1 + \countLink{P, \Ga, \gtype{\G}{\privatet{\prv}{\g}}}$ if $t: \privatet{\prv}{\g} \in \Ga$
		\item	$\countLink{\out{u}{t} P, \Ga, \gtype{\G}{\privatet{\prv}{\g}}} = \countLink{P, \Ga, \gtype{\G}{\privatet{\prv}{\g}}}$ if $t: \privatet{\prv}{\g} \not\in \Ga$
		\item	$\countLink{\store{u}{\ii}{\con}, \Ga, \gtype{\G}{\privatet{\prv}{\g}}} = 0$

		\item	$\countLink{\inp{u}{k} P, \Ga, \gtype{\G}{\privatet{\prv}{\g}}} = \countLink{P, \Ga, \gtype{\G}{\privatet{\prv}{\g}}}$
		\item	$\countLink{\newn{n} P, \Ga, \gtype{\G}{\privatet{\prv}{\g}}} = \countLink{P, \Ga, \gtype{\G}{\privatet{\prv}{\g}}}$
		\item	$\countLink{\newn{n} P, \Ga, \gtype{\G}{\privatet{\prv}{\g}}} = \countLink{P, \Ga, \gtype{\G}{\privatet{\prv}{\g}}}$
		\item	$\countLink{P \Par Q, \Ga, \gtype{\G}{\privatet{\prv}{\g}}} = \countLink{P, \Ga, \gtype{\G}{\privatet{\prv}{\g}}} + \countLink{Q, \Ga, \gtype{\G}{\privatet{\prv}{\g}}}$
		\item	$\countLink{\ifelse{u}{u'}{P}{Q}, \Ga, \gtype{\G}{\privatet{\prv}{\g}}}$\\
  $= \countLink{P, \Ga, \gtype{\G}{\privatet{\prv}{\g}}} + \countLink{Q, \Ga, \gtype{\G}{\privatet{\prv}{\g}}}$
	\end{itemize}
\end{defi}


\noindent  We may now define the notion of an \emph{error system} with respect
to a privacy policy. Intuitively an error system is a system that can do an action
not permitted by the privacy policy.
The error system clarifies the satisfiability relation
between policies and processes.
\begin{defi}[Error System]
	\label{def:error}
	Assume $\pol{\G} = \G_1 \cat \dots \cat \G_n$, and
	consider a policy $\Policy$, an environment $\Ga$, and a system
	\begin{eqnarray*}
		\System \equiv
		\group{\G_1}{\newnp{\tilde{x}_1}
					{\group{\G_2}{\newnp{\tilde{x}_2}
						{\dots  (\group{\G_n}{\newnp{\tilde{x}_n}{P \Par Q}}  \Spar S_n  }\dots) } }
			} \Spar S_1
	\end{eqnarray*}
	System \System is an \emph{error system} with respect to
	$\Policy$ and $\Ga$ if there exists $\prv$ such that $\Policy = \prv \gg \Hpol; \Policy'$
	and at least one of the following holds:
	\begin{enumerate}[widest=10]
		\item
			$\rd \notin \permis{\Hpol_{\pol{\G}}}$ 
			and $\exists u$
			such that $\Ga \proves u: \gtype{\G}{\privatet{\prv}{\g}} \hastype \De$ and
			$P \equiv \inp{u}{k} P'$.

		\item
			$\up \notin \permis{\Hpol_{\pol{\G}}}$
			and
			$\exists u$
			such that $\Ga \proves u: \gtype{\G}{\privatet{\prv}{\g}} \hastype \De $ and
			$P \equiv \out{u}{v} P'$.

		\item
			$\rf \notin \permis{\Hpol_{\pol{\G}}}$ and
			$\exists k$
			such that $\Ga \proves k: \gtype{\G}{\privatet{\prv}{\g}} \hastype \es$ and
			$P \equiv \inp{u}{k} P'$.

		\item
			$\diss{\G'}{\lambda} \notin \permis{\Hpol_{\pol{\G}}}$ and
			$\exists u$
			such that $\Ga \proves u: \gtype{\G}{\privatet{\prv}{\g}} \hastype \es$ and
			$P \equiv \out{u'}{u} P'$.

		\item
			$\idp \notin \permis{\Hpol_{\pol{\G}}}$ and $\exists u$
			such that $\Ga \proves u: \gtype{\G}{\privatet{\prv}{\g}} \hastype \De$ and
			$P \equiv \inp{u}{\pdata{x}{y}} P'$.

		\item
			$\str \notin \permis{\Hpol_{\pol{\G}}}$ and $\exists r$
			such that $\Ga \proves r: \gtype{\G}{\privatet{\prv}{\g}} \hastype \De$ and
			$P \equiv \store{r}{\id}{\con}$.

		\item
			$\aggr \notin \permis{\Hpol_{\pol{\G}}}$ and $\exists u$
			such that $\Ga \proves r: \gtype{\G}{\privatet{\prv}{\g}} \hastype \De$ and
			$P \equiv \store{r}{\id}{\con} \Par \store{r'}{\id}{\con'}$.

		\item
			$\use{\prp} \notin \permis{\Hpol_{\pol{\G}}}$ and
			$\exists \con, \con'$
			such that $\Ga \proves \con: \purposet{\prp}{\g}\hastype \es$,
			$\Ga \proves \pdata{\ii}{\con'}: \purposet{\prv}{\g}\hastype \De$ and
			$P \equiv \ifelse{\con'}{\con}{P_1}{P_2}$.

		\item
			$\idf{\prv'} \notin \permis{\Hpol_{\pol{\G}}}$ and
			$\exists \con, \con'$
			such that $\Ga \proves \pdata{\ii}{\con}: \privatet{\prv'}{\g}\hastype \es$,
			$\Ga \proves \pdata{\hid}{\con'}: \purposet{\prv}{\g}\hastype \De$ and
			$P \equiv \ifelse{\con'}{\con}{P_1}{P_2}$.

		\item
			$\diss{\G}{\lambda} \in \permis{\Hpol_{\pol{\G}}}$,
			$\lambda \not= \infinite$ and
			$\countLink{P, \Ga, \gtype{\G}{\privatet{\prv}{\g}}} > \lambda$.

		\item
			there exists a sub-hierarchy of $\Hpol$, $\Hpol' = \role{\G_k}{\pol{\perm}}{\Hpol_i}_{i \in I}$ with $1 \leq k \leq n$
			as well as
			$\ndiss{\kind} \in \pol{\perm}$
			and $\exists u$
			such that $\Ga \proves u: \gtype{\G'}{\gtype{\G}{\privatet{\prv}{\g}}} \hastype \es$, and 
			$P \equiv \out{u}{u'} P'$ with $\G' \notin \groups{\Hpol'}$.

	\end{enumerate}
\end{defi}\medskip

\noindent According to the definition,  
the first five error systems require that the send or the
receive prefixes inside a system do not respect \Policy.
The first system is an error because it allows a input of
private data inside a group hierarchy, where the corresponding
group hierarchy in \Policy does not include the \rd permission.
Similarly, the second error system outputs private data and at the
same time there is no \up permission in the corresponding
group hierarchy of \Policy.
Systems of clauses (3) and (4) deal with input and output of reference
name without the \rf and \diss{\G}{\lambda} permissions, respectively,
in the corresponding group hierarchy in \Policy.
The fifth error system  requires that private data can be input along
with access to its identity despite the lack of the \idp permission in \Policy.
The next error system defines a store process inside a group
hierarchy with the store permission not in the corresponding
policy hierarchy. A system is an error with respect to aggregation
if it aggregates stores of the same identity without the \aggr
permission in the corresponding policy hierarchy (clause (7)).
Private data usage is expressed via a matching construct:
A usage error system occurs when a process inside a group hierarchy
tries to match private data without the \use{\prp} permission
in the policy (clause (8)). Similarly, an identification error process
occurs when there is a matching on private data and there is
no \idf{\prv} permission defined by the policy (clause (9)).
A system is an error with respect to a policy \Policy if
the number of output prefixes to references in its definition
(counted with the $\countLink{P, \Ga, t}$ function)
are more than the $\lambda$ in the $\diss{G}{\lambda}$ permission
of the policy's hierarchy (clause (10)).
Finally, if a policy specifies that no data should be
disseminated outside  some group $\G$, then a process should
not be able to send private data links to groups
that are not contained within the hierarchy of $\G$ (clause (11)).

As expected, if a system is an error with respect to a
policy \Policy and an environment $\Ga$
then its $\Int$-interface does not satisfy policy \Policy:

\begin{lem}
	\label{lem:non_satisf}
	\label{lem:error_non_satisfiable}
	Let system $S$ be an error system with respect to a well-formed policy \Policy and a type environment $\Ga$.
	If $\Ga \proves S \hastype \Int$ then $\Policy \not\sat \Int$.
\end{lem}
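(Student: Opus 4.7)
The plan is to proceed by case analysis on the eleven clauses of \defref{error}. In each case, I will analyse the given typing derivation $\Ga \proves S \hastype \Int$ and identify a specific component $\prv : \Inthier$ of $\Int$ that either contains a permission not matched by any entry in the corresponding policy hierarchy $\Hpol_{\pol{\G}}$, or has group structure that falls outside $\groups{\Hpol}$. By the definition of $\sat$, this will suffice to conclude $\Policy \not\sat \Int$.

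To handle the machinery uniformly, I would first establish an auxiliary observation: if $S$ has the nested structure of \defref{error} and $\Ga; \La \proves S \hastype \Int$, then the typing derivation necessarily applies $\TSGr$ (or $\TGr$ at the innermost level) once for each $\G_i$ in $\pol{\G}$, so that the interface component $\prv : \Inthier$ contributed by process $P$ has group prefix exactly $\hiert{\G_1}{\hiert{\G_2}{\dots\hiert{\G_n}{\pol{\perm}}}}$ for some $\pol{\perm}$ collecting the permissions exercised by $P$ with respect to $\prv$. Then, by the definition of $\sat$ and of the flat hierarchy $\Hpol_{\pol{\G}}$, we have $\Hpol \sat \Inthier$ iff $\pol{\perm} \lenv \permis{\Hpol_{\pol{\G}}}$, so it suffices to exhibit a permission in $\pol{\perm}$ that is missing from $\permis{\Hpol_{\pol{\G}}}$.

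With this in hand, most cases become routine inspections of the process-level typing rules: in clause (1), $P \equiv \inp{u}{k} P'$ with $u : \gtype{\G}{\privatet{\prv}{\g}}$ forces rule \TInp to inject $\RDe{T} = \prv : \set{\rd}$ into $\De$, hence $\rd \in \pol{\perm}$ while $\rd \notin \permis{\Hpol_{\pol{\G}}}$; analogous arguments using \TOut, \TInp, \TPd, \TSt, \TPar, \TId and \TUse dispose of clauses (2)--(9). For clause (7) I would additionally use the fact that the premise of \TPar extracts $\aggr$ whenever two stores agree on identity or one has a variable identity. For clause (10), I would argue by induction on the structure of $P$ that the $\uplus$-accumulated $\lambda$-value of $\diss{\G}{\lambda}$ in $\De$ equals $\countLink{P, \Ga, \gtype{\G}{\privatet{\prv}{\g}}}$ (with $\repl$ sending it to $\infinite$), and since each $\out{u}{t} P'$ with $t : \privatet{\prv}{\g}$ contributes $\diss{\G}{1}$ through $\UDe{\gtype{\G}{\privatet{\prv}{\g}}}$, the accumulated $\lambda$ exceeds the policy bound, violating condition (ii) of \defref{lenv}.

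The hardest case is clause (11), concerning $\ndiss{\kind}$. Here I cannot simply point to a missing permission in the flat hierarchy, because $\ndiss{\kind}$ is a negative norm: its violation is structural rather than numerical. The output $\out{u}{u'} P'$ with $u : \gtype{\G'}{\gtype{\G}{\privatet{\prv}{\g}}}$ produces in $\Int$ a hierarchy component ending with the permission $\diss{\G'}{1}$ for $\prv$, occurring inside the nested groups $\pol{\G}$. Since $\G_k$ occurs in $\pol{\G}$ and $\ndiss{\kind} \in \pol{\perm}$ at level $\G_k$ in $\Hpol$, the well-formedness clause (3) of a policy guarantees that no descendant of $\G_k$ in $\Hpol$ grants $\diss{\G'}{\lambda}$ for any $\G' \notin \groups{\Hpol'}$. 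Consequently there is no matching sub-hierarchy of $\Hpol$ against which the interface component for $\prv$ can be discharged by the satisfaction rules, yielding $\Policy \not\sat \Int$. The main technical subtlety is aligning the ``outward'' shape of the interface produced by $\TSGr$ with the ``downward'' search performed by the satisfaction rule for $\Hpol \sat \Inthier$; for this I would formulate and prove a short auxiliary lemma stating that if $\Hpol \sat \hiert{\G_1}{\dots\hiert{\G_n}{\pol{\perm}}}$, then every $\perm \in \pol{\perm}$ appears in some node on the unique path from $\G_1$ to $\G_n$ in $\Hpol$, and use well-formedness to conclude the impossibility of $\diss{\G'}{1}$ occurring on such a path below a $\ndiss{\kind}$.
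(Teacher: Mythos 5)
Your overall strategy coincides with the paper's: the paper's (sketched) proof also proceeds per clause of \defref{error}, types the offending redex (it works out only the store clause, via \TSt and \TPar), uses \TGr/\TSGr to conclude that $\Int$ contains a component $\prv: \hiert{\G_1}{\ldots \hiert{\G_n}{\pol{\perm}}\ldots}$ with the offending permission in $\pol{\perm}$, and then argues $\pol{\perm} \not\lenv \permis{\Hpol_{\pol{\G}}}$, hence $\Hpol \not\sat \hiert{\G_1}{\ldots\hiert{\G_n}{\pol{\perm}}\ldots}$, hence $\Policy \not\sat \Int$. Your uniform auxiliary observation reducing $\Hpol \sat \Inthier$ to $\pol{\perm} \lenv \permis{\Hpol_{\pol{\G}}}$ is exactly the reduction made implicitly there, and your handling of clauses (1)--(10) fills in what the paper dismisses as ``similar'' (for clause (10) this is fine provided \countLink{\cdot} is read, as the surrounding prose indicates, as counting outputs of reference-typed objects, which are the ones contributing $\diss{\G}{1}$ under \TOut).

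The gap is in your clause (11) argument. Well-formedness condition (3) only constrains the \emph{children} sub-hierarchies of the node carrying $\ndiss{\kind}$: it forces $\G' \in \groups{\Hpol'}$ whenever $\diss{\G'}{\lambda} \in \permis{\Hpol_j}$ for some child $\Hpol_j$ of $\G_k$. It says nothing about the permission set of $\G_k$ itself, nor about the ancestor nodes $\G_1,\ldots,\G_{k-1}$ on the path, yet all of these contribute to $\permis{\Hpol_{\pol{\G}}}$, and by your own auxiliary lemma they are legitimate places for the required $\diss{\G'}{\lambda}$ to appear. So ``impossibility of $\diss{\G'}{1}$ occurring on such a path \emph{below} a $\ndiss{\kind}$'' does not yield $\Hpol \not\sat \Inthier$: a well-formed policy may grant $\diss{\G'}{\lambda}$ at the root (or at $\G_k$ itself) while a descendant $\G_k$ carries $\ndiss{\kind}$ with $\G' \notin \groups{\Hpol'}$, and then the accumulated flat permission set does contain $\diss{\G'}{\lambda}$ and the leaf comparison succeeds. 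To close the case you need an additional argument (or an explicit assumption on policies) excluding occurrences of $\diss{\G'}{\lambda}$ at or above the $\ndiss{\kind}$ node along the path; note that the paper itself never treats this clause, so this is precisely the point where the pattern you correctly extracted from the paper's representative case does not transfer as stated.
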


\begin{proof}[Sketch]
	The proof is based on the definition of error systems.
	Each error system $S$ is typed as:
	\[
		\Ga; \La \proves S \hastype \Int
	\]
	We then show that $\Policy \not\sat \Int$.

	The proofs for all error systems are similar. We give a
	typical case. Consider:
	\begin{itemize}
		\item System
\[		S=
		\group{\G_1}{\newnp{\tilde{x}_1}
					{\group{\G_2}{\newnp{\tilde{x}_2}
						{\dots  (\group{\G_n}{\newnp{\tilde{x}_n}{P \Par Q}}  \Spar S_n  }\dots) } }
			} \Spar S_1\]
				with $P \scong \store{r}{\id}{c}$;

		\item	$\pol{\G} = \G_1 \cat \dots \G_n$;

		\item	Policy $\Policy = \prv \haspolicy \Hpol; \Policy'$
				such that $\str \notin \permis{\Hpol_{\pol{G}}}$;

		\item	Type environment $\Ga$ such that $\Ga \proves r: \gtype{\G}{\privatet{\prv}{\g}} \hastype \es$.
	\end{itemize}

	We type process $P$ using typing rule \TSt:
	\[
		\tree[\TSt] {
			\Ga \proves r: \gtype{\G}{\privatet{\prv}{\g}} \hastype \es
			\quad
			\Ga \proves \pdata{\id}{c}: \privatet{\prv}{\g} \hastype \es
		}{
			\Ga; r; \ptuple{\id}{\prv} \proves \store{r}{\id}{c} \hastype \prv: \set{\str}
		}
	\]
	We apply rule \TPar to get:
	\[
		\tree[\TPar] {
			\Ga; r; \ptuple{\id}{\prv} \proves \store{r}{\id}{c} \hastype \prv: \set{\str}
			\\
			\Ga; \La; Z \proves Q \hastype \De
		}{
			\Ga; \Lambda \cat r; Z \cat \ptuple{\id}{\prv} \proves \store{r}{\id}{c} \Par Q \hastype \Delta \uplus \prv: \set{\str}
		}
	\]
	We then apply type rules \TSRes, \TGr,  \TSPar,   \TSRes, and \TSGr,   to
	get:
	\[
		\tree[\TSGr] {
			\tree[]{
				\dots
			}{
				\Ga; \La \proves \newnp{\tilde{x}_1}{\group{\G_2}{\newnp{\tilde{x}_2} {\dots  (\group{\G_n}{\newnp{\tilde{x}_n}{P \Par Q} \Spar S_n  }) \dots } } \Spar S_1} \hastype \Int'
			}
		}{
			\Ga; \La \proves S \proves \Int
		}
	\]
	From the application of rules \TGr, \TSGr we know that
	$\prv: \hiert{\G_1}{ \ldots \hiert{\G_n}{\pol{\perm}}\ldots} \in \Int$ for some $\pol{\perm}$.
	From the definition of $\uplus$ and the application of rule
	\TPar we know that $\str \in \pol{\perm}$.

	From this we can deduce:
	$\pol{\perm} \not \lenv \permis{\Hpol_{\pol{G}}} \implies
	\Hpol \not\sat \hiert{\G_1}{ \ldots \hiert{\G_n}{\pol{\perm}}\ldots} \implies
	\Policy \not\sat \Int$,
	as required.

	The remaining cases for the proof are similar.
\end{proof}

We may now conclude with a safety theorem which verifies that
the satisfiability of a policy by a typed system is preserved by the semantics.
Below we denote a (possibly empty) sequence of actions $\trans{l_1} \dots \trans{l_n}$
with $\Trans{\pol{\ell}}$

\begin{thm}[Safety]
	\label{thm:safety}
	If $\Policy; \Ga \proves S \hastype \Int$ 
	and $S \Trans{\pol{\ell}} S'$ then $S'$ is not an error with respect to policy \Policy.
\end{thm}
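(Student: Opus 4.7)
The plan is to derive safety as a direct combination of two ingredients already established in the paper: preservation of policy satisfaction under labelled transitions (\corref{sr}), and the fact that error systems cannot satisfy a well-formed policy (\lemref{error_non_satisfiable}). I would proceed by induction on the length $n$ of the transition sequence $S \Trans{\pol{\ell}} S'$. Because input actions may enlarge the type environment with the received value, as recorded in \thmref{sr}, it is convenient to strengthen the induction invariant so that, at each stage $k \leq n$, there exist $\Ga_k, \La_k$ and $\Int_k$ with $\Ga_k; \La_k \proves S_k \hastype \Int_k$ and $\Policy \sat \Int_k$.

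For the base case $n = 0$ I set $\Ga_0 = \Ga$, $\La_0 = \La$ and $\Int_0 = \Int$; the invariant is immediate from the hypothesis, and applying the contrapositive of \lemref{error_non_satisfiable} yields that $S$ itself is not an error with respect to $\Policy$. For the inductive step, write $S_n \trans{\ell} S_{n+1}$ and invoke \thmref{sr} on this single transition to produce $\Ga_{n+1}, \La_{n+1}, \Int_{n+1}$ with $\Ga_{n+1}; \La_{n+1} \proves S_{n+1} \hastype \Int_{n+1}$ and $\Int_{n+1} \lenv \Int_n$; \propref{respect_policy} then lifts $\Policy \sat \Int_n$ to $\Policy \sat \Int_{n+1}$, preserving the invariant. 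A final application of the contrapositive of \lemref{error_non_satisfiable} to $\Ga_{n+1}; \La_{n+1} \proves S_{n+1} \hastype \Int_{n+1}$ together with $\Policy \sat \Int_{n+1}$ delivers the desired conclusion that $S' = S_{n+1}$ is not an error with respect to $\Policy$.

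The main subtlety, although not technically deep, is the bookkeeping of environments under input transitions: the typing environment may grow by a pair $v : T$ and the linear/store environments may shift, exactly in the manner captured by the second clause of \thmref{sr}. I expect this to pose no real obstacle because \defref{error} is stated relative to an ambient type environment and its clauses only inspect type assignments already present in that environment; hence the environment $\Ga_{n+1}$ supplied by \thmref{sr} is precisely the one against which the contrapositive of \lemref{error_non_satisfiable} must be applied, and any minor environment extension that is needed to match the hypotheses of the lemma is a routine appeal to Weakening (\lemref{weaken}). The only genuinely non-trivial work in the proof sits in \lemref{error_non_satisfiable} itself, which has already been carried out.
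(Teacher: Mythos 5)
Your proposal is correct and follows essentially the same route as the paper, which obtains safety "immediately" from \corref{sr} (itself just \thmref{sr} plus \propref{respect_policy}) together with \lemref{non_satisf}; your explicit induction over the transition sequence, with the invariant allowing $\Ga_k$ to grow on input actions, merely spells out the bookkeeping that the paper leaves implicit in \corref{sr}. No gap here — if anything, your handling of the environment extension under inputs is slightly more careful than the paper's statement of the corollary.
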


\begin{proof}
	The proof is immediate by \corref{sr} and \lemref{non_satisf}.
\end{proof}

\section{Use Cases}
\subsection{Electronic Traffic Pricing}

\label{sec:etp}

Electronic Traffic Pricing (\ETP) is an electronic toll collection
scheme in which the fee to be paid by drivers depends on the
road usage of their vehicles where factors such as the type of roads
used and the times of the usage determine the toll. To
achieve this, for each vehicle detailed time and location
information must be collected and processed and the due
amount can be calculated with the help of a digital tariff and
a road map.  A number of  possible implementation methods may be considered
for this scheme~\cite{DJongeJ08}. In the centralised approach all location information
is communicated to the pricing authority which computes
the fee to be paid based on the received information. In the
decentralised approach the fee is computed locally on the car via
the use of a third trusted entity such as a smart card.
In the following subsections we consider these approaches and their
associated privacy characteristics.

\subsection{The centralised approach}
This approach makes use of on-board equipment (\OBE) which computes
regularly the geographical position of the car and forwards it
to the Pricing Authority (\PA).
To avoid drivers tampering with
their \OBE and communicating  false information, the authorities
may perform checks on the spot to confirm that the \OBE is reliable.

We may model
this system with the aid of five groups: $\ETP$ corresponds to
the entirety of the \ETP system,  $\Car$ refers to the car
and is divided into the $\OBE$ and the $\GPS$ subgroups,
and $\PA$ refers to the pricing authority. Note that
in our model we simplify the pricing scheme by omitting timing considerations.
This, however, can be easily extended by the introduction of additional
processes for storing timing information as well as the associated
actions for storing/communicating the timings of the observations,
similarly to the way locations are handled. Furthermore, according
to this scheme, the car is responsible for frequently (e.g. once a day) sending its collected
data log to the authority. We abstract away this behavior of the car,
by modeling the car sending to the authority $m$ random
locations. Finally, we point out that in our model we simplify the study to a single car. Note that
in order to reason about multiple cars we would need to define a distinct
group $\Car_v$ for each vehicle $v$.

  As far as types are concerned,
we assume the existence of two ground types:
$\lambda$ referring to locations and $\phi$ referring to
fees. Furthermore, we assume the private data types $\Loc$
and $\Fee$ and the constant type $\spotcheck$.
We write $T_l = \gtype{\Loc}{\lambda}$, $T_f = \gtype{\Fee}{\phi}$,
$T_c^l = \gtype{\Car}{T_l}$, $T_{etp}^l = \gtype{\ETP}{T_l}$, $T_{pa}^{l} =\gtype{\PA}{T_l}$,
$T_{pa}^{f} =\gtype{\PA}{T_f}$ and $T_{sc}=\gtype{\ETP}{\gtype{\ETP}{T_l}}$.

We define the system as follows:
\begin{eqnarray*}
	{\mathit System}	& = &
					\group{\ETP}{\;
						\newn{\mathit{spotcheck}}\newnp{\mathit{topa}}{\group{\Car}{C} \;\Spar\; \group{\PA}{A})}\;
					}
	\\
	C		& = &	{
						\newnp{r}{\store{r}{\id}{l}
						\;\;\Spar\;\;
						\group{\GPS}{L}
						\;\;\Spar \;\;
						\group{\OBE}{O}}
					}
	\\
	L		& = &	\repl\; \inp{lc}{newl}\out{r}{\pdata{\id}{newl}} \inact
	\\
	O		&=&		\repl\; \out{\mathit{topa}}{r} \inact \;			
			\Par\;	\repl\; \inp{\mathit{spotcheck}}{z} \out{z}{r} \inact 
	\\
    A		& = &	\newn{r_1} \ldots \newn{r_m} (\store{r_1}{\id}{x_1}\Par\ldots\Par\store{r_m}\id{x_m}\Par\store{f}\id{y}\Par R \Par S)
	\\
	R		& = &	\inp{\mathit{topa}}{z_1} \inp{z_1}{\pdata{x}{y}} \out{r_1}{\pdata{x}{y}} \ldots\\
& & \inp{\mathit{topa}}{z_m} \inp{z_m}{\pdata{x}{y}} \out{r_m}{\pdata{x}{y}} \out{f}{\pdata{\id}{\mathit{fee}}}\inact
	\\
	S		& = &	\newn{s}\out{spotcheck}{s} \inp{s}{z} \inp{z}{\pdata{x}{y}}
                    \ifelse{y}{\mathit{l_{sc}}}{\inact}{\out{f}{\pdata{\id}{\mathit{fine}}}}\inact
\end{eqnarray*}
In the model above we have system ${\mathit System}$ where a car process $C$ and
an authority process $A$ are nested within the $\ETP$ group. The two processes
share names $\mathit{topa}$ and $\mathit{spotcheck}$ via which
the car and the authority can communicate for delivering data  from
the car to the authority and for initiating spot checks, respectively.
In turn, a car $C$ is composed
of a store process, where the identifier of the car is associated
with the current location of the car, the component $O$ belonging to group $\OBE$,
and the component $L$ responsible for
computing the current location belonging to group $\GPS$. These three processes
are nested within the
$\Car$ group and behave as follows: The store is initiated with the identifier
of the car and its initial location ($l$) and may be accessed by processes $L$ and $O$.
Process $L$ computes the new locations of the car and saves them
to the store. The OBE $O$ may
read the current location of the car from the store
and forward it to the authority via channel $\mathit{topa}$ or it may
enquire this location
as a response of a spot check initiated by the pricing authority $A$.
In turn, authority $A$ is defined as follows: it contains $m$ stores
for saving $m$ consecutive locations of the car, a store for
saving the fee of the car, and two processes
$R$ and $S$, which are responsible for computing the fee of the car
and performing spot checks, respectively. Specifically, process $R$
receives $m$ consecutive locations of the car and saves them in
its stores before calculating the fee based on the received values.
This value, $\mathit{fee}$, is essentially obtained as a function
of the values of the locations stored in the $m$ stores of the process.
In turn, process $S$ performs a spot check on the car.
During a spot check, component $S$ creates a new channel
$s$ and sends it to the OBE
via which the OBE is expected to return the current
location for a verification check.  This
value is compared to the actual location of the spot check of the car, $\mathit{l_{sc}}$,
and, if the location communicated by the car is erroneous, then
the authority imposes a fine, $\mathit{fine}$.

By applying the rules of our type system we may show that $\Gamma;\es \proves {\mathit System} \hastype \Theta$,
where:
\begin{eqnarray*}
	\Gamma	&=&	\set{
				lc: \gtype{\ETP}{\lambda},
				f:T_{pa}^f,
				l: \lambda,
				l_{sc}:\gtype{\spotcheck}{\lambda},
				\mathit{fee}:\phi,
				\mathit{fine}:\phi}
	\\
	\Theta	& = &	\Fee	:	\hiert{\ETP}{\hiert{\PA}{\set{\str, \up}}}, 
	\\
			&&		\Loc	:	\hiert{\ETP}{\hiert{\Car}{\set{\str}}},
	\\
			&&		\Loc	:	\hiert{\ETP}{\hiert{\Car}{\hiert{\OBE}{\set{\diss{\ETP}{\infinite}}}}},\\
			&&		\Loc	:	\hiert{\ETP}{\hiert{\Car}{\hiert{\GPS}{\set{\up}}}},\\
			&&		\Loc	:	\hiert{\ETP}{\hiert{\PA}{\set{\rf,\str,\rd,\idp,\use{\spotcheck},\aggr}}}
\end{eqnarray*}
A possible privacy policy for the system might be one that states
that locations may be freely forwarded by the \OBE. We may
define this by $\Policy = \Loc \gg H$ where:
\begin{eqnarray*}
	H	&=&		\role{\ETP}{\ndiss{\sens}} {\\
		&&		\quad	\role{\Car}{\str}{\lrole{\OBE}{\rd,\idp, \diss{\ETP}{\infinite}}{}},\\
		&&		\quad	\lrole{\GPS}{\rd,\idp,\up}{},\\
		&&		\quad	\lrole{\PA}{\rf,\str,\rd,\idp,\use{\spotcheck},\aggr}\\
		&&		}
\end{eqnarray*}
%
%
\noindent
By comparing environment $\Theta$ with policy
$\Policy$, we may conclude that ${\mathit System}$ satisfies $\Pol$.

This architecture is simple but also very weak in protecting the
privacy of individuals: the fact that the \PA gets detailed travel information about
a vehicle constitutes a privacy and security threat.
%
In our system this privacy threat can be pinpointed to private data type $\Loc$ and the fact
that references to locations may be communicated to the PA for an unlimited
number of times. 
An alternative implementation that limits
the transmission of locations is presented in the second implementation proposal presented
below.

\subsection{The decentralised approach}
To avoid the disclosure of the complete travel logs of a car, the decentralised solution
employs a third trusted entity (e.g. smart card) to make computations of the fee locally
on the car and send its value to the authority which in turn may make spot checks
to obtain evidence on the correctness of the calculation.

The policy here would require that locations can be communicated for at most
a small fixed amount of times and that the OBE may read the fee computed
by the smart card but not change its value. The new privacy
policy might be
$\Policy = \Loc \gg H, \Fee\gg F$ where:
\begin{longtable}{rcl}
	$H$	&=&		\role{\ETP}{\ndiss{\sens}} {\\
		&&		\quad	\role{\Car}{\str}{\lrole{\OBE}{\diss{\ETP}{2}}{}},\\
		&&		\quad	\lrole{\GPS}{\up},\\
		&&		\quad	\lrole{\SC}{\rd, \idp,\use{\computefee}},\\
		&&		\quad	\lrole{\PA}{\rf,\rd,\idp,\use{\spotcheck}}\\
		&&		}
		\\
	$F$	&=&		\role{\ETP}{\ndiss{\sens}} {\\
		&&		\quad	\role{\Car}{}{\lrole{\OBE}{\diss{\ETP}{1}}{}},\\
		&&		\quad	\lrole{\GPS}{},\\
		&&		\quad	\lrole{\SC}{\str,\up, \diss{\Car}{\infinite}},\\
		&&		\quad	\lrole{\PA}{\rf,\rd,\idp}\\
		&&		}
\end{longtable}
%
%
To model the new system as described above
we assume a new group $\SC$ and a new component $S$ which defines the code of
a smart card, belonging to group \SC:
\begin{eqnarray*}
	{\mathit System'}	& = &	\group{\ETP}
					{\;
						\newn{\mathit{spotcheck}}\newn{\mathit{topa}}
						(\;
							\group{\Car} C
							\;\Spar \;
							\group{\PA}A
						\;)
					\;}
	\\
	C		& = &	{\;\newn{r}
					\;(\;
						\store{r}{\id}{l}
						\;\;\Spar\;\;
						\group{\GPS}{L}
						\;\;\Spar \;\;
						\group{\OBE}{O}
						\;\;\Spar\;\;
						\group{\SC}{S}
					\;)
					\;}
	\\
	L		& = &	\repl\; \inp{lc}{newl}\out{r}{\pdata{\id}{newl}} \inact
	\\
	O		& = &	\inp{\mathit{spotcheck}}{z} \out{z}{r} \inact 
			\\
			&\Par&	\inp{\mathit{send}}{f} \out{\mathit{sendpa}}{f} \inact
	\\
	S		& = &	\newnp{f}{
						\store{f}\id{v} \Par (\inp{r}{\pdata{x}{y}})^m 
						\out{f}{\pdata{\id}{\mathit{fee}}}\out{\mathit{send}}{\mathit{f}} \inact
					}
	\\
	A		& = &	\newnp{s}{
						 \out{spotcheck}{s}{\inp{s}{r}{\inp{r}{\pdata{x}{y}}\ifelse{y}{l_{sc}}{\inact}{\out{f}{\pdata{\id}{\mathit{fine}}}}}}\inact
					}
					\\
			&\Par&	\inp{\mathit{sendpa}}{\mathit{x}}\inp{x}{\pdata{x}{\mathit{y}}}\inact
\end{eqnarray*}

\noindent In this system, we point out that process $S$ has a local store for saving the
fee as this is computed after reading a number of location values from the
car store. We write $(\inp{r}{\pdata{x}{y}})^m P$ as a shorthand
for prefixing the $P$ process by $m$ occurrences of the input label ${r?}(\pdata{x}{y})$.
As in the centralised approach, this is implemented as an abstraction of the actual behaviour
of the system where, in fact, the complete log is communicated to the smart card for a
specific time period. This could be captured more precisely in a timed extension of our
framework. Moving on to the behaviour of $S$, once computing the fee, process $S$ is responsible for announcing
this fee to the \OBE process which is then responsible for disseminating the
fee to the Pricing Authority. When the \PA receives a reference to the
data it reads the fee assigned to the car.
As a final point, we observe that the \OBE process engages in exactly
one spot check.

We may verify that $\Gamma'; \es \proves {\mathit System'} \hastype \Theta'$,
where $\Gamma' = \Gamma\cup\{\mathit{send}:\gtype{\Car}{T_{pa}^f},
\mathit{sendpa}:\gtype{\ETP}{T_{pa}^f}\}$
and interface $\Theta'$ satisfies the enunciated policy.

\subsection{Privacy-preserving speed-limit enforcement}

\label{sec:speed_control}

Speed-limit enforcement is the action taken by
appropriately empowered authorities to check that road vehicles
are complying with the speed limit in force on roads and highways.
A variety of methods are employed to this effect including automated
roadside \emph{speed-camera} systems based on Doppler-radar based measurements,
radio-based transponder techniques, which require additional hardware
on vehicles, or section control systems~\cite{SR14speedlimit}.

In most of these techniques, the process of detecting whether
a vehicle has exceeded the speed limit is done centrally: data recorded
is stored at a central server and
processing of the data is implemented in order to detect
violations. However, this practice clearly
endangers the privacy of individuals and in fact goes against
the law since the processing of personal data is prohibited unless
there is evidence of a speed limit violation.

In our study below, we model an abstraction of the speed-camera scheme and an
associated privacy policy and we show by typing that the model
of the system satisfies the policy.
Beginning with the requirements for privacy-preservation,
the system should ensure the following:
\begin{enumerate}
	\item	Any data collected by the speed cameras must only be used for detecting
			speed violations and any further processing is prohibited.

	\item	Evidence data collected by speed cameras must not be stored permanently
			and must be destroyed immediately if no speed limit violation has been discovered.
			Storage beyond this point is only permitted in the case that a speed limit
			violation has been detected.

	\item	Evidence data relating to the driver's identity must not be revealed
			unless a speed-limit violation is detected.
\end{enumerate}
Specifically, in our model
we consider a system (group $\SpeedControl$)
composed of car entities (group $\Car$)  and the speed-control authority (group $\SCSystem$),
itself
composed of speed-camera entities (group $\trcamera$), the authority responsible
for processing the data (group $\Auth$) and the database of the system (group $\DBase$)
where the personal data of all vehicle owners are appropriately stored.

As far as types are concerned, we assume the existence of two ground types: $\Speed$ referring
to vehicle speed and $\RegNum$ referring to vehicle registration numbers and
three private data types $\CarReg$, $\CarSpeed$ and $\DriverReg$.
Precisely, we consider $\CarReg$ to be the type of private
data pertaining to a car's registration numbers as they exist on vehicles, $\CarSpeed$ to be the
type of private data of the speed of vehicles and $\DriverReg$ to
be the type of private data associating a driver's identity with the
registration numbers of their car as they might exist on a system's
database. Finally, we assume the constant
type $\limit$: constants of type $\limit$ may be compared against other data for
the purpose of checking a speed limit violation.

The system should conform to the following policy:
\begin{longtable}{rcl}
	\CarReg & \haspolicy &\role{\SpeedControl}{\ndiss{\sens}} {\\
		&&	\quad	\lrole{\Car}{\str, \aggr, \diss{\SpeedControl}{\infinite}},\\
		&&	\quad	\role{\SCSystem}{}{\\
		&&	\qquad	\lrole{\trcamera}{\rf, \diss{\SCSystem}{\infinite}},\\
		&&	\qquad	\lrole{\Auth}{\rf, \rd, \aggr,\idf{\DriverReg}},\\
		&&	\qquad	\lrole{\DBase}{}\\ 
		&&	\quad }\\
		&&	}
	\\
	\CarSpeed & \haspolicy &\role{\SpeedControl}{\ndiss{\sens}} {\\
		&&	\quad	\lrole{\Car}{\up, \str, \aggr, \diss{\SpeedControl}{\infinite}},\\
		&&	\quad	\role{\SCSystem}{}{\\
		&&	\qquad	\lrole{\trcamera}{\rf, \diss{\SCSystem}{\infinite}},\\
		&&	\qquad	\lrole{\Auth}{\rf, \rd, \aggr, \use{\limit}, \str},\\
		&&	\qquad	\lrole{\DBase}{}\\ 
		&&	\quad}\\
		&&	}
	\\
	\DriverReg & \haspolicy &\role{\SpeedControl}{\ndiss{\sens}}{\\
		&&	\quad	\lrole{\Car}{}\\
		&&	\quad	\role{\System}{}{\\
		&&	\qquad	\lrole{\trcamera}{},\\
		&&	\qquad	\lrole{\Auth}{\rf, \rd, \idp},\\
		&&	\qquad	\lrole{\DBase}{\rf, \diss{\System}{\infinite}, \str}\\
		&&	\quad}\\
		&&	}
\end{longtable}
%
%
According to the policy, data of type \CarReg, corresponding to
the registration number plate of a car, exist (are stored)  in a car in public sight
and thus can be disseminated within the speed-control system by the car
for an unlimited number of times. A traffic camera, \trcamera, may thus
gain access to a reference of this data (by taking a photo) and disseminate such
a \CarReg reference
inside the speed control \SCSystem.
The authority \Auth may then receive this reference/photo and by various
means read the actual $\CarReg$ data, which however remains anonymised
unless the authority performs a check for the identification of the owner
against other $\CarReg$ data.
A database \DBase does not have any permissions on data of type \CarReg.

Data of type \CarSpeed correspond to the speed
of a \Car and can be stored in a \Car and updated during computation. Via a
speed radar, \CarSpeed can be disseminated.
A \trcamera may thus read the speed of a car and disseminate it (along
with any other accompanying information such as a photograph of the car) inside
the speed control \SCSystem.
An authority \Auth may read the anonymised
\Car speed and use it for the purpose of
checking speed violation.
A database \DBase does not have any permissions on
data of type \CarSpeed.

Finally, a \DriverReg corresponds to the association
of a registration number and a driver inside a
database. \Car and \trcamera have no permissions
on such data. An authority \Auth may read the data
\DriverReg with their identity. A database
system \DBase stores these data and disseminates
them inside the speed control system.

We model an abstraction of the speed-camera scheme as follows.
 Note that for the purposes of the example we have enriched
our calculus with operator $<$. This extension can
easily be incorporated into our framework.
\begin{eqnarray*}
	{\mathit System} & = &
							\group{\SpeedControl}{
								\;\group{\Car}{C}
								\;\Spar \;
								\group{\SCSystem}{
									\group{\trcamera}{SC} \;\Spar\; \group{\Auth}{A}
								\;\Spar \;
								\group{\DBase}{D}\;}
							}
	\\
     C				& = &
							\newn{r} \newnp{s} {
								\store{r}{\id}{\mathit{reg}} \Par \store{s}{\id}{\mathit{speed}}
								\\
					&\Par&		\repl \inp{cs}{y}\out{s}{\pdata{\id}{y}} \inact\\
					&\Par&		\repl \out{p}{r, s} \inact
							}
	\\
	SC				& = &	\repl \inp{p}{x, y} \out{a}{x, y} \inact
	\\
	A				& = &	\repl \inp{a}{k_1, k_2} \inp{k_2}{\pdata{\hid}{z}} \IfElse{(z > \slimit)}{V}{\inact}
	\\
	V				& = &	\inp{k_1}{\pdata{\hid}{y}}
					\\
					&&			(\inp{r_1}{\pdata{x}{w}} \ifelse{w}{y}{\newnp{r}{\store{r}{x}{z}}}{\inact}
					\\
					&\Par&		\dots
					\\
					&\Par&		\inp{r_n}{\pdata{x}{w}} \ifelse{w}{y}{\newnp{r}{\store{r}{x}{z}}}{\inact})
	\\
%
%
	D				& = &	\store{r_1}{\id_1}{\reg_1} \Par \ldots \Par \store{r_n}{\id_n}{\reg_n}
\end{eqnarray*}

\noindent In system $\mathit{System}$ we have a car process $C$, an
authority process $A$, a speed camera $SC$, and a database $D$
nested within the $\SpeedControl$ group and sharing:
\begin{itemize}
	\item	name $p$
			between the car and the speed camera via which a photo
			of the car (registration number) and its speed are collected
			by the speed camera,

	\item	name $a$ via which this information
			is communicated from the speed camera to the speed-enforcement
			authority, and

	\item	references $r_1, \ldots, r_n$, via which the authority
			queries the database to extract the identity of the driver
            corresponding to the registration number of the car that
			has exceeded the speed limit.


\end{itemize}
According to the definition, a car $C$ possesses two stores containing
its registration number and its speed, with the speed of the
car changing dynamically, as modelled by input on channel $cs$.
These two pieces of information can be obtained by the speed
camera via channel $p$. On receipt of this information, the speed
camera forwards the information to the authority $A$.
In turn, authority $A$ may receive such information from the speed
camera and process it as follows: It begins by accessing the
speed of the car. Note that the identity of the car driver is
hidden. It then proceeds to check whether this speed is above
the speed limit. This is achieved by comparing the vehicle
speed with value $\slimit$ which captures unacceptable speed
values. In case a violation is detected, the authority proceeds
according to process $V$, and the authority communicates with
the database in order to find out the identity
of the driver associated with the speed limit violation.
The identification is performed by matching the anonymised registration
number of the vehicle against the records that are received by the database.
Finally, the database, stores all information about the registration number of
all the drivers.


Let us write $T_{cr} = \privatet{\CarReg}{\RegNum}$,
$T_{sp} = \privatet{\CarSpeed}{\Speed}$, $T_{dr} = \privatet{\DriverReg}{\RegNum}$,
$T_{cr}^{S} =\gtype{\SpeedControl}{T_{cr}}$, $T_{sp}^{S} =\gtype{\SpeedControl}{T_{sp}}$,
$T_{dr}^{S} = \gtype{\SCSystem}{T_{dr}}$.
By applying the rules of our type system we may show that $\Gamma;\es \proves \System \hastype \Theta$,
where:
\small
\begin{eqnarray*}
	\Gamma	&=&	\set{
				\slimit	:	\purposet{\limit}{\Speed},
			\\
             && p,a:\gtype{\SpeedControl}{T_{cr},T_{sp}},
				r_1,\ldots,r_n:T_{dr}^{S}}
	\\
	\Theta	& = &	\CarReg	: \hiert{\SpeedControl}{\hiert{\Car}{\set{\str,\aggr,\diss{\SpeedControl}{\infinite}}}}, \\
			&&		\CarReg	: \hiert{\SpeedControl}{\hiert{\SCSystem}{\hiert{\trcamera}{\set{\rf,\diss{\SpeedControl}{\infinite}}}}},\\
			&&		\CarReg	: \hiert{\SpeedControl}{\hiert{\SCSystem}{\hiert{\Auth}{\set{\rf, \rd, \aggr,\idf{\DriverReg}}}}},\\
			&&		\CarReg	: \hiert{\SpeedControl}{\hiert{\SCSystem}{\hiert{\DBase}{\set{}}}}\\
            &&  	\Speed	: \hiert{\SpeedControl}{\hiert{\Car}{\set{\up,\str,\aggr,\diss{\SpeedControl}{\infinite}}}}, \\
			&&		\Speed	: \hiert{\SpeedControl}{\hiert{\SCSystem}{\hiert{\trcamera}{\set{\rf,\diss{\SpeedControl}{\infinite}}}}},\\
			&&		\Speed	: \hiert{\SpeedControl}{\hiert{\SCSystem}{\hiert{\Auth}{\set{\rf, \rd, \aggr,\use{\limit}}}}},\\
			&&		\Speed	: \hiert{\SpeedControl}{\hiert{\SCSystem}{\hiert{\DBase}{\set{}}}}\\
            &&  	\DriverReg	: \hiert{\SpeedControl}{\hiert{\Car}{}}, \\
			&&		\DriverReg	: \hiert{\SpeedControl}{\hiert{\SCSystem}{\hiert{\trcamera}{}}},\\
			&&		\DriverReg	: \hiert{\SpeedControl}{\hiert{\SCSystem}{\hiert{\Auth}{\set{\rf, \rd, \idp}}}},\\
			&&		\DriverReg	: \hiert{\SpeedControl}{\hiert{\SCSystem}{\hiert{\DBase}{\set{\str, \diss{\SCSystem}{\infinite}}}}}
\end{eqnarray*}
\normalsize
We may prove that $\Theta$ is compatible with the enunciated policy; therefore, the policy is satisfied by the system.

\section{Related work} \label{sec:related}

There exists a large body of literature concerned with reasoning about privacy.
To begin with, a number of languages have been proposed to express privacy
policies~\cite{P3P,EPAL,MayGL06,GJD11,LiuMX07,NiBL08,ChowdhuryGNRBDJW13}. Some of these languages
are associated with formal semantics and can be used to verify the consistency of
policies or to check whether a system complies with a certain policy. These
verifications may be performed \emph{a priori} via static techniques such as model
checking~\cite{LiuMX07,KoleiniRR13}, on-the-fly using monitoring,
e.g.~\cite{BasinKM10,SokolskySLK06}, or \emph{a posteriori}, e.g. through audit
procedures~\cite{DattaBCDGJKS11,BarthDMN06,DGJKD10}.

Among these studies
we mention work on the notion of Contextual Integrity~\cite{BarthDMN06},
which constitutes a philosophical account of privacy in terms of the transfer of
personal information. Aspects of this framework have been formalised in a logical framework
and were used for specifying privacy regulations
like HIPAA while notions of compliance of policies by systems were considered.
Close to our work is also
the work of Ni \emph{et al}.~\cite{NiBLBKKT10} where a family of models named P-RBAC
(Privacy-aware Role Based Access Control) is presented that extends the
traditional role-based access control to support specification of
complex privacy policies. This model is based on the concepts of roles,
permissions and data types, similar to ours, but it may additionally
specify conditions, purposes and obligations. The
methodology is mostly geared towards expressing policies and checking for conflicts
within policies as opposed to assessing the satisfaction of policies by systems,
which is the goal of our work. Finally, we point out that the notion of
policy hierarchies is closely related to the concept of hierarchical P-RBAC
of~\cite{NiBLBKKT10}. Hierarchical P-RBAC introduces, amongst others, the
notion of role hierarchies often present in extensions to
RBAC. Role hierarchies are a natural means
for structuring roles to reflect an organisation's lines of authority
and responsibility. Nonetheless, all these works focus on privacy-aware
\emph{access control}. Our work extends these approaches by
considering \emph{privacy} as a general notion and addressing a wider
set of  privacy violations such
as identification and aggregation.

Also related to our work is the research line on type-based security in
process calculi. Among these works, numerous studies have focused on access
control which
is closely related to privacy. For instance the work on the D$\pi$ calculus
has introduced sophisticated type systems for controlling the access to resources
advertised at different locations~\cite{HennessyRY05,HennessyR02,HennessyBook}. Furthermore,
discretionary access control has been considered in~\cite{BugliesiCCM09} which
similarly to our work employs the $\pi$-calculus with groups, while role-based
access control (RBAC) has been considered
in~\cite{BraghinGS06,Dezani-CiancagliniGJP10,CompagnoniGB08}. In addition,
authorization policies and their analysis via type checking has been considered
in a number of papers including~\cite{FournetGM07,BackesHM08,BengtsonBFGM11}.
Our work
is similar in spirit to these works: all approaches consider some type of
policy/schema and a type system that ensures that systems comply to their
policy. Futhermore, we mention that a type system for checking differential
privacy for security protocols was developed in~\cite{EignerM13} for enforcing
quantitative privacy properties.

These works, however, differ from the work presented in this paper. To begin
with, our approach departs from these works in that our study
introduces the concepts of an attribute, hierarchies
of disclosure zones and by basing the methodology around the problem of
policy satisfaction. Furthermore, our goal is to provide
foundations for a more general treatment of privacy which departs from
access-control requirements. Inspired by Solove's taxonomy,  we propose
a framework for reasoning about identification, data aggregation and secondary use.
To the best of our knowledge, our work is the first formal study of these notions.


The Privacy calculus together with the proposed type system
can be used as a basis for future  tool and type checker
implementation. In this arena we can find a number of type checkers
inspired by process calculi. In~\cite{Cremet2006} a featherweight
Scala type-checking model is proposed. The model is able to capture
basic Scala constructs such as nested classes, abstract types, mixin
composition, and path dependent types. In correspondence with the
Privacy calculus we can benefit from nested classes to represent
a Policy/System hierarchy.
A second work that implements a distributed cryptographic asynchronous
$\pi$-calculus can be found in~\cite{so63179} which is designed to
enforce notions of secrecy. The implementation uses an intermediate
language that implements a typed distributed cryptographic $\pi$-calculus.
In our setting the notion of secrecy is related with the existence of groups.
Recently, there has been an effort by the behavioural types community
to implement $\pi$-calculus inspired type-checking
algorithms~\cite{DBLP:conf/fase/HuY16,DBLP:conf/ppdp/KouzapasDPG16,Demangeon2015,SILL:Tutorial}
as a variety of tools operating in different platforms. This line of work
provides evidence that $\pi$-calculus type systems are volatile enough
to provide a basis for mainstream programming.

To conclude, we mention our previous work of~\cite{KouzapasP13,KP15,KokkinoftaP15}. In these works
we again employed the $\pi$-calculus with
groups~\cite{Cardelli:secrecy} accompanied by a type system for capturing privacy-related notions.
The type system of~\cite{KouzapasP13} was based on i/o and linear types for reasoning about
information processing and dissemination and a two-level type structure for distinguishing
between the permissions associated with a name upon receipt and upon delivery by a process.
In~\cite{KP15}, the type system was reconstructed and simplified:  the notion
of groups was employed to distinguish between the different entities of a system and we employ the
type system in order to type-check a system against the standard types of the
$\pi$-calculus with groups while performing type inference to associate permissions
with the different components of a system.
Furthermore, a
safety criterion was proved for the framework thus providing the necessary machinery for proving
privacy preservation by typing. Finally, in~\cite{KokkinoftaP15} we
extended the results of \cite{KP15} to encompass the notion of a purpose.
In the present paper, we extend these works  providing
a more thorough treatment of privacy violations, including data identification, aggregation
and secondary use. To achieve this, it was necessary to extend our policy
language to include a wider range of privacy requirements and also to enrich both the
underlying calculus as well as the associated type system
in order to capture privacy-related concepts in a more satisfactory manner. Furthermore,
the current paper contains the complete exposition of the methodology, including
the proofs of all results.

 \section{Conclusions}
In this paper we have presented a formal
framework based on the $\pi$-calculus with groups for studying privacy.
Our framework is accompanied by
a type system for capturing privacy-related notions and a privacy
language for expressing privacy policies. We have
proved a subject reduction
and a safety theorem for our framework where the latter states that if a system
$Sys$ type checks
against a typing $\Gamma$ and produces a permission interface $\Theta$ which satisfies a policy $\Policy$, then $Sys$
complies to  $\Policy$. Consequently, whenever a system type checks against a typing that is
compatible with a privacy policy we may conclude that the system satisfies the policy.


The approach we have proposed is to a large extent an orthogonal
extension of the selected framework, the $\pi$-calculus
with groups. Modelling a system and constructing its related types
is developed by using standard process-calculus techniques without
the need of considering privacy matters, other than the scoping of
groups. Then, the actual treatment of privacy within our framework
takes place at the level of privacy policies against which a system
should comply. The policy language we have proposed is a simple language
that constructs a hierarchical structure of the entities composing
a system and assigning permissions for accessing sensitive data to
each of the entities while allowing to reason about possible
privacy violations.  These permissions are certainly not
intended to capture every possible privacy issue, but
rather to demonstrate a method of how one might formalise
privacy rights. Identifying an appropriate and complete
set of permissions for providing foundations for the notion
of privacy in the general context should be the result of intensive
and probably interdisciplinary research that justifies
each choice.

To this effect,  Solove's taxonomy of privacy
violations
forms a promising context in which these efforts can be based and
it provides various directions for future work.
One possible extension would be to add semantics both at the
level of our metatheory as well as our policy language to
capture \emph{information-processing}-related privacy violations such
as \emph{distortion} and \emph{insecurity} violations.
Distortion allows for relating false information to
a data subject and insecurity violations take place
when some adversary steals an identity and poses
as the data subject. 

As a long-term goal, a possible application of such work
would be to implement type checkers for statically ensuring that
programs do not violate privacy policies.
For such an effort to have merit various aspects need to been
taken into account. To begin with, the machinery employed needs
to be carefully designed, in co-operation with legal
scholars and consultants, in order to guarantee the appropriateness
and completeness of the methodology with respect to
actual violations in the real world. Furthermore, one should
address the question whether privacy-policy type checking
would result in programs that would not create any legal
implications both for the owners and the users
of the program.
Last but not least,
the fact that privacy itself is subjective introduces the notion
of  consensus among communities from different disciplines
on definition(s) of privacy policies and
policy compliance.

Other possible directions for extending  our work
can be inspired by existing work on privacy within e.g. contextual
integrity and privacy-aware RBAC as well as $k$-anonymity. For instance,
we are currently extending our
work in order to reason about more complex privacy policies that
include \emph{conditional} permissions and the concept of an
\emph{obligation}. Finally, it would be interesting to explore
more dynamic settings where the roles held by an agent may evolve over
time.

	\bibliographystyle{plain}
	\bibliography{bibliography}

	\appendix

\section{Encoding}
\label{app:enc}

For our calculus we propose the syntax and the semantics of a
class of higher-level processes, notably the store process
\store{r}{\id}{c}.
Nevertheless, the syntax and the interaction of the store
process can be fully encoded in terms of the standard
$\pi$-calculus extended with the \emph{selection
and branch} syntax and semantics.

\begin{defi}[Encoding to the $\pi$-calculus]
	\figref{enc} defines the encoding
	from the \Pcalc
	to the $\pi$-calculus.
	The rest of the operators are defined homomorphically.
\end{defi}

\begin{figure}[h]
	\begin{eqnarray*}
		\map{\store{r}{\id}{c}} &\eqdef&
		\newnp{a}{\out{a}{ \pdata{\id}{c} } \inact \Par \inp{a}{ \pdata{x}{y} } \inp{r}{l} \\
		&& \qquad \qquad l \triangleright \left\{
				\begin{array}{rcl}
					\rdl &:& \out{l}{\pdata{x}{y}} \out{a}{\pdata{x}{y}} \inact,\\
					\wrl &:& \inp{l}{\pdata{w}{z}} \\
					&&\If\ \match{w}{\id}\ \Then\ \sel{l}{\ok} \out{a}{\pdata{w}{z}} \inact \\
					&&\Else\ \sel{l}{\fail} \out{a}{\pdata{x}{y}} \inact
				\end{array}
			\right\}
		}
		\\
		\\
		\map{\inp{u}{k} P} &\eqdef& \newnp{a}{{\out{u}{a} \sel{a}{\rdl} \inp{a}{k} \map{P}}} \qquad \quad k \not=x
		\\
		\\
		\map{\out{u}{\pdata{\ii}{\con}}} P &\eqdef& \newnp{a}{\newnp{b}{\out{b}{\pdata{\ii}{\con}} \inact \Par \repl \inp{b}{k} \newnp{e}{\out{u}{e} \sel{e}{\wrl} \out{e}{k}\\
		&& \bra{e}{\ok: \eout{a} \inact, \fail: \out{b}{k} \inact} }} \Par \einp{a} \map{P}}
	\end{eqnarray*}
	\caption{Encoding of the store process from the \Pcalc into the standard $\pi$-calculus}
	\label{fig:enc}
\end{figure}

\figref{enc} presents the encoding of the store process syntax and semantics.
A store process is represented as a recursive process that receives a name
$y$ and subsequently offers the choices of read (\rdl) and write (\wrl) on $y$.
A client of the store process will make a selection between the choices for
read and write.
In the case of $\rdl$, the store simply sends the private data to the client and
continues by recursion to its original state.
In the case of $\wrl$, the store will receive data from the
client and store them, given that the data have the correct \id. If the data
do not have the correct identity the interaction does not take place and the store
continues by recursion to its previous state.

An input on a reference channel $r$ is encoded with the creation of a new name $a$
that is subsequently being sent via
$r$ to the store process. It then sends the $\rdl$ label on channel $a$
and receives from the store the private data value.

An output on a reference channel $r$ is also encoded with the creation of a new name
$a$ that is subsequently sent via channel $r$ to the store process. In contrast
to the input encoding it will send the $\wrl$ label on channel $a$ and then send
the private data to the store. The store will then either reply with a success via label
\ok whereby the process will continue with the continuation \map{P},
or it will reply with \fail whereby the process will use recursion to
continue to its starting state.

The next theorem shows that the encoding enjoys sound and complete operational correspondence.

\begin{thm}[Operational Correspondence]
	Let $P$ be a process constructed on the process terms of the $\pi$-calculus without the store and
	extended with the selection/branch construct.

	\begin{enumerate}[label=(\roman*)]
		\item	If $P \trans{} P'$ then $\map{P} \trans{} \map{P'}$.
		\item	If $\map{P} \trans{} Q$ then either $Q \Trans{} P$, or there exists $P'$ such that $P \trans{} P'$ and $Q \trans{} \map{P'}$.
	\end{enumerate}
\end{thm}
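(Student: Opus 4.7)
The approach is induction, leveraging the restriction that $P$ is a pure $\pi$-calculus term with selection and branching but without store processes. Inspecting the encoding in \figref{enc}, the specialised clauses for $\store{r}{\id}{c}$, for inputs $\inp{u}{k} P$ with $k \neq x$, and for outputs of private data $\out{u}{\pdata{\ii}{\con}} P$ do not apply to any subterm of such a $P$. Hence $\map{\cdot}$ acts homomorphically on $P$ and on every one of its $\tau$-derivatives. This observation is the engine of both directions, reducing the claim essentially to the fact that the encoding commutes with the reduction semantics on the unextended $\pi$-calculus constructs plus select/branch.

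For part (i), I would induct on the derivation of $P \trans{} P'$. Each rule of the labelled transition semantics (in its $\tau$ instance, combining an output and a matching input) concerns a top-level constructor of $P$: communication, parallel composition, restriction, replication, conditional, and the selection/branch synchronisation. In every case, because $\map{\cdot}$ commutes with the constructor and with substitution (the latter via a routine auxiliary lemma $\map{P\subst{v}{x}} = \map{P}\subst{v}{x}$, valid whenever the specialised clauses never fire), the encoded term $\map{P}$ admits a matching $\tau$-step to a term syntactically equal to $\map{P'}$. The communication case is representative: since $P$'s inputs are of the form $\inp{u}{x} Q$ with $x$ a variable, the encoded form is $\inp{u}{x} \map{Q}$, and the standard $\pi$-calculus synchronisation yields $\map{Q}\subst{v}{x} = \map{Q\subst{v}{x}}$.

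For part (ii), I would induct on the derivation of $\map{P} \trans{} Q$, with case analysis on the top-level shape of $P$ that enables the reduction. Since $\map{P}$ shares its syntactic skeleton with $P$ by the homomorphism argument, any $\tau$-interaction in $\map{P}$ must stem from two encoded subterms whose $P$-originals also synchronise. By the induction hypothesis this interaction lifts to a reduction $P \trans{} P'$ with $Q$ coinciding with $\map{P'}$, so the second disjunct of the conclusion holds with a zero-length $\Trans{}$ from $Q$ to $\map{P'}$. The first disjunct $Q \Trans{} P$ is a slack built in to absorb possible structural adjustments introduced by $\equiv_\alpha$ and structural congruence. The principal technical obstacle will be stating and proving the commutation-with-substitution lemma in a form strong enough to handle the full set of constructors in $P$; once that is in place, the rest of the argument is a routine induction exploiting the homomorphism on the restricted fragment.
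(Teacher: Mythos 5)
Your proof rests on a misreading of the statement that empties the theorem of its content. You take $P$ to be a store-free term, conclude that none of the three specialised clauses of the encoding in Figure~9 ever fires, and hence that $\map{\cdot}$ is a homomorphism on $P$ and all of its derivatives --- at which point both directions are essentially trivial. But the entire point of the appendix, and of this theorem, is to establish that the \emph{store} encoding is operationally correct: the phrase ``without the store and extended with the selection/branch construct'' qualifies the \emph{target} calculus of the encoding, not the source. The paper's proof makes this unambiguous: the ``interesting base cases'' it treats are exactly $\inp{r}{\pdata{x}{y}} P \Par \store{r}{\id}{c} \trans{} P\subst{\pdata{\id}{c}}{\pdata{x}{y}} \Par \store{r}{\id}{c}$ and $\out{r}{\pdata{\id}{c'}} P \Par \store{r}{\id}{c} \trans{} P \Par \store{r}{\id}{c'}$, i.e.\ reference reads and writes against a store process, whose encodings unfold into a multi-step protocol (fresh-name exchange, label selection, value transfer, \ok/\fail branching). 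Under your reading these cases are excluded and the theorem asserts nothing worth proving.

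The misreading also leads you to a wrong account of the disjunct $Q \Trans{} P$ in part (ii). You describe it as slack built in to absorb structural adjustments introduced by $\equiv_\alpha$ and structural congruence; it is not. It is needed because the encoded store protocol performs internal administrative reductions that correspond to \emph{no} source-level step, most importantly the failed write $\map{\out{r}{\pdata{\id'}{c'}} P \Par \store{r}{\id}{c}}$ with $\id' \neq \id$, where the target term commits to several $\tau$-steps and then rolls back, via the \fail branch, to (the encoding of) the original process. That rollback is precisely the content of the first disjunct, and a proof in which that disjunct is never exercised cannot be a proof of this theorem. To repair the argument you would need to reinstate stores (and reference inputs/outputs) in $P$, analyse the unfolding of each encoded read/write protocol step by step, and show that every intermediate state either completes to $\map{P'}$ for an actual source reduction $P \trans{} P'$ or returns to the encoding of the original process.
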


\begin{proof}
	The proof for part (i) is done by induction on the structure of transition
	\trans{}. There are two interesting base cases:

	\begin{itemize}
		\item	Case: $\inp{r}{\pdata{x}{y}} P \Par \store{r}{\id}{c} \trans{} P \subst{\pdata{\id}{c}}{\pdata{x}{y}} \Par \store{r}{\id}{c}$
		\item	Case: $\out{r}{\pdata{\id}{c'}} P \Par \store{r}{\id}{c} \trans{} P \Par \store{r}{\id}{c'}$
	\end{itemize}

	The requirements of part (i) can be easily verified following simple transitions.

	The proof for part (ii) is done by induction on the cases where \map{P} \trans{} Q.
	The interesting cases are the base cases:

	\begin{itemize}
		\item	$\map{\inp{r}{\pdata{x}{y}} P \Par \store{r}{\id}{c}} \trans{} Q$

				We can verify that $Q \Trans{} \map{P \subst{\pdata{\id}{c}}{\pdata{x}{y}} \Par \store{r}{\id}{c}}$
				with simple transitions.

		\item	$\map{\out{r}{\pdata{\id}{c'}} P \Par \store{r}{\id}{c}} \trans{} Q$

				We can verify that $Q \Trans{} \map{\store{r}{\id}{c} \trans{} P \Par \store{r}{\id}{c'}}$
				with simple transitions.

		\item	$\map{\out{r}{\pdata{\id'}{c'}} P \Par \store{r}{\id}{c}} \trans{} Q$ with $\id' \not= \id$.

				We can verify that $Q \Trans{} \map{\out{r}{\pdata{\id'}{c'}} P \Par \store{r}{\id}{c}}$
				with simple transitions.\qedhere
	\end{itemize}
\end{proof}

\end{document}